\documentclass[a4paper,twoside,reqno]{amsart}
\usepackage{amsmath}
\usepackage{amsfonts}
\usepackage{amssymb}
\usepackage{amsthm}
\usepackage{newlfont}
\usepackage{graphicx}
\usepackage{amscd}
\usepackage{young}

\textwidth 6.25in
\textheight 9in
\topmargin -0.3cm
\leftmargin -3cm
\oddsidemargin=0cm
\evensidemargin=0cm
\hfuzz5pt 
\theoremstyle{plain}
\newtheorem{Th}{Theorem}[section]
\newtheorem{Cor}[Th]{Corollary}

\newtheorem{Prop}[Th]{Proposition}
\theoremstyle{definition}
\newtheorem{Def}{Definition}[section]
\newtheorem{Ex}{Example}[section]
\newtheorem{Com}{Comment}[section]
\theoremstyle{remark}
\newtheorem*{Rem}{Remark}
\numberwithin{equation}{section}

\newcommand{\NN}{{\mathbb N}}

\newcommand{\RR}{{\mathbb R}}


\setcounter{MaxMatrixCols}{15}


\newcommand{\be}{\boldsymbol{e}}

\newcommand{\bs}{\boldsymbol{s}}


\begin{document}
	
	\title[Multiple orthogonal polynomials and integrable equations]
	{Determinantal approach to multiple orthogonal polynomials, and the corresponding integrable equations}
	
	\author{Adam Doliwa}
	
	\address{Faculty of Mathematics and Computer Science\\
		University of Warmia and Mazury in Olsztyn\\
		ul.~S{\l}oneczna~54\\ 10-710~Olsztyn\\ Poland} 
	\email{doliwa@matman.uwm.edu.pl}
	\urladdr{http://wmii.uwm.edu.pl/~doliwa/}
	
	%
	\keywords{multiple orthogonal polynomials, discrete integrable systems, Toda equations}
	\subjclass[2010]{42C05, 39A60, 41A21, 15A15, 37N30, 37K60, 65Q30}
	
	\begin{abstract}
We study multiple orthogonal polynomials exploiting their explicit determinantal representation in terms of moments.  Our reasoning follows that applied to solve the Hermite--Pad\'{e} approximation and interpolation problems. We study also families of multiple orthogonal polynomials obtained by variation of the measures known from the theory of discrete-time Toda lattice equations. We present determinantal proofs of certain fundamental results of the theory, obtained earlier by other authors in a different setting. We derive also quadratic identities satisfied by the polynomials, which are new elements of the theory. Resulting equations allow to present multiple orthogonal polynomials within the theory of integrable systems.
	\end{abstract}
	\maketitle
	
	\section{Introduction}
Multiple orthogonal polynomials \cite{Aptekarev,NikishinSorokin,MF-VA} are a generalization of orthogonal polynomials in which the orthogonality is distributed among a number of orthogonality
weights. The standard orthogonal polynomials~\cite{Chihara,Nevai} play a distinguished role in theoretical physics and applied mathematics. They can be encountered when solving, by separation of variables, partial differential equations of the classical field theory or the quantum mechanics~\cite{Nikiforov-Suslov-Uvarov}.
Another source of their applications are problems of pro\-ba\-bi\-li\-ty theory in the context of random walks \cite{LedermannReuter,KarlinMcGregor,Schoutens,Iglesia}; see also \cite{CMGV-2,DoliwaSiemaszko-QW} for generalization to quantum walks. By relation to continued fractions the orthogonal polynomials appear in number theory and combinatorics~\cite{Flajolet,Viennot}.
They are important in the spectral theory of operators in Hilbert space~\cite{Akhiezer,Szego,Geronimus}, and play prominent role in the modern theory of representations of groups and algebras~\cite{Vilenkin-Klimyk}, including the quantum groups~\cite{Klimyk-Schmudgen}. More recently, orthogonal polynomials found application in theory of random matrices~\cite{AdlervanMoerbekeVanhaecke,Deift} and were used to construct special solutions to Painlev\'{e} equations~\cite{Clarkson,VanAssche}. The structural affinity to Pad\'{e} approximants~\cite{Brezinski,Baker,Gragg} and relation to the Toda lattice equations~\cite{Toda-TL} broaden the theory and applications of orthogonal polynomials. Especially that last point allows to consider orthogonal polynomials as a part of the theory of integrable systems~\cite{AdlervanMoerbeke,IDS} and use its powerful techniques, like for example the Riemann-Hilbert problem~\cite{AdlervanMoerbeke,Deift}.
It is therefore plausible to expect that multiple orthogonal polynomials will enlarge the range of applications of orthogonal polynomials to more advanced problems. Indeed, the progress that has been made in recent years both in understanding their theory and in finding interesting applications seems to confirm this opinion. 

The development of the theory o multiple orthogonal polynomials in XXth century is summarized in~\cite{NikishinSorokin,Aptekarev}. Properties of multiple orthogonal polynomials for classical weights were described in~\cite{VanAsscheCoussement,AptekarevBranquinhoVanAssche}. Application to random matrices with external source and non-intersecting path ensembles was the subject of \cite{BleherKuijlaars,DaemsKuilaars,Kuijlaars}. Also some structural results generalizing those of orthogonal polynomials can be found in~\cite{CoussementVanAasche,Ismail,VanAsche-nn-mop,BranquinhoMorenoManas}. Among them the relation~\cite{Alvarez-FernandezPrietoManas,AptekarevDerevyaginMikiVanAssche,FernandezManas} to integrable equations is of particular interest.

One of possible paths to understand integrability is that initiated by Hirota, and developed by Sato and his school \cite{Sato,DKJM}. On the elementary level it is based on determinantal identities~\cite{Hirota-book} satisfied by the so called $\tau$-function; see \cite{Harnad} for modern presentation of its advanced aspects. It is commonly accepted that one of the most important integrable discrete systems \cite{IDS} both from theoretical point of view and of the application is Hirota's discrete Kadomtsev--Petviashvili equation~\cite{Hirota}. It contains, as symmetry reductions or appropriate limits, many known integrable equations, notably the whole  Kadomtsev--Petviashvili hierarchy~\cite{DKJM}. See also \cite{Miwa,Shiota,Zabrodin,KNS-rev,Nimmo-NCKP, BialeckiDoliwa,Dol-Des,Dol-AN} for other aspects of the Hirota system both on the classical and quantum levels. 

The goal of the paper is to present the theory of (multiple) orthogonal polynomials as a part of the theory of integrable systems. Relation to integrability of the close connected subject~\cite{NikishinSorokin,Aptekarev,AptekarevKuijlaars,VanAsche-HPAO} of Hermite--Pad\'{e} (type I) approximation~\cite{Baker} has been clarified recently in~\cite{Doliwa-Siemaszko-2}, where determinantal identities were used to formulate corresponding reduction of the Hirota system on the level of $\tau$-function. 
Notice that integrable equations appear in the theory of orthogonal polynomials upon introducing time variable by certain evolution of the measure~\cite{Moser,Flaschka}, and the Painlev\'{e} equations need special forms of the measure \cite{Sogo,VanAssche}. In the corresponding Pad\'{e} approximation and interpolation techniques \cite{Yamada,Nagao-Yamada,Nagao} all variables appear on equal footing. The integrability features of the interpolation analog of the Hermite--Pad\'{e} problem were presented in~\cite{Doliwa-NAMTS}.

It is well known~\cite{NikishinSorokin,Aptekarev} that the multiple orthogonal polynomials give rise to a particular simultaneous Pad\'{e} (or the Hermite--Pad\'{e} of type II) approximation problem. The determinantal solution of generic version of the problem is given in~\cite{BeckermannLabahn}, and in principle it can be used to study multiple orthogonality. Equivalently, one can use Mahler's duality~\cite{Mahler-P} to transfer results obtained in \cite{Doliwa-Siemaszko-2} for the corresponding Hermite--Pad\'{e} type~I problem. However we found it simpler and more efficient to build directly the determinantal approach for multiple orthogonal polynomials without the need to go around via the Hermite--Pad\'{e} problems, especially that it has been already initiated in~\cite{Aptekarev,NikishinSorokin,AptekarevDerevyaginMikiVanAssche}.  We are going to expand their work, in particular we provide the direct determinantal proofs of fundamental equations of the theory obtained there on a different basis. Moreover we give certain quadratic identities satisfied by the multiple orthogonal polynomials which are new in this context, although their analogs for solutions of the  Hermite--Pad\'{e} approximation and interpolation problems were given recently in~\cite{Doliwa-Siemaszko-2,Doliwa-NAMTS}.

The structure of the paper is as follows. In Section~\ref{sec:multiple-OP}  we recall basic elements of the theory of multiple orthogonal polynomials including their determinantal representation. Already at this level we encounter fundamental integrable equations, what seems not to be noticed before. In particular, we show that the polynomials themselves satisfy Hirota's quadratic identities. In the next Section we discuss determinantal derivation of the corresponding generalization of the three term recurrence formula and its compatibility condition. Then in Section~\ref{sec:MTS} we study, following simplified version of the approach initiated in~\cite{AptekarevDerevyaginMikiVanAssche}, a particular discrete-time Toda type evolution of the measures. We give determinantal derivation of the additional equations, which express the corresponding evolution of the polynomials and related functions. The resulting equations will be discussed in the next Section~\ref{sec:MOP-I} within the more general context of integrable systems theory. 

\section{Multiple orthogonal polynomials and their simplest properties} 
\label{sec:multiple-OP} 
In this Section we consider a generalization~\cite{Aptekarev,NikishinSorokin} of the theory of orthogonal polynomials, which has found recently application in mathematical physics and theory of random processes~\cite{AptekarevDerevyaginMikiVanAssche,BleherKuijlaars,Kuijlaars,VanAsche-HPAO,VanAsche-nn-mop}. We assume that the Reader knows basic properties of the standard orthogonal polynomials on the level of first sections of textbooks~\cite{Chihara,Ismail}. Also some knowledge about their relation to integrable systems and Pad\'{e} approximants, as described for example in \cite{IDS}, may be useful to provide additional motivation.  In the paper we do not present specific examples of families of multiple orthogonal polynomials, in particular those which generalize the most popular standard orthogonal polynomials, as they and their applications and particular properties can be found in the existing literature~\cite{Aptekarev,AptekarevBranquinhoVanAssche,BleherKuijlaars,CoussementVanAasche,DaemsKuilaars,FernandezManas,FilipukVanAsscheZhang,Kuijlaars,NikishinSorokin,VanAsscheCoussement}.

\subsection{Basic formulas}

\begin{Def}
Given $r$ positive nontrivial measures $\mu^{(j)}$, $j=1,\dots , r$, on the real line, and given $r$ non-negative integers $\bs = (s_1,\dots , s_r) \in\mathbb{N}_0^r$. The \emph{multiple orthogonal polynomial of the index $\bs$} with respect to the measures is a polynomial $Q_{\bs}(x)$  of degree $|\bs|=s_1 + \dots + s_r$ with real coefficients,
which satisfies the following orthogonality conditions with respect to the measures
\begin{equation} \label{eq:orth-r}
\int_\mathbb{R} x^k  Q_{\bs}(x) d\mu^{(j)}(x) = 0, \qquad k=0,1,\dots , s_j - 1 .
\end{equation}
\end{Def}
Conditions \eqref{eq:orth-r} give $|\bs|$ linear equations for $|\bs|+1$ coefficients of the polynomial. Notice that when $s_j = 0$ then there is no equation coming from the measure $\mu^{(j)}$. 
To present a solution of the system let us consider the Cauchy--Stieltjes transforms of the measures 
\begin{equation} 
S^{(j)}(z) = \int_\mathbb{R} \frac{d\mu^{(j)}(x)}{z-x} =
\sum_{k=0}^\infty \frac{\nu_{k}^{(j)}}{z^{k+1}} , \qquad z \to \infty,
\end{equation}
where the moments are given by
\begin{equation} 
 \nu_{k}^{(j)} = \int_{\mathbb{R}} x^k d\mu^{(j)}(x).
\end{equation}
Assume that all the moments exist then the matrix $M_{\bs}$ of the aforementioned linear equations, of $|\bs|$ rows and $|\bs| + 1$ columns,  reads
\begin{equation} \label{eq:M-r}
M_{\bs} = \begin{pmatrix}
\nu_{0}^{(1)} & \nu_{1}^{(1)}  &  \dots & \nu_{|\bs|}^{(1)} \\
\vdots & \vdots & \ddots   & \vdots \\
\nu_{s_1 - 1}^{(1)}  & \nu_{s_1}^{(1)}  &  \dots & \nu_{ |\bs|+ s_1 - 1}^{(1)}  \\
\cdots & \cdots &   & \cdots \\
\nu_{0}^{(r)} & \nu_{1}^{(r)} &  \dots & \nu_{|\bs|}^{(r)}\\
\vdots & \vdots & \ddots   & \vdots \\
\nu_{s_r - 1}^{(r)} & \nu_{s_r}^{(r)} &  \dots & \nu_{ |\bs| + s_r - 1}^{(r)} 
\end{pmatrix} .
\end{equation}	
It is composed out of $r$ row-blocks, the $j$th block consists of $s_j$ rows obtained from the orthogonality conditions~\eqref{eq:orth-r}. We will use also its column description
\begin{equation}
M_{\bs} = \left( \begin{array}{cccc} \boldsymbol{\nu}_{\bs ,0} & \boldsymbol{\nu}_{\bs ,1} & \dots & \boldsymbol{\nu}_{\bs ,|\bs|} \end{array} \right),
\end{equation}
where definition of each column can be easily deduced from \eqref{eq:M-r}.

Define the polynomial $Z_{\bs}(x)$  by the determinant
\begin{equation} \label{eq:Z-r}
Z_{\bs}(x) = \left| \begin{matrix}
\nu_{0}^{(1)} & \nu_{1}^{(1)}  &  \dots & \nu_{|\bs|}^{(1)} \\
\vdots & \vdots & \ddots   & \vdots \\
\nu_{s_1 - 1}^{(1)}  & \nu_{s_1}^{(1)}  &  \dots & \nu_{ |\bs|+ s_1 - 1}^{(1)}  \\
\cdots & \cdots &   & \cdots \\
\nu_{0}^{(r)} & \nu_{1}^{(r)} &  \dots & \nu_{|\bs|}^{(r)}\\
\vdots & \vdots & \ddots   & \vdots \\
\nu_{s_r - 1}^{(r)} & \nu_{s_r}^{(r)} &  \dots & \nu_{ |\bs| + s_r - 1}^{(r)} \\
1 & x & \dots & x^{|\bs|}
\end{matrix} \right| =
 \left| \begin{array}{ccccc} \boldsymbol{\nu}_{\bs,0} &  \boldsymbol{\nu}_{\bs,1} & \dots & \boldsymbol{\nu}_{\bs,|\bs|} \\
1 & x & \dots & x^{|\bs|}
\end{array}
\right|,
\end{equation}	
of the matrix $M_{\bs}$ supplemented by the row
$\left(  1 , x , x^2 , \dots , x^{|\bs|}  \right)$.
It satisfies equation~\eqref{eq:orth-r}, as then two rows of the resulting determinant on the left hand side coincide. The polynomial $Z_{\bs}(x)$  will be called the \emph{canonical multiple orthogonal polynomial} of multi-index $\bs$ for measures~ $\mu^{(j)}$, $j=1,\dots , r$. 

When the determinant
\begin{equation} \label{eq:D-r}
\qquad \quad
D_{\bs}  = \left| \begin{matrix}
\nu_{0}^{(1)} & \nu_{1}^{(1)} &  \dots & \nu_{|\bs|-1}^{(1)}\\
\vdots & \vdots & \ddots   & \vdots \\
\nu_{s_1 - 1}^{(1)} & \nu_{s_1}^{(1)} &  \dots & \nu_{ |\bs|+ s_1 - 2}^{(1)} \\
\cdots & \cdots &   & \cdots \\
\nu_{0}^{(r)} & \nu_{1}^{(r)} &  \dots & \nu_{|\bs|-1}^{(r)}\\
\vdots & \vdots & \ddots   & \vdots \\
\nu_{s_r - 1}^{(r)} & \nu_{s_r}^{(r)} &  \dots & \nu_{ |\bs| + s_r - 2}^{(r)} 
\end{matrix} \right| = \left|\begin{array}{cccc}\boldsymbol{\nu}_{\bs ,0} & \boldsymbol{\nu}_{\bs ,1} & \dots & \boldsymbol{\nu}_{\bs ,|\bs|-1} \end{array}\right|,
\end{equation}
does not vanish, then multiple orthogonal polynomials of the multi-index $\bs$ form one dimensional family, and the corresponding multi-index is called \emph{normal}. In such case usually one takes for the solution the monic polynomial \cite{Aptekarev} 
\begin{equation} \label{eq:QZD}
Q_{\bs}(x) = Z_{\bs}(x) / D_{\bs}.
\end{equation}

\subsection{Simplest equations}
We start investigation of equations satisfied by multiple orthogonal polynomials by showing that with respect to the discrete indices $\bs = (s_1, s_2 \dots, s_r) \in \NN_0^r$ they provide solutions to Hirota's discrete Kadomtsev--Petviashvili system and of its linear problem. In next parts of the paper we discuss additional equations satisfied by the polynomials. Below, and in all the paper, by $\be_j$, $j=1,\dots , r$, we denote the canonical vector of dimension $r$ with all its components being zeros except of one at $j$th place.

\begin{Rem}
In deriving the equations we will use two determinantal identities~\cite{BrualdiSchneider}. Both relate the determinant $A=\det(a_{ij})_{i,j = 1,\dots , n}$ to determinants of the matrix with certain rows and columns removed. Choose $k$ row indices $i_1 < i_2 < \dots < i_k$ from $1,2,3,\dots , n$, and similarly  consider columns $j_1 < j_2 < \dots < j_k$ from $1,2,3,\dots , n$. By $A\begin{pmatrix}
i_1, \dots , i_k\\ j_1, \dots , j_k\end{pmatrix}$
denote  determinant of the matrix obtained by removing all rows other then $i_1, \dots , i_k$ and all columns other then $j_1,\dots , j_k$.  Determinant of the matrix obtained by removing all rows $i_1, \dots , i_k$ and all columns $j_1,\dots , j_k$ denote by $A(i_1, \dots , i_k|j_1,\dots , j_k)$. Then by the generalized Laplace expansion with respect to the $k$ rows $i_1, \dots , i_k$ we have
\begin{equation*}
A = \sum_{1\leq j_1 < \dots < j_k \leq n} (-1)^{i_1 + i_2 + \dots + i_k + j_1 + j_2 + \dots + j_k} A\begin{pmatrix}
i_1, \dots , i_k\\ j_1, \dots , j_k\end{pmatrix}
A(i_1, \dots , i_k|j_1,\dots , j_k),
\end{equation*}
where the sum is over all possible choices of $k$ columns.
Analogously one defines generalized Laplace expansion with respect to $k$ columns.
We will also use the following identity, which is a consequence of the generalized Laplace expansion, attributed to Jacobi and Sylvester,
\begin{equation*}
A \, A( i_1, i_2|j_1, j_2) =
A(i_1|j_1)\, A(i_2|j_2) - 
A(i_1|j_2)\, A(i_2|j_1) , 
\end{equation*}
and known also as the Dodgson determinant condensation rule. 
\end{Rem}

Let us first consider the following relation between the polynomials with neighboring multi-indices, which within the context of integrable systems provides the so called linear problem for the Hirota equations.
\begin{Prop}
	When $r\geq 2$ then the canonical multiple polynomials $Z_{\bs}(x)$ and the determinants $D_{\bs}$ satisfy, as functions of the multi-index $\bs$, the following system
\begin{equation} \label{eq:Z-D-jk}
Z_{\bs+\be_j}(x) D_{\bs + \be_k} - Z_{\bs+\be_k}(x) D_{\bs + \be_j}=  Z_{\bs}(x) D_{\bs+\be_j + \be_k}, \qquad j<k.
\end{equation}
\end{Prop}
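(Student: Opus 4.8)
The plan is to view the left-hand side of \eqref{eq:Z-D-jk} as a single determinant obtained by a Sylvester/Dodgson-type expansion, and then recognize the two factors that emerge. All four polynomials $Z_{\bs+\be_j}$, $Z_{\bs+\be_k}$, $Z_{\bs+\be_j+\be_k}$, $Z_{\bs}$, together with the determinants $D_{\bs+\be_j}$, $D_{\bs+\be_k}$, $D_{\bs+\be_j+\be_k}$, should be displayed as minors of one big matrix built from the moment columns $\boldsymbol{\nu}$. The key observation is that passing from index $\bs$ to $\bs+\be_j$ does two things: it enlarges the matrix by one column (since $|\bs+\be_j| = |\bs|+1$), and it inserts one extra row into the $j$th block of the moment table, namely the row $(\nu_{s_j}^{(j)}, \nu_{s_j+1}^{(j)}, \dots)$. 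So I would fix an auxiliary $(|\bs|+2)$-column matrix whose rows are: the $s_1$ rows of block $1$, \dots, the $s_j$ rows of block $j$ together with the one additional row indexed $s_j$, \dots, the $s_k$ rows of block $k$ together with the one additional row indexed $s_k$, \dots, the $s_r$ rows of block $r$, and finally the row $(1, x, \dots, x^{|\bs|+1})$. Call the two ``new'' rows $R_j$ (in block $j$) and $R_k$ (in block $k$).

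First I would identify, within this big matrix $A$ (which has, say, $N = |\bs|+2$ rows and $N$ columns, after noting that the total row count is $|\bs|+2$), each of the seven quantities as a minor: $Z_{\bs+\be_j+\be_k}(x)$ is $\det A$ itself; $Z_{\bs+\be_j}(x)$ is $A$ with row $R_k$ and the last column removed (up to sign), $Z_{\bs+\be_k}(x)$ is $A$ with row $R_j$ and the last column removed, $Z_{\bs}(x)$ is $A$ with both $R_j$ and $R_k$ and the last two columns removed; similarly $D_{\bs+\be_j+\be_k}$ is $A$ with the $x$-row removed and the last two columns removed (it is the appropriate $D$), $D_{\bs+\be_j}$ is $A$ with $R_k$ and the $x$-row and the last column removed, $D_{\bs+\be_k}$ is $A$ with $R_j$ and the $x$-row and the last column removed. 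Keeping track of column indices here is a small bookkeeping point, and one must check that the minors really are the $D$ and $Z$ matrices of the shifted indices — which follows directly from definitions \eqref{eq:Z-r}, \eqref{eq:D-r}.

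Then I would apply the Dodgson/Jacobi--Sylvester identity quoted in the Remark, with $A$ as above, taking the two distinguished rows to be $R_j$ (removed together with the $x$-row, or with the last row) and the $x$-row, and the two distinguished columns to be the last column and the second-to-last column. The four two-deleted-row/column minors on the right-hand side of the Sylvester identity then match, up to a common sign, the three products $Z_{\bs+\be_j}D_{\bs+\be_k}$, $Z_{\bs+\be_k}D_{\bs+\be_j}$ (these come with opposite signs because interchanging the two removed columns flips the sign), and $A\,A(R_j, x\text{-row} \mid \text{last two columns}) = Z_{\bs+\be_j+\be_k}\, Z_{\bs}$ — wait, that is not quite the stated identity, so in fact I would choose the two removed columns to be the last column and the removed rows to be $R_j$ and $R_k$, so that the Sylvester identity reads $A \cdot A(R_j, R_k \mid c_{N-1}, c_N) = A(R_j\mid c_{N-1})A(R_k\mid c_N) - A(R_j\mid c_N)A(R_k\mid c_{N-1})$, and each factor is precisely one of the seven quantities. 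The final step is a careful sign check, collecting the signs $(-1)^{i_1+i_2+j_1+j_2}$ coming from the row/column positions of $R_j$, $R_k$ in the various blocks; these signs must cancel to give the clean form \eqref{eq:Z-D-jk} with a plus sign on the right.

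The main obstacle I anticipate is precisely this last point: choosing the removed rows and columns so that the three minors appearing are exactly $Z_{\bs+\be_j}D_{\bs+\be_k}$, $Z_{\bs+\be_k}D_{\bs+\be_j}$, $Z_{\bs}D_{\bs+\be_j+\be_k}$ (and not, say, $D\cdot Z$ in the wrong pairing or with an extra column of powers of $x$ misplaced), and then verifying that the signs $(-1)^{(\text{row of }R_j)+(\text{row of }R_k)+(N-1)+N}$ and the sign from reordering columns/rows into the standard block order combine to the advertised result. Since the problem is symmetric under $j \leftrightarrow k$ only up to the sign that produces the difference on the left-hand side of \eqref{eq:Z-D-jk}, the bookkeeping has to be done once and carefully; everything else is immediate from the Remark's identities and the definitions \eqref{eq:Z-r}--\eqref{eq:D-r}. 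The hypothesis $r\geq 2$ is of course what guarantees that two distinct blocks $j\neq k$ exist so that $R_j$ and $R_k$ lie in different row-blocks.
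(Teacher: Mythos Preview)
Your high-level plan is right --- build a single square matrix whose minors are the seven quantities, and apply the Jacobi--Sylvester identity with rows $R_j,R_k$ and the last two columns --- but the matrix you chose does not work, and this is not just a sign-bookkeeping issue.

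Take $A$ to be the $(|\bs|+3)\times(|\bs|+3)$ matrix defining $Z_{\bs+\be_j+\be_k}(x)$ (your count $N=|\bs|+2$ is off by one). With $i_1=R_j$, $i_2=R_k$ and $j_1,j_2$ the last two columns, every single-deletion minor $A(i_\ell\mid j_m)$ still contains the $x$-row, since neither $R_j$ nor $R_k$ is that row. Hence $A(R_j\mid c_{N-1})$, $A(R_k\mid c_{N-1})$ are polynomials in $x$ (in fact $\tilde Z_{\bs+\be_k}(x)$, $\tilde Z_{\bs+\be_j}(x)$), never the scalars $D_{\bs+\be_k}$, $D_{\bs+\be_j}$. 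The Sylvester identity with your choices therefore produces
\[
Z_{\bs+\be_j+\be_k}(x)\,Z_{\bs}(x)=\tilde Z_{\bs+\be_k}(x)\,Z_{\bs+\be_j}(x)-Z_{\bs+\be_k}(x)\,\tilde Z_{\bs+\be_j}(x),
\]
not \eqref{eq:Z-D-jk}. Your alternative (take the $x$-row as one of the deleted rows) fails for the symmetric reason: then $A(i_1,i_2\mid j_1,j_2)$ still carries one of $R_j,R_k$, so it is not $Z_{\bs}(x)$. Also, your description of $D_{\bs+\be_j}$ as ``$A$ with $R_k$, the $x$-row and the last column removed'' deletes two rows and one column from a square matrix, which is not a determinant at all.

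The missing idea is to enlarge the matrix by one extra column so that a single-row deletion can produce a $D$: append the column $(\boldsymbol 0,1)^T$ (zeros in all moment rows, a $1$ in the $x$-row) to the matrix of $Z_{\bs+\be_j+\be_k}(x)$ --- this is exactly the matrix the paper writes down. Expansion along that column shows $A=D_{\bs+\be_j+\be_k}$; removing the extra column together with $R_j$ or $R_k$ gives $Z_{\bs+\be_k}(x)$ or $Z_{\bs+\be_j}(x)$; removing instead the penultimate column together with $R_j$ or $R_k$ and then expanding along the surviving $(\boldsymbol 0,1)^T$ column gives $D_{\bs+\be_k}$ or $D_{\bs+\be_j}$; and removing both $R_j,R_k$ and both last columns gives $Z_{\bs}(x)$. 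Now Sylvester with rows $R_j,R_k$ and the two last columns yields \eqref{eq:Z-D-jk} directly, with no residual sign ambiguity.
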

\begin{proof}
Apply the Sylvester identity to the determinant
	\begin{equation*}	
	\left| \begin{array}{cccccc} \boldsymbol{\nu}_{\bs+\be_j + \be_k,0} &  \boldsymbol{\nu}_{\bs+\be_j + \be_k,1} & \dots & \boldsymbol{\nu}_{\bs+\be_j + \be_k,|\bs|+1} & \boldsymbol{0} \\
	1 & x & \dots & x^{|\bs|+1} & 1
	\end{array}
	\right|,
	\end{equation*}
the last rows of the $j$th and $k$th row-blocks, and its two last columns.
\end{proof}

\begin{Cor}
	When the determinants $D_{\bs}$, $D_{\bs + \be_j}$ and $D_{\bs + \be_k}$ do not vanish then equation~\eqref{eq:Z-D-jk} gives	
	\begin{equation} \label{eq:Q-C-jk}
	Q_{\bs+\be_j}(x) - Q_{\bs+\be_k}(x)= C_{\bs}^{(jk)}  Q_{\bs}(x) , \qquad j<k,
	\end{equation}
	where \begin{equation} \label{eq:C-jk}
	C_{\bs}^{(jk)} =  \frac{D_{\bs} D_{\bs+\be_j + \be_k}}{  D_{\bs + \be_j}  D_{\bs + \be_k} } = - C^{(kj)}_{\bs}, \qquad j<k.
	\end{equation}
\end{Cor}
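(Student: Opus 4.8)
The plan is to read off \eqref{eq:Q-C-jk} directly from \eqref{eq:Z-D-jk} by dividing through by a suitable nonvanishing quantity and using the defining relation \eqref{eq:QZD}. Since $\bs$, $\bs+\be_j$ and $\bs+\be_k$ are all normal by hypothesis, the monic polynomials $Q_{\bs}$, $Q_{\bs+\be_j}$, $Q_{\bs+\be_k}$ are well defined through $Q_{\bt}(x)=Z_{\bt}(x)/D_{\bt}$. I would divide both sides of \eqref{eq:Z-D-jk} by the nonzero product $D_{\bs+\be_j}D_{\bs+\be_k}$: the left-hand side becomes $Z_{\bs+\be_j}(x)/D_{\bs+\be_j}-Z_{\bs+\be_k}(x)/D_{\bs+\be_k}=Q_{\bs+\be_j}(x)-Q_{\bs+\be_k}(x)$, while on the right-hand side I would insert a factor $D_{\bs}/D_{\bs}$ into $Z_{\bs}(x)D_{\bs+\be_j+\be_k}$ to rewrite it as $\left(D_{\bs}D_{\bs+\be_j+\be_k}/(D_{\bs+\be_j}D_{\bs+\be_k})\right)\bigl(Z_{\bs}(x)/D_{\bs}\bigr)=C_{\bs}^{(jk)}Q_{\bs}(x)$. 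This yields \eqref{eq:Q-C-jk} together with the first equality in \eqref{eq:C-jk}.

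The antisymmetry $C_{\bs}^{(jk)}=-C_{\bs}^{(kj)}$ is then a matter of convention rather than of computation: swapping $j$ and $k$ on the left-hand side of \eqref{eq:Q-C-jk} changes its sign, so the coefficient multiplying $Q_{\bs}(x)$ must change sign as well. The closed expression in \eqref{eq:C-jk} is manifestly symmetric in $j$ and $k$, so it is to be understood as the value of $C_{\bs}^{(jk)}$ for $j<k$, the case $j>k$ being defined precisely by this antisymmetry. I would state this explicitly to avoid any ambiguity.

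I do not expect any genuine obstacle here; the statement is an immediate corollary. The only points worth a word of care are that the three hypotheses $D_{\bs},D_{\bs+\be_j},D_{\bs+\be_k}\neq 0$ are exactly what is needed both to perform the division and to ensure that all three monic polynomials exist, and that $D_{\bs+\be_j+\be_k}$ is \emph{not} assumed to be nonzero --- if it happens to vanish one simply obtains $C_{\bs}^{(jk)}=0$ and $Q_{\bs+\be_j}(x)=Q_{\bs+\be_k}(x)$, which is consistent with \eqref{eq:Q-C-jk}.
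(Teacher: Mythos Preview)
Your proposal is correct and is exactly the intended derivation: the paper states this as an immediate corollary with no separate proof, and the only thing to do is divide \eqref{eq:Z-D-jk} by $D_{\bs+\be_j}D_{\bs+\be_k}$ and invoke \eqref{eq:QZD}, precisely as you describe. Your remarks on the role of the nonvanishing hypotheses and on the conventional nature of the antisymmetry are accurate and appropriate.
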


The following relation between the determinants $D_{\bs}$ is known in the theory of integrable systems as Hirota's discrete Kadomtsev--Petviashvili equation.
\begin{Prop}
When $r\geq 3$ then the determinants $D_{\bs}$ satisfy the non-linear Hirota equations
\begin{equation} \label{eq:D-ijk-H}
D_{\bs+\be_i + \be_j} 	D_{\bs+\be_k} -	D_{\bs+ \be_i + \be_k} 	D_{\bs+\be_j} + 	D_{\bs+\be_j + \be_k} 	D_{\bs+\be_i} = 0, \qquad i < j < k.
\end{equation}
\end{Prop}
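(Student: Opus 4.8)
The plan is to imitate the determinantal derivation of the linear problem~\eqref{eq:Z-D-jk}: I will exhibit all six determinants occurring in~\eqref{eq:D-ijk-H} as minors of a single $(|\bs|+3)\times(|\bs|+3)$ determinant and then apply the Sylvester (Dodgson) identity recalled in the Remark. Since the three shifts $\be_i,\be_j,\be_k$ act inside the $i$th, $j$th and $k$th row-blocks, the appropriate ambient matrix is $M_{\bs+\be_i+\be_j+\be_k}$, which has $|\bs|+3$ rows. I would work with
\begin{equation*}
\mathcal{N} = \left| \begin{array}{ccccc}
\boldsymbol{\nu}_{\bs+\be_i+\be_j+\be_k,0} & \boldsymbol{\nu}_{\bs+\be_i+\be_j+\be_k,1} & \dots & \boldsymbol{\nu}_{\bs+\be_i+\be_j+\be_k,|\bs|+1} & \bw
\end{array} \right| ,
\end{equation*}
where $\bw$ is the column vector whose only non-zero entry is a $1$ in the position occupied by the last row of the $k$th row-block of $M_{\bs+\be_i+\be_j+\be_k}$, and apply the Sylvester identity with respect to the last rows of the $i$th and $j$th row-blocks and the two last columns.

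It then remains to identify the six sub-determinants produced by Sylvester's identity. Whenever the column $\bw$ survives in a minor, I expand that minor along $\bw$: deleting the row carrying the $1$ of $\bw$ erases the last row of the $k$th block, i.e.\ decreases $s_k$ by one, and leaves the moment matrix of a multi-index shifted back by $\be_k$. Running through the four deletion patterns one gets, writing $\rho$ for the row-position of the last row of the $k$th block: $\det\mathcal{N}=(-1)^{\rho+|\bs|+1}D_{\bs+\be_i+\be_j}$; the minor with the two distinguished rows and the two last columns removed equals $D_{\bs+\be_k}$; removing one distinguished row together with the top-degree moment column $\boldsymbol{\nu}_{\cdot,|\bs|+1}$ and then expanding along $\bw$ gives $(-1)^{\rho+|\bs|+1}D_{\bs+\be_j}$, respectively $(-1)^{\rho+|\bs|+1}D_{\bs+\be_i}$; and removing one distinguished row together with the column $\bw$ gives $D_{\bs+\be_j+\be_k}$, respectively $D_{\bs+\be_i+\be_k}$. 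Substituting these into $A\,A(i_1,i_2|j_1,j_2)=A(i_1|j_1)A(i_2|j_2)-A(i_1|j_2)A(i_2|j_1)$, the sign $(-1)^{\rho+|\bs|+1}$ appears in every term and cancels, and what is left is precisely~\eqref{eq:D-ijk-H} with the alternating pattern $+,-,+$.

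I expect the only genuinely delicate point to be this sign bookkeeping: one has to check that the parity produced by the Laplace expansion along $\bw$ — which changes whenever a distinguished row lying above the $k$th block is suppressed, thereby shifting $\rho$ by one — comes out the same in all the minors involved, so that it factors out of the whole identity. Everything else is the by-now routine block-by-block verification that erasing the last row of a block and/or the top-degree column of $M_{\bs+\be_i+\be_j+\be_k}$ reproduces exactly the matrix defining $D_{\bt}$ for the corresponding multi-index $\bt$, together with the convention (as in the Remark) that the symbols $A(\dots|\dots)$ denote submatrix determinants.

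Finally I would append a remark on the integrable-systems meaning: \eqref{eq:D-ijk-H} is also the compatibility condition of the linear system~\eqref{eq:Q-C-jk}. Indeed, adding the relations~\eqref{eq:Q-C-jk} for the pairs $(i,j)$ and $(j,k)$ and comparing with the one for $(i,k)$ forces $C_{\bs}^{(ij)}+C_{\bs}^{(jk)}=C_{\bs}^{(ik)}$, and this, rewritten by means of~\eqref{eq:C-jk}, is~\eqref{eq:D-ijk-H}. This second route presupposes normality of the relevant multi-indices, i.e.\ the non-vanishing of $D_{\bs},D_{\bs+\be_i},D_{\bs+\be_j},D_{\bs+\be_k}$, whereas the Sylvester argument proves~\eqref{eq:D-ijk-H} unconditionally, as a polynomial identity in the moments $\nu_k^{(j)}$.
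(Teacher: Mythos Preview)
Your argument is correct and is essentially the paper's own proof: the paper also replaces the last column of the matrix of $D_{\bs+\be_i+\be_j+\be_k}$ by a unit vector and applies the Sylvester identity with two of the three distinguished block-end rows and the two last columns. The only cosmetic difference is a relabelling --- the paper places the $1$ in the last row of the $i$th block and uses the $j$th and $k$th block-end rows for Sylvester, while you place the $1$ in the $k$th block and use the $i$th and $j$th rows --- and your added remark on the alternative derivation via $C_{\bs}^{(ij)}+C_{\bs}^{(jk)}=C_{\bs}^{(ik)}$ matches the paper's own Comment following the Proposition.
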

\begin{proof}
Consider the matrix whose determinant gives $D_{\bs +\be_i + \be_j + \be_k}$, but replace its last column by the vector $(0, 0, \dots ,0,1,0,\dots, 0)^T$ consisting of zeros except of $1$ in the last row of $i$th block. Then apply to such new matrix the Sylvester identity with the last rows of the $j$th and $k$th row-blocks, and with its two last columns.
\end{proof}
\begin{Com}
	Our proof applies also to non-normal multi-indices. The standard derivation of \eqref{eq:D-ijk-H}, in the context of integrable systems, starts with three copies of \eqref{eq:Z-D-jk} for pairs $i<j$, $i<k$, $j<k$, with the first equation multiplied by $D_{\bs + \be_k}$, the second multiplied by $D_{\bs + \be_j}$, and the third one by $D_{\bs + \be_i}$. 
\end{Com}

In equations~\eqref{eq:Z-D-jk} the polynomials $Z_{\bs}(x)$ can be exchanged with the functions $D_{\bs}$, provided shifts in positive directions are simultaneously replaced by the shifts backward. Such \emph{duality}~\cite{Saito-Saitoh} implies the following result, which in the context of multiple orthogonal polynomials seems not to be considered before.
\begin{Prop}
	When $r\geq 3$ then the polynomials $Z_{\bs}(x)$ satisfy, with respect to discrete multi-indices, the non-linear Hirota equations
\begin{equation}
\label{eq:Z-ijk-H}
Z_{\bs+\be_i + \be_j}(x) Z_{\bs+\be_k}(x) -	Z_{\bs+ \be_i + \be_k}(x) 	Z_{\bs+\be_j}(x) + 	Z_{\bs+\be_j + \be_k}(x) 	Z_{\bs+\be_i}(x) = 0, \quad i<j<k.
\end{equation}
\end{Prop}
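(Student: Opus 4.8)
The plan is to deduce the statement from the already-proven Hirota equation \eqref{eq:D-ijk-H} by a mild change of viewpoint, which is also the concrete content of the duality~\cite{Saito-Saitoh} alluded to above: the canonical polynomial $Z_{\bs}(x)$ is itself a $D$-type determinant, namely the one attached to the system obtained by adjoining one more measure to $\mu^{(1)},\dots ,\mu^{(r)}$.

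Concretely, fix $x\in\RR$ and append the Dirac mass $\mu^{(r+1)}=\delta_x$, whose moments are $\nu_{k}^{(r+1)}=x^{k}$. Let $\hat D_{\bt}$, $\bt\in\NN_0^{r+1}$, denote the determinant \eqref{eq:D-r} formed from the moment arrays of this enlarged family of $r+1$ measures. The first step is the identification
\[
\hat D_{(\bs,1)}=Z_{\bs}(x),\qquad \bs\in\NN_0^r ,
\]
where $(\bs,1)=(s_1,\dots ,s_r,1)\in\NN_0^{r+1}$: since $|(\bs,1)|=|\bs|+1$, the determinant $\hat D_{(\bs,1)}$ has order $|\bs|+1$ with columns carrying moment indices $0,1,\dots ,|\bs|$; its first $r$ row-blocks reproduce exactly the $|\bs|$ moment rows of \eqref{eq:Z-r}, while the single row coming from the $(r+1)$-st block is $(\nu_{0}^{(r+1)},\dots ,\nu_{|\bs|}^{(r+1)})=(1,x,\dots ,x^{|\bs|})$, i.e. the bordering row of \eqref{eq:Z-r}. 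The second step is to apply the Hirota equation \eqref{eq:D-ijk-H} to this enlarged system, which has $r+1\ge 4$ measures and for which the given directions $i<j<k$ still form an increasing triple inside $\{1,\dots ,r+1\}$, taking base multi-index $(\bs,1)$. Because $i,j,k\le r$, each shift acts only on the first $r$ entries, so $(\bs,1)+\be_l=(\bs+\be_l,1)$ and $(\bs,1)+\be_l+\be_m=(\bs+\be_l+\be_m,1)$; substituting the identification of the first step turns \eqref{eq:D-ijk-H} verbatim into \eqref{eq:Z-ijk-H}.

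There is essentially no obstacle beyond this bookkeeping. The one point worth stressing is that the proof of \eqref{eq:D-ijk-H} is purely determinantal — it uses no positivity, non-triviality, or normality — so it applies unchanged to the degenerate array $\nu_{k}^{(r+1)}=x^{k}$; and the hypothesis $r\ge 3$ is exactly what guarantees that three directions $i<j<k$ genuinely exist among the first $r$ measures (for $r=2$ the statement is vacuous). As an alternative, one can prove \eqref{eq:Z-ijk-H} directly in the spirit of the proof of \eqref{eq:D-ijk-H}: start from the matrix whose determinant is $Z_{\bs+\be_i+\be_j+\be_k}(x)$, replace its last column by the vector that is zero except for a $1$ in the last row of the $i$-th block, and apply Sylvester's identity with the last rows of the $j$-th and $k$-th blocks and the two last columns. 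The bordering row $(1,x,\dots)$ simply rides along, the six resulting minors are, up to sign, precisely the six factors occurring in \eqref{eq:Z-ijk-H}, and the only nuisance is tracking the signs so that they combine into the stated alternating-sum form.
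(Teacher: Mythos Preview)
Your proposal is correct. The paper's own proof is exactly the alternative you sketch at the end: start from the matrix whose determinant is $Z_{\bs+\be_i+\be_j+\be_k}(x)$, replace its last column by the unit vector supported in the last row of the $i$th block, and apply the Sylvester identity with the last rows of the $j$th and $k$th blocks and the two last columns --- a verbatim repetition of the proof of \eqref{eq:D-ijk-H} with the bordering row $(1,x,\dots)$ carried along passively.

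Your main argument is a genuinely different route. By adjoining $\mu^{(r+1)}=\delta_x$ and recognising $Z_{\bs}(x)=\hat D_{(\bs,1)}$, you reduce \eqref{eq:Z-ijk-H} to an instance of the already-established \eqref{eq:D-ijk-H} rather than re-running the Sylvester computation. This makes the $Z\leftrightarrow D$ duality completely explicit and avoids any fresh sign-tracking; the price is the small justification you correctly supply, namely that the proof of \eqref{eq:D-ijk-H} is a pure determinantal identity in the moment array and hence insensitive to the degeneracy of the Dirac measure. The paper's direct approach, by contrast, keeps the argument self-contained and parallel to the $D$-case without invoking an auxiliary measure. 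Both are short; yours has the conceptual advantage of showing that \eqref{eq:Z-ijk-H} is literally a specialisation of \eqref{eq:D-ijk-H}, not merely an analogue.
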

\begin{proof}
Consider the matrix whose determinant gives $Z_{\bs +\be_i + \be_j + \be_k}(x)$, but replace its last column by the vector $(0, 0, \dots ,0,1,0,\dots, 0)^T$ consisting of zeros except of $1$ in  the last row of $i$th block. Then apply the Sylvester identity with the last rows of the $j$th and $k$th row-blocks, and with its two last columns.
\end{proof}
\begin{Com}
Equation \eqref{eq:Z-ijk-H} can be again demonstrated using three copies of the linear problem \eqref{eq:Z-D-jk} for pairs $i<j$, $i<k$ and $j<k$. It is convenient to shift the first copy in direction $\be_k$ and multiply by $Z_{\bs + \be_i + \be_j}$. Analogously one does for two other copies, and then subtract the second equation from the sum of other two.
\end{Com}	

\begin{Rem}
	Because $Z_{\bs}(x) = D_{\bs} x^{|\bs|} + \text{lower order terms}$, then the Hirota equations \eqref{eq:D-ijk-H} can be derived from \eqref{eq:Z-ijk-H} by collecting the highest order term coefficients. 
\end{Rem}

\begin{Rem} 
For completeness we write down another similar identities holding for $r\geq 3$
	\begin{equation} 	\label{eq:Z-D-ijk-H2}
		D_{\bs+\be_i + \be_j} Z_{\bs+\be_k}(x) -	D_{\bs+ \be_i + \be_k}	Z_{\bs+\be_j}(x) + 	D_{\bs+\be_j + \be_k}	Z_{\bs+\be_i}(x) = 0, \qquad i<j<k.
	\end{equation}
They can be demonstrated using three copies of the linear problem \eqref{eq:Z-D-jk} for pairs $i<j$, $i<k$ and $j<k$ or by appropriate Sylvester identities. Details are left for the Reader as a simple exercise.
\end{Rem}	

\section{Generalization of the three-term relation}

\label{sec:G3T}

The celebrated three-term relation \cite{Chihara,Ismail} for orthogonal polynomials links them to the spectral theory of operators~\cite{Akhiezer}, random walks~\cite{KarlinMcGregor} or integrable systems~\cite{Flaschka,IDS}. In this Section we study its generalization~\cite{VanAsche-nn-mop} to multiple orthogonal polynomials, together with related questions, using determinantal tools.

\subsection{Determinantal derivation of the multi-term relations}
For multiple orthogonal polynomials we have the following analog~\cite{VanAsche-nn-mop} of the three-term relation
\begin{equation} \label{eq:lin-term-r}
x Q_{\bs} (x) = Q_{\bs + \be_j}(x) + b_{\bs}^{(j)} Q_{\bs}(x) + \sum_{i=1}^r a_{\bs}^{(i)} Q_{\bs - \be_i}(x), \qquad j=1,2,\dots , r,
\end{equation}
where the recurrence coefficients are given by the formulas
\begin{equation} \label{eq:a-b-r}
a_{\bs}^{(j)} =  \frac{D_{\bs + \be_j} D_{\bs - \be_j}}{D_{\bs}^2},\qquad 
b_{\bs}^{(j)} = \frac{\tilde{D}_{\bs + \be_j}}{D_{\bs + \be_j}} - \frac{\tilde{D}_{\bs}}{D_{\bs}} ,
\end{equation}
and
\begin{equation} \label{eq:tD}
\tilde{D}_{\bs} = \left|\begin{array}{ccccc} \boldsymbol{\nu}_{\bs ,0} & \boldsymbol{\nu}_{\bs ,1} & \dots & \boldsymbol{\nu}_{\bs ,|\bs|-2}& \boldsymbol{\nu}_{\bs ,|\bs|} \end{array}
\right|,
\end{equation} 
is determinant of the matrix obtained by removing the penultimate column from $M_{\bs}$. In the standard derivation 
\cite{Ismail} of the multi-term relations \eqref{eq:lin-term-r} the orthogonality conditions \eqref{eq:orth-r} and the expansion 
\begin{equation} \label{eq:Q-exp-r}
Q_{\bs}(x) =  x^{|\bs|} - \frac{\tilde{D}_{\bs}}{D_{\bs}} x^{|\bs|-1} + \dots,
\end{equation}
are used.

Application of the Sylvester identity to $D_{\bs+\be_j + \be_k}$, $j<k$, with respect to its two last columns and the last rows of the $j$th and $k$th row-blocks gives
	\begin{equation} \label{eq:Djk-tD}
	D_{\bs+\be_j + \be_k} D_{\bs} = \tilde{D}_{\bs + \be_k} D_{\bs + \be_j} - \tilde{D}_{\bs + \be_j} D_{\bs + \be_k}.
	\end{equation}
Equivalently, the above system can be obtained by collecting coefficients in front of $x^{|\bs|}$ in equations~\eqref{eq:Z-D-jk}.

Our goal of this Section will be to show determinantal meaning of \eqref{eq:lin-term-r}, so we rewrite it in more convenient  form for this purpose. 
\begin{Prop} \label{prop-r-term}
	The canonical multiple orthogonal polynomials satisfy the following form of the multi-term relation \eqref{eq:lin-term-r}
\begin{equation} \label{eq:3-term-r-Z}
x Z_{\bs}(x) {D_{\bs}} = \frac{{D_{\bs}}}{D_{\bs + \be_j} }  \Bigl( Z_{\bs + \be_j}(x) D_{\bs} + Z_{\bs}(x) \tilde{D}_{\bs + \be_j}\Bigr)  - \tilde{D}_{\bs} Z_{\bs}(x) + \sum_{i=1}^r D_{\bs + \be_i} Z_{\bs - \be_i}(x) ,
\end{equation} 	
where $j$ is an arbitrary index from $1,2,\dots ,r$.
\end{Prop}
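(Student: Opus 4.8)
The plan is to start from the known multi-term relation \eqref{eq:lin-term-r} (for the monic polynomials $Q_{\bs}$), multiply through by $D_{\bs}^2$, and then convert all appearances of $Q$ into appearances of $Z$ and $D$ using the normalization \eqref{eq:QZD}, i.e.\ $Q_{\bs} = Z_{\bs}/D_{\bs}$. Concretely, I would first write
\[
x Q_{\bs}(x) = Q_{\bs+\be_j}(x) + b_{\bs}^{(j)} Q_{\bs}(x) + \sum_{i=1}^r a_{\bs}^{(i)} Q_{\bs-\be_i}(x),
\]
substitute the explicit coefficients \eqref{eq:a-b-r}, namely $a_{\bs}^{(i)} = D_{\bs+\be_i}D_{\bs-\be_i}/D_{\bs}^2$ and $b_{\bs}^{(j)} = \tilde{D}_{\bs+\be_j}/D_{\bs+\be_j} - \tilde{D}_{\bs}/D_{\bs}$, and then replace each $Q_{\bs\pm\be_i}$ by $Z_{\bs\pm\be_i}/D_{\bs\pm\be_i}$. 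After multiplying the whole identity by $D_{\bs}^2$, the term $a_{\bs}^{(i)} Q_{\bs-\be_i}$ becomes $D_{\bs+\be_i} Z_{\bs-\be_i}(x)$ (the $D_{\bs-\be_i}$ cancels), giving exactly the sum $\sum_{i=1}^r D_{\bs+\be_i} Z_{\bs-\be_i}(x)$ appearing on the right of \eqref{eq:3-term-r-Z}. The term $Q_{\bs+\be_j}$ becomes $D_{\bs}^2 Z_{\bs+\be_j}/D_{\bs+\be_j} = \frac{D_{\bs}}{D_{\bs+\be_j}} Z_{\bs+\be_j}(x) D_{\bs}$, matching one piece of the first bracket. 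The $b_{\bs}^{(j)}$ term becomes $D_{\bs}\bigl(\tilde{D}_{\bs+\be_j}/D_{\bs+\be_j} - \tilde{D}_{\bs}/D_{\bs}\bigr) Z_{\bs}(x) = \frac{D_{\bs}}{D_{\bs+\be_j}} Z_{\bs}(x)\tilde{D}_{\bs+\be_j} - \tilde{D}_{\bs} Z_{\bs}(x)$, which supplies the remaining piece of the first bracket together with the isolated $-\tilde{D}_{\bs} Z_{\bs}(x)$. The left side $x Q_{\bs}(x) D_{\bs}^2 = x Z_{\bs}(x) D_{\bs}$, as wanted. So the proposition is a direct algebraic rearrangement of \eqref{eq:lin-term-r}, \eqref{eq:a-b-r}, and \eqref{eq:QZD}.

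Alternatively — and this is probably the spirit intended in a ``determinantal approach'' paper — I would give a self-contained determinantal proof that does not invoke \eqref{eq:lin-term-r} as a black box. For this I would expand $x Z_{\bs}(x)$ as the determinant \eqref{eq:Z-r} with last row $(x, x^2, \dots, x^{|\bs|+1})$, i.e.\ the determinant $\bigl| \boldsymbol{\nu}_{\bs,0} \ \dots \ \boldsymbol{\nu}_{\bs,|\bs|} \ ;\ x\ \dots\ x^{|\bs|+1}\bigr|$ (moment columns shifted, bottom row shifted by one power of $x$). The shifted moment column $\boldsymbol{\nu}_{\bs,|\bs|+1}$ does not literally appear in $M_{\bs}$, but in each row-block it is a column of moments $\nu_{\ell}^{(j)}$ for $\ell$ one larger than in the corresponding entry of $M_{\bs}$; the key combinatorial fact is that these ``next'' moment columns are precisely the last moment columns of the matrices $M_{\bs+\be_i}$. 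I would therefore expand $x Z_{\bs}(x) D_{\bs}$ by the generalized Laplace expansion (the first determinantal identity recalled in the Remark), peeling off the last row, and match the cofactors against the determinants $D_{\bs+\be_i}$, $Z_{\bs-\be_i}(x)$, $\tilde{D}_{\bs}$, $\tilde{D}_{\bs+\be_j}$, and $Z_{\bs+\be_j}(x)$ using \eqref{eq:Djk-tD}. The coefficient $D_{\bs}/D_{\bs+\be_j}$ and the combination inside the bracket strongly suggest using the Sylvester-identity consequence \eqref{eq:Djk-tD}, $D_{\bs+\be_j+\be_k}D_{\bs} = \tilde{D}_{\bs+\be_k}D_{\bs+\be_j} - \tilde{D}_{\bs+\be_j}D_{\bs+\be_k}$, together with the expansion \eqref{eq:Q-exp-r} to control the two leading coefficients.

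Given the structure of the paper, I would present the short algebraic derivation as the actual proof — it is clean, uses only results already on the page (\eqref{eq:lin-term-r}, \eqref{eq:a-b-r}, \eqref{eq:QZD}, and implicitly \eqref{eq:Djk-tD} to rewrite $b_{\bs}^{(j)}$), and makes transparent why \eqref{eq:3-term-r-Z} is ``the same'' relation written in determinantal variables that stay finite even at non-normal indices where $D_{\bs+\be_j}$ might vanish. The only subtlety to flag is that \eqref{eq:3-term-r-Z} as written still has $D_{\bs+\be_j}$ in a denominator; one should note that clearing it (multiplying by $D_{\bs+\be_j}$) yields a genuinely polynomial-in-determinants identity, and invoke \eqref{eq:Djk-tD} to rewrite $\frac{D_{\bs}}{D_{\bs+\be_j}}\bigl(Z_{\bs+\be_j}D_{\bs} + Z_{\bs}\tilde{D}_{\bs+\be_j}\bigr)$ in manifestly regular form if desired.

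The main obstacle is essentially bookkeeping: in the purely determinantal route one must correctly identify which shifted moment columns belong to which $M_{\bs\pm\be_i}$ and track all the alternating signs coming from the block structure of $M_{\bs}$ (the row indices $i_1,\dots,i_k$ in the Laplace expansion depend on the partial sums $s_1+\dots+s_j$), and one must handle the last-row entry $1$ versus $x^{|\bs|+1}$ carefully so that the degree-$|\bs|$ and degree-$(|\bs|-1)$ coefficients — governed by \eqref{eq:Q-exp-r} — come out matching $\tilde{D}_{\bs}$ and $\tilde{D}_{\bs+\be_j}$ exactly. In the short algebraic route there is no real obstacle beyond verifying that the $b_{\bs}^{(j)}$ term reproduces both $\frac{D_{\bs}}{D_{\bs+\be_j}}Z_{\bs}\tilde{D}_{\bs+\be_j}$ and $-\tilde{D}_{\bs}Z_{\bs}$, which is immediate from \eqref{eq:a-b-r}.
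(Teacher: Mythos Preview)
Your short algebraic derivation is correct, and the paper explicitly acknowledges it: the authors state that \eqref{eq:3-term-r-Z} ``is just reformulation of \eqref{eq:lin-term-r} using determinantal definitions~\eqref{eq:QZD} and~\eqref{eq:a-b-r}'', which is exactly your computation. So as a proof it is fine.

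However, the paper's actual proof is a genuinely different, self-contained determinantal argument, and your sketched ``alternative'' route does not capture its key idea. The paper introduces an auxiliary polynomial $\tilde{Z}_{\bs}(x)$ (the determinant \eqref{eq:Z-r} with last column replaced by $(\boldsymbol{\nu}_{\bs,|\bs|+1};x^{|\bs|+1})$) and a large block determinant $\mathcal{N}_{\bs}(x)$ of order $2|\bs|+1$, then evaluates $\mathcal{N}_{\bs}(x)$ in two ways: column operations followed by a generalized Laplace expansion give $(-1)^{|\bs|}\bigl(D_{\bs}\tilde{Z}_{\bs}(x)-\tilde{D}_{\bs}Z_{\bs}(x)\bigr)$, while row operations followed by a column Laplace expansion give $(-1)^{|\bs|}\bigl(xZ_{\bs}(x)D_{\bs}-\sum_i D_{\bs+\be_i}Z_{\bs-\be_i}(x)\bigr)$. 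Equating these yields the clean identity $xZ_{\bs}D_{\bs}=\tilde{Z}_{\bs}D_{\bs}-\tilde{D}_{\bs}Z_{\bs}+\sum_i D_{\bs+\be_i}Z_{\bs-\be_i}$, and a separate Sylvester identity $Z_{\bs+\be_j}D_{\bs}=\tilde{Z}_{\bs}D_{\bs+\be_j}-Z_{\bs}\tilde{D}_{\bs+\be_j}$ then eliminates $\tilde{Z}_{\bs}$ to give \eqref{eq:3-term-r-Z}. Your determinantal sketch (expanding $xZ_{\bs}(x)$ directly and matching cofactors) does not introduce either $\tilde{Z}_{\bs}$ or the big auxiliary determinant, and without them the bookkeeping you flag as ``the main obstacle'' does not resolve cleanly. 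The payoff of the paper's approach is that the same two-way-evaluation pattern is reused verbatim for all the later Paszkowski-type and discrete-time identities in Sections~\ref{sec:G3T}--\ref{sec:MTS}; your algebraic route, while shorter here, does not set up that machinery.
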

\begin{proof}
As we mentioned above, equation \eqref{eq:3-term-r-Z} is just reformulation of \eqref{eq:lin-term-r} using determinantal definitions~\eqref{eq:QZD} and~\eqref{eq:a-b-r}. We will however re-derive it in the spirit of the present paper. 
The Sylvester identity applied to the determinant $Z_{\bs+\be_j}(x)$, its two last columns, the ultimate row of the $j$th row-block and the last row gives
\begin{equation} \label{eq:ZD-tt}
Z_{\bs+\be_j}(x) D_{\bs} = \tilde{Z}_{\bs}(x) D_{\bs+\be_j} - Z_{\bs}(x) \tilde{D}_{\bs + \be_j},
\qquad j=1,2,\dots , r,
\end{equation}
where $\tilde{Z}_{\bs}(x)$ is defined in analogy to $\tilde{D}_{\bs}$ as
\begin{equation} \label{eq:tZ}
\tilde{Z}_{\bs}(x)=
\left| \begin{array}{cccccc} \boldsymbol{\nu}_{\bs,0} &  \boldsymbol{\nu}_{\bs,1} & \dots & \boldsymbol{\nu}_{\bs,|\bs|-1}& \boldsymbol{\nu}_{\bs,|\bs|+1} \\
1 & x & \dots & x^{|\bs|-1}& x^{|\bs|+1} 
\end{array}\right|.
\end{equation}

Consider the following determinant of order $(2|\bs|+1)$
\begin{equation} \label{eq:det-lin-Z-D}
\mathcal{N}_{\bs}(x) = \left| \begin{array}{ccccccccc}
\boldsymbol{\nu}_{\bs,0} &  \boldsymbol{\nu}_{\bs,1} & \dots &  \boldsymbol{\nu}_{\bs, |\bs|-1 } &  \boldsymbol{\nu}_{\bs, |\bs| } & & & \boldsymbol{0}^{|\bs|}_{|\bs|}  &  \\
\boldsymbol{\nu}_{\bs,1} &  \boldsymbol{\nu}_{\bs,2} & \dots &  \boldsymbol{\nu}_{\bs, |\bs| } &  \boldsymbol{\nu}_{\bs, |\bs| + 1} &  \boldsymbol{\nu}_{\bs,0} &  \boldsymbol{\nu}_{\bs,1} & \dots &  \boldsymbol{\nu}_{\bs, |\bs| -1 }
\\
x & x^2 &  \dots & x^{|\bs|} & x^{|\bs|+1} & 1 & x & \dots & x^{|\bs|-1}
\end{array}
\right| ,
\end{equation}
where by $\boldsymbol{0}^a_b$ we denote block of zeros with $a$ rows and $b$ columns. Subtract the last $|\bs|-1$ columns from the first ones, what gives
\begin{equation*} \mathcal{N}_{\bs}(x) = 
\left| \begin{array}{cccccccc}
\boldsymbol{\nu}_{\bs,0} &  \dots & \boldsymbol{\nu}_{\bs,|\bs|-2}&  \boldsymbol{\nu}_{\bs, |\bs|-1 } &  \boldsymbol{\nu}_{\bs, |\bs| } & & \boldsymbol{0}^{|\bs|}_{|\bs|} &  \\ &&& 
\boldsymbol{\nu}_{\bs,|\bs|} &  \boldsymbol{\nu}_{\bs,|\bs|+1} &  \boldsymbol{\nu}_{\bs,0} & \dots &  \boldsymbol{\nu}_{\bs, |\bs|-1 } 
\\
&  \boldsymbol{0}^{|\bs|+1}_{|\bs|-1} & &
x^{|\bs|} & x^{|\bs|+1} & 1 &  \dots & x^{|\bs|-1} 
\end{array}
\right| , 
\end{equation*}
and by the generalized Laplace expansion with respect to the first $|\bs|$ rows leads to
\begin{equation} \label{eq:ZD-1}
\mathcal{N}_{\bs}(x) = (-1)^{|\bs|}\bigl( D_{\bs} \tilde{Z}_{\bs}(x) - \tilde{D}_{\bs} Z_{\bs}(x)\bigr).
\end{equation}

On the other side, row operations using first $|\bs|$ rows, applied to the  determinant \eqref{eq:det-lin-Z-D} allow to remove most of rows of the left part of the next $|\bs|$ rows,  except of the last rows of each of the $r$ row-blocks. Those are needed to complete the upper left block of size $|\bs| \times (|\bs| + 1)$ to $D_{\bs + \be_i}$. Then the generalized Laplace expansion with respect the first $(|\bs |+1)$ columns, where each term has to use the first $|\bs|$ rows, gives the sum of $(-1)^{|\bs|} xZ_{\bs}(x) D_{\bs}$ (it uses the last row) and, up to signs, terms of the form $D_{\bs + \be_i} Z_{\bs - \be_i}(x)$. Careful examination of the signs, which are due to both the generalized Laplace expansion formula and the transpositions needed, gives 
\begin{equation} \label{eq:ZD-2}
\mathcal{N}_{\bs}(x) = (-1)^{|\bs|} \Bigl( x Z_{\bs}(x) D_{\bs} - \sum_{i=1}^r D_{\bs + \be_i} Z_{\bs - \be_i}(x)\Bigr).
\end{equation}
Comparison of both expressions \eqref{eq:ZD-1} and \eqref{eq:ZD-2} gives equation
\begin{equation} \label{eq:xZD}
x Z_{\bs}(x) D_{\bs} = \tilde{Z}_{\bs}(x) D_{\bs}  -\tilde{D}_{\bs} Z_{\bs}(x) + \sum_{i=1}^r D_{\bs + \be_i} Z_{\bs - \be_i}(x) .
\end{equation} 
By excluding polynomial $\tilde{Z}_{\bs}(x)$ from the above equation, with the help of identity \eqref{eq:ZD-tt}, we obtain formula \eqref{eq:3-term-r-Z}.
\end{proof}
\begin{Rem}
The crucial part of the proof, i.e.  equation~\eqref{eq:xZD}, is based on calculation of a given determinant in two ways: (i)~we apply generalized row Laplace expansion preceded by certain column operations, (ii) we apply generalized column Laplace expansion preceded by row operations. Analogous patterns of reasoning in proving similar formulas will appear several times in this work.
\end{Rem}

\subsection{Compatibility conditions}
In contrary to the single measure case $r=1$ the coefficients $a^{(j)}_{\bs}$ and $b^{(j)}_{\bs}$ in the system of equations~\eqref{eq:lin-term-r} cannot be given arbitrarily but should satisfy~\cite{VanAsche-nn-mop}  compatibility conditions of the system
	\begin{align} \label{eq:mo-1}
	\quad \; b_{\bs+\be_k}^{(j)} - b_{\bs + \be_j}^{(k)} & = b_{\bs}^{(j)} - b_{\bs}^{(k)},\\  \label{eq:mo-2}
	b_{\bs}^{(k)} b_{\bs + \be_k}^{(j)} - b_{\bs}^{(j)} b_{\bs + \be_j}^{(k)}  & = \sum_{i=1}^r \left( a_{\bs+\be_k}^{(i)} - a_{\bs+\be_j}^{(i)}\right),
	\qquad j\neq k,\\  \label{eq:mo-3}
	a_{\bs + \be_k}^{(j)}( b_{\bs -\be_j}^{(j)} - b_{\bs - \be_j}^{(k)}) & = a_{\bs}^{(j)}(b_{\bs}^{(j)} - b_{\bs}^{(k)}).
	\end{align}
One can check that equation \eqref{eq:mo-1} is immediate consequence of  definition \eqref{eq:a-b-r} of the coefficients, while derivation of \eqref{eq:mo-3} involves in addition also the Sylvester identity~\eqref{eq:Djk-tD}. Determinantal verification of equations~\eqref{eq:mo-2} is more laborious. To make the paper complete, even if the identity is a direct consequence of the previous considerations, we present its derivation below as an exercise in determinantal calculus. The calculation provides also additional motivation to introduce certain functions in the next parts of the Section. 
\begin{Ex} \label{ex:t-d}
Application of the method used in proof of Proposition~\ref{prop-r-term} to the following determinant of order $2|\bs|$	
\begin{equation*} 
\left| \begin{array}{ccccccccc}
\boldsymbol{\nu}_{\bs,0} &  \boldsymbol{\nu}_{\bs,1} & \dots &  \boldsymbol{\nu}_{\bs, |\bs|-1 } &  \boldsymbol{\nu}_{\bs, |\bs|} & & & \boldsymbol{0}^{|\bs|}_{|\bs|-1}  &  \\ \boldsymbol{\nu}_{\bs,1} &  \boldsymbol{\nu}_{\bs,2} & \dots &  \boldsymbol{\nu}_{\bs, |\bs|  } &  \boldsymbol{\nu}_{\bs, |\bs| +1 } &  \boldsymbol{\nu}_{\bs,0} &  \boldsymbol{\nu}_{\bs,1} & \dots &  \boldsymbol{\nu}_{\bs, |\bs| -2}
\end{array}
\right|	,
\end{equation*}
leads to 
\begin{equation}  \label{eq:mo2-p-1}
 \sum_{i=1}^r D_{\bs+\be_i} D_{\bs - \be_i} = D_{\bs} \hat{\tilde{D}}_{\bs} - \tilde{D}^2_{\bs} + D_{\bs} \hat{D}_{\bs},
\end{equation}
where
\begin{align} \label{eq:h-D}
\hat{D}_{\bs} & = \left|\begin{array}{cccccc} \boldsymbol{\nu}_{\bs ,0} &  \boldsymbol{\nu}_{\bs ,1} &\dots & \boldsymbol{\nu}_{\bs ,|\bs|-3} & \boldsymbol{\nu}_{\bs ,|\bs|-1}& \boldsymbol{\nu}_{\bs ,|\bs|} \end{array}
\right|, \\ \label{eq:ht-D}
\hat{\tilde{D}}_{\bs} & = \left|\begin{array}{cccccc} \boldsymbol{\nu}_{\bs ,0} & \boldsymbol{\nu}_{\bs ,1} & \dots& \boldsymbol{\nu}_{\bs ,|\bs|-3}& \boldsymbol{\nu}_{\bs ,|\bs|-2}& \boldsymbol{\nu}_{\bs ,|\bs|+1} \end{array}
\right|.
\end{align} 
Notice that equation \eqref{eq:mo2-p-1} can be obtained by collecting the terms at $x^{|\bs|-1}$ in identity \eqref{eq:xZD}.

An appropriate Sylvester's identity for the
determinant
\begin{equation*} 
\left| \begin{array}{cccccc} \boldsymbol{\nu}_{\bs+\be_j,0} & \dots &  \boldsymbol{\nu}_{\bs+\be_j,|\bs| - 1} & \boldsymbol{\nu}_{\bs+\be_j,|\bs|}& \boldsymbol{\nu}_{\bs+\be_j ,|\bs|+1} \\
0 & \dots & 0 & 0 & 1 
\end{array}\right|, 
\end{equation*}
gives
\begin{equation} \label{eq:mo2-p-2}
D_{\bs+\be_j} \hat{\tilde{D}}_{\bs} = \tilde{D}_{\bs+\be_j} \tilde{D}_{\bs} - D_{\bs} \hat{D}_{\bs+\be_j}.
\end{equation}
Equivalently, equation \eqref{eq:mo2-p-2} can be obtained by collecting the terms at $x^{|\bs|-1}$ in identity \eqref{eq:ZD-tt}.
Both equations \eqref{eq:mo2-p-1} and \eqref{eq:mo2-p-2}, and definition \eqref{eq:a-b-r} imply that
\begin{equation} \label{eq:a-D-hat}
\sum_{i=1}^r \frac{D_{\bs + \be_i} D_{\bs - \be_i}}{D_{\bs}^2} =  \frac{\tilde{D}_{\bs}}{D_{\bs}}  \left( \frac{\tilde{D}_{\bs + \be_j}}{D_{\bs + \be_j}} -  \frac{\tilde{D}_{\bs}}{D_{\bs}}  \right) - \frac{\hat{D}_{\bs + \be_j}}{D_{\bs + \be_j}} + \frac{\hat{D}_{\bs}}{D_{\bs}} , \qquad j=1,\dots , r.
\end{equation}
To conclude determinantal verification of equation \eqref{eq:mo-2} it is enough to use formula \eqref{eq:Djk-tD} and analogous equation
\begin{equation} \label{eq:D-D-hat}
D_{\bs+\be_j + \be_k} \tilde{D}_{\bs} = \hat{D}_{\bs + \be_k} D_{\bs + \be_j} - \hat{D}_{\bs + \be_j} D_{\bs + \be_k},
\end{equation}
which can be easily derived by appropriate Sylvester identity.
\end{Ex}
\begin{Rem}
	Equation \eqref{eq:D-D-hat} can be obtained by collecting the terms at $x^{|\bs|-1}$ in identity \eqref{eq:Z-D-jk}.
\end{Rem}

In analogy to definitions \eqref{eq:h-D} and \eqref{eq:ht-D} of the functions $\hat{D}_{\bs}$ and $\hat{\tilde{D}}_{\bs}$, and the definition \eqref{eq:tZ} of the polynomial $\tilde{Z}_{\bs}(x)$, let us consider the polynomials  
\begin{align} \label{eq:h-Z}
\hat{Z}_{\bs}(x) & = 	\left| \begin{array}{ccccccc} \boldsymbol{\nu}_{\bs,0} &  \boldsymbol{\nu}_{\bs,1} & \dots & \boldsymbol{\nu}_{\bs,|\bs|-2}& \boldsymbol{\nu}_{\bs,|\bs|}& \boldsymbol{\nu}_{\bs,|\bs|+1} \\
1 & x & \dots & x^{|\bs|-2}& x^{|\bs|} & x^{|\bs|+1} 
\end{array}\right|, \\
\hat{\tilde{Z}}_{\bs}(x) & = 	\left| \begin{array}{ccccccc} \boldsymbol{\nu}_{\bs,0} &  \boldsymbol{\nu}_{\bs,1} & \dots & \boldsymbol{\nu}_{\bs,|\bs|-2}& \boldsymbol{\nu}_{\bs,|\bs|-1}& \boldsymbol{\nu}_{\bs,|\bs|+2} \\
1 & x & \dots & x^{|\bs|-2}& x^{|\bs|-1} & x^{|\bs|+2} 
\end{array}\right|. 
\end{align} 
Then application of the method used in proof of Proposition~\ref{prop-r-term} to the following determinant of order $2|\bs|+2$
\begin{equation} \label{eq:det-Z-Z-ht}
\left| \begin{array}{ccccccccc}
\boldsymbol{\nu}_{\bs,0} &  \boldsymbol{\nu}_{\bs,1} & \dots &  \boldsymbol{\nu}_{\bs, |\bs|} &  \boldsymbol{\nu}_{\bs,|\bs| + 1}   &&& &
\\ 1 & x & \dots & x^{|\bs|} & x^{|\bs|+1} & & & \! \! \boldsymbol{0}^{|\bs|+1}_{|\bs|}  & 
\\ \boldsymbol{\nu}_{\bs,1} &  \boldsymbol{\nu}_{\bs,2} & \dots &  \boldsymbol{\nu}_{\bs, |\bs| +1 } &  \boldsymbol{\nu}_{\bs, |\bs| + 2 } & \boldsymbol{\nu}_{\bs,0} & \boldsymbol{\nu}_{\bs,1} & \dots &  \boldsymbol{\nu}_{\bs, |\bs| -1} \\
x & x^2 & \dots & x^{|\bs|+1} & x^{|\bs|+2} & 1 & x &\dots & x^{|\bs|-1}
\end{array}
\right| ,
\end{equation}
gives the following quadratic identity
	\begin{equation}
\label{eq:MDTE-Z-ht}
Z_{\bs}(x) \hat{\tilde{Z}}_{\bs}(x) - \tilde{Z}_{\bs}(x)^2 + Z_{\bs}(x) \hat{Z}_{\bs}(x) = \sum_{i=1}^r 	Z_{\bs+\be_i}(x) 	Z_{\bs-\be_i}(x).
\end{equation}
\begin{Rem}
Equation \eqref{eq:mo2-p-1} can be obtained by collecting the terms at $x^{2|\bs|}$ in the above equation~\eqref{eq:MDTE-Z-ht}.
\end{Rem}
\begin{Com}
We will present another proof of equations~\eqref{eq:mo-1}-\eqref{eq:mo-3} in more general context in Section~\ref{sec:MTS}, where we introduce convenient notation. We also present there more elegant analogs of functions  given above and of the corresponding identities.
\end{Com}

\section{Discrete-time evolution of the measures}
\label{sec:MTS}
In this Section we go out of the initial space of discrete variables by adding suitable evolution variable, what in the simplest case $r=1$ gives the discrete-time Toda equations~\cite{Hirota-2dT}. In the analogous situation of the Pad\'{e} approximation theory, this additional variable allows us to go to the full Pad\'{e} table from its main diagonal~\cite{Doliwa-HP-MP-T}. The resulting formulas are more symmetric and transparent then those introduced in the previous Section, and can be used to study the original multiple orthogonal polynomials.  
\subsection{Basic formulas}
Let us assume that the measures evolve in discrete time variable $t\in \mathbb{N}_0$ according~\cite{AptekarevDerevyaginMikiVanAssche}  to  equation
\begin{equation}
d\mu_{t}^{(j)}(x) = x^t d\mu^{(j)}(x), \qquad t\in \mathbb{N}_0.
\end{equation} 
Their moments read
\begin{equation}
\nu^{(j)}_{k,t} = \int_\RR x^k d\mu^{(j)}_t(x) = \int_\RR x^{k+j} d\mu^{(j)}(x) = \nu^{(j)}_{k+t},
\end{equation}
and allow to find the corresponding canonical multiple orthogonal polynomials
\begin{equation}
Z_{\bs ,t}(x) = \left| \begin{array}{ccccc} \boldsymbol{\nu}_{\bs,t} &  \boldsymbol{\nu}_{\bs,t+1} & \dots &  \boldsymbol{\nu}_{\bs,t + |\bs| - 1} &  \boldsymbol{\nu}_{\bs,t+|\bs|} \\
1 & x & \dots & x^{|\bs|-1} & x^{|\bs|}
\end{array}
\right|,
\end{equation}
and the determinants
\begin{equation}
D_{\bs , t} = \left|  \boldsymbol{\nu}_{\bs,t} , \boldsymbol{\nu}_{\bs,t+1} ,\dots ,  \boldsymbol{\nu}_{\bs,t + |\bs| - 1}
\right|.
\end{equation}

\begin{Com}
The continuous-time evolution of the measures 
\begin{equation}
\label{eq:mu-t-r}
d\mu^{(j)}(t,x) = e^{-xt} d\mu^{(j)}(x), \qquad t \in \RR_+, \qquad j=1,\dots ,r,
\end{equation}
well known in the theory of continuous-time birth and death processes~\cite{LedermannReuter,KarlinMcGregor}, leads~\cite{Moerbeke,Moser,Flaschka}, in the simplest case of $r=1$, to the standard Toda lattice equation~\cite{Toda-TL}. See \cite{AptekarevDerevyaginMikiVanAssche} for presentation of the generalization to the multiple orthogonal $r>1$ case. 
\end{Com}

\begin{Com}
	In \cite{AptekarevDerevyaginMikiVanAssche} the evolution of the measures was presented in more general context of Christoffel and Geronimus transformations of multiple orthogonal polynomials. Instead of multiplication by $x$ at a single evolution step, one can apply multiplication by a factor $(x-\lambda_t)$, where the parameter $\lambda_t$ may depend on $t$. This corresponds to generation of discrete integrable systems by B\"{a}cklund--Darboux transformations~\cite{LeBen,WahlquistEstabrook}. Because our main object of interest is the original system of multiple orthogonal polynomials, we do not need such a generalization. Notice however, that in the context of Hermite--Pad\'{e} approximation problem such a generalization corresponds to transition from approximation to the interpolation; see relevant discussion in~\cite{Doliwa-NAMTS} supplemented by presentation of the corresponding integrable equations together with their particular solutions. We also remark that Mahler in his fundamental paper~\cite{Mahler-P} considered such more general problem.
\end{Com}

For fixed $t$ the new functions $Z_{\bs ,t}(x) $ and $D_{\bs, t}$ satisfy equations \eqref{eq:Z-D-jk}, \eqref{eq:D-ijk-H} and \eqref{eq:Z-ijk-H}. Similarly one can define $Q_{\bs,t}(x)$ and $C^{(ij)}_{\bs ,t}$, and  correspondingly extend equations~\eqref{eq:Q-C-jk}. In fact, all functions and identities considered in previous Sections have their fixed time analogs. We will be however interested in relations between the functions in neighboring moments of the discrete time $t$. These will provide more insight into the multiple orthogonal polynomials and their relation to integrable systems theory.
\begin{Prop}
	When $r\geq 1$ then the canonical multiple polynomials $Z_{\bs,t}(x)$ and the determinants $D_{\bs,t}$ satisfy the following system analogous to \eqref{eq:Z-D-jk}
\begin{equation} \label{eq:Z-D-j-t}
Z_{\bs+\be_j,t}(x) D_{\bs, t+1} = x Z_{\bs, t+1}(x) D_{\bs + \be_j, t} -  Z_{\bs ,t}(x) D_{\bs+\be_j, t+1} , \qquad j=1,\dots r.
\end{equation}
\end{Prop}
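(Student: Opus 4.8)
The plan is to derive \eqref{eq:Z-D-j-t} by a single application of the Sylvester (Desnanot--Jacobi) identity directly to the determinant $Z_{\bs+\be_j,t}(x)$, with no auxiliary bordering. This differs from the proof of \eqref{eq:Z-D-jk}, where an extra zero column was introduced to make room for two shift directions $\be_j$ and $\be_k$; here there is only one discrete direction $\be_j$, and the role of the ``second direction'' is played by the discrete-time shift $t\mapsto t+1$, which is produced simply by dropping either the leftmost or the rightmost moment column of a determinant. Concretely, $Z_{\bs+\be_j,t}(x)$ is the $(|\bs|+2)\times(|\bs|+2)$ determinant with columns $\boldsymbol{\nu}_{\bs+\be_j,t},\boldsymbol{\nu}_{\bs+\be_j,t+1},\dots,\boldsymbol{\nu}_{\bs+\be_j,t+|\bs|+1}$ and bottom (polynomial) row $(1,x,\dots,x^{|\bs|+1})$. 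I would apply Sylvester's identity to it taking, as the two distinguished rows, the last row of the $j$th row-block together with the polynomial row, and, as the two distinguished columns, the first and the last.

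With this choice the six minors occurring in the identity are all identifiable. Removing both distinguished rows and both distinguished columns leaves the $\bs$-block rows over the columns $\boldsymbol{\nu}_{\bs,t+1},\dots,\boldsymbol{\nu}_{\bs,t+|\bs|}$, which is $D_{\bs,t+1}$; removing the last block row of the $j$th block turns that block into the $j$th block of $\bs$. Removing the polynomial row and the last column leaves $D_{\bs+\be_j,t}$, while removing the polynomial row and the first column leaves $D_{\bs+\be_j,t+1}$. Removing the last block row of the $j$th block and the last column leaves the $\bs$-block rows together with the polynomial row $(1,x,\dots,x^{|\bs|})$ over $\boldsymbol{\nu}_{\bs,t},\dots,\boldsymbol{\nu}_{\bs,t+|\bs|}$, i.e.\ $Z_{\bs,t}(x)$. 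The one minor that needs attention is obtained by removing the last block row of the $j$th block and the first column: this leaves the $\bs$-block rows over $\boldsymbol{\nu}_{\bs,t+1},\dots,\boldsymbol{\nu}_{\bs,t+|\bs|+1}$, but the polynomial row is now $(x,x^2,\dots,x^{|\bs|+1})$; factoring $x$ out of that row identifies the minor with $x\,Z_{\bs,t+1}(x)$. This is exactly where the factor $x$ in \eqref{eq:Z-D-j-t} comes from, and recognising it is the only genuinely non-routine point of the argument.

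Assembling the pieces, Sylvester's identity yields $Z_{\bs+\be_j,t}(x)\,D_{\bs,t+1}=\bigl(x Z_{\bs,t+1}(x)\bigr)D_{\bs+\be_j,t}-Z_{\bs,t}(x)\,D_{\bs+\be_j,t+1}$, which is the asserted relation. Since the two distinguished rows are taken in the order (block row, then polynomial row) and the two distinguished columns in the order (first, then last), both index pairs are increasing, so the Desnanot--Jacobi formula of the Remark applies with no extra sign and there is no sign bookkeeping to carry out. The reasoning is uniform in $r\geq 1$, and the degenerate cases pose no problem: if $s_j=0$ the ``last row of the $j$th block'' is simply the single row contributed by $\be_j$, and for $|\bs|=0$ the statement reduces to the $2\times 2$ Dodgson relation applied to $Z_{\be_j,t}(x)$. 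As in the analogous remarks elsewhere in the paper, one can alternatively read off \eqref{eq:Z-D-j-t} by comparing coefficients in a slightly larger determinantal identity, but the direct Sylvester argument above is the most economical.
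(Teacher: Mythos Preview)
Your proof is correct and follows exactly the same route as the paper: the paper's proof is the single sentence ``By application of the Sylvester identity to $Z_{\bs+\be_j,t}(x)$, its first and last columns, the last row of $j$th block and the last row of the whole matrix,'' which is precisely the choice of distinguished rows and columns you describe. Your write-up simply unpacks the identification of the six minors (in particular the factoring of $x$ from the polynomial row after deleting the first column) that the paper leaves implicit.
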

\begin{proof}
	By application of the Sylvester identity to $Z_{\bs+\be_j,t}(x)$, its first and last columns, the last row of $j$th block and the last row of the whole matrix.
\end{proof}
\begin{Cor}
	When the determinants $D_{\bs,t}$, $D_{\bs + \be_j,t}$ and $D_{\bs, t+1 }$ do not vanish then equation~\eqref{eq:Z-D-j-t} gives the corresponding analog of equations~\eqref{eq:Q-C-jk}
	\begin{equation} \label{eq:Q-A-j-t}
	xQ_{\bs,t+1}(x)  = Q_{\bs + \be_j,t}(x) + A^{(j)}_{\bs,t} Q_{\bs,t}(x), \qquad j=1,\dots r,	
	\end{equation}
	where 
	\begin{equation} \label{eq:A-Q-r}
	A^{(j)}_{\bs,t} = \frac{D_{\bs + \be_j,t+1}D_{\bs,t}}{D_{\bs+\be_j , t} D_{\bs ,t+1}}.
	\end{equation}
\end{Cor}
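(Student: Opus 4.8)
The plan is to obtain \eqref{eq:Q-A-j-t} directly from the linear identity \eqref{eq:Z-D-j-t} by dividing it by a suitable product of determinants and re-expressing everything through the monic polynomials $Q_{\bs,t}(x)=Z_{\bs,t}(x)/D_{\bs,t}$, exactly in the way \eqref{eq:Q-C-jk} was extracted from \eqref{eq:Z-D-jk} in the Corollary following equation \eqref{eq:Z-D-jk}. The hypotheses that $D_{\bs,t}$, $D_{\bs+\be_j,t}$ and $D_{\bs,t+1}$ do not vanish are precisely what guarantees that the three polynomials $Q_{\bs,t}(x)$, $Q_{\bs+\be_j,t}(x)$ and $Q_{\bs,t+1}(x)$ appearing in \eqref{eq:Q-A-j-t} are well defined; note that no normality assumption on $\bs+\be_j$ at time $t+1$ is needed, since $D_{\bs+\be_j,t+1}$ enters only through the numerator of the coefficient $A^{(j)}_{\bs,t}$ and never as a denominator.

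Concretely, I would divide both sides of \eqref{eq:Z-D-j-t} by $D_{\bs+\be_j,t}\,D_{\bs,t+1}$. The left-hand side $Z_{\bs+\be_j,t}(x)D_{\bs,t+1}$ becomes $Q_{\bs+\be_j,t}(x)$; the first term on the right, $x Z_{\bs,t+1}(x) D_{\bs+\be_j,t}$, becomes $x Q_{\bs,t+1}(x)$; and the last term, $Z_{\bs,t}(x) D_{\bs+\be_j,t+1}$, becomes $\dfrac{D_{\bs+\be_j,t+1}}{D_{\bs+\be_j,t}D_{\bs,t+1}}Z_{\bs,t}(x)$. Substituting $Z_{\bs,t}(x)=D_{\bs,t}\,Q_{\bs,t}(x)$ into this last expression and moving it to the other side yields
\[
x Q_{\bs,t+1}(x) = Q_{\bs+\be_j,t}(x) + \frac{D_{\bs+\be_j,t+1}\,D_{\bs,t}}{D_{\bs+\be_j,t}\,D_{\bs,t+1}}\,Q_{\bs,t}(x),
\]
which is \eqref{eq:Q-A-j-t} with $A^{(j)}_{\bs,t}$ given by \eqref{eq:A-Q-r}.

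There is essentially no obstacle: all the analytic content sits in the Proposition giving \eqref{eq:Z-D-j-t}, whose determinantal proof via the Sylvester identity has already been carried out, and the present statement is the routine passage from the $(Z,D)$-form of the linear problem to its normalized $Q$-form. If anything, the only point worth emphasizing in the write-up is the bookkeeping of which determinants must be nonzero, so that the reader sees the hypotheses are exactly the minimal ones and that \eqref{eq:A-Q-r} is forced, not chosen.
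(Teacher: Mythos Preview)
Your argument is correct and is exactly the passage the paper has in mind: the Corollary is stated without a separate proof, as the immediate normalization of \eqref{eq:Z-D-j-t} obtained by dividing through by $D_{\bs+\be_j,t}D_{\bs,t+1}$ and using $Q_{\bs,t}=Z_{\bs,t}/D_{\bs,t}$. Your remark that $D_{\bs+\be_j,t+1}$ need not be assumed nonzero is also on point.
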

\begin{Rem}
	Equations \eqref{eq:Q-A-j-t} were obtained in  \cite{AptekarevDerevyaginMikiVanAssche}, using analytic arguments based on the orthogonality relations.
\end{Rem}
\begin{Rem}
	The discrete-time variable $t$ plays here distinguished role. The symmetry with the discrete variables $s_j$ can be restored~\cite{Doliwa-Siemaszko-2} in the context of the Hermite--Pad\'{e} approximation interpretation of the multiple orthogonal polynomials.
\end{Rem}
By application of the Sylvester identity to the determinant $D_{\bs+\be_j+\be_k,t}$, its first and last columns, and the last rows of $j$th block and $k$th blocks we obtain the analog of \eqref{eq:D-ijk-H}
\begin{Prop} \label{prop:DD-t}
	When $r\geq 2$ then the determinants $D_{\bs,t}$ satisfy the following system
	\begin{equation} \label{eq:D-jk-t-H}
	D_{\bs+\be_j + \be_k,t} 	D_{\bs,t+1} = 	D_{\bs+ \be_k,t+1} 	D_{\bs+\be_j,t } - 	D_{\bs+\be_j, t+1} 	D_{\bs+\be_k, t}, \qquad j<k.	
	\end{equation}
\end{Prop}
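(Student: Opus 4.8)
The plan is to obtain \eqref{eq:D-jk-t-H} directly from the Jacobi--Sylvester (Dodgson) condensation identity recalled in the Remark, exactly as indicated in the sentence preceding the statement. First I would write $D_{\bs+\be_j+\be_k,t}$ as the determinant $A$ of the square $(|\bs|+2)\times(|\bs|+2)$ matrix with columns $\boldsymbol{\nu}_{\bs+\be_j+\be_k,t},\dots,\boldsymbol{\nu}_{\bs+\be_j+\be_k,t+|\bs|+1}$, whose row-blocks follow the multi-index $\bs+\be_j+\be_k$; here the $j$th and the $k$th blocks each carry one more row than in the block structure of $\bs$. Then I would apply the condensation identity with the two distinguished rows $i_1<i_2$ being the last rows of the $j$th and of the $k$th row-blocks (these satisfy $i_1<i_2$ precisely because $j<k$) and with the two distinguished columns being the first and the last, $j_1=1<j_2=|\bs|+2$.

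Next I would identify the five minors appearing in $A\,A(i_1,i_2|j_1,j_2)=A(i_1|j_1)A(i_2|j_2)-A(i_1|j_2)A(i_2|j_1)$. The key observations are: deleting the last row of the $j$th (resp. $k$th) block turns the block structure of $\bs+\be_j+\be_k$ into that of $\bs+\be_k$ (resp. $\bs+\be_j$), and deleting both turns it into that of $\bs$; deleting the first column moves the starting moment index from $t$ to $t+1$, whereas deleting the last column leaves it at $t$ but lowers the column count by one. This yields $A(i_1,i_2|j_1,j_2)=D_{\bs,t+1}$, $A(i_1|j_1)=D_{\bs+\be_k,t+1}$, $A(i_2|j_2)=D_{\bs+\be_j,t}$, $A(i_1|j_2)=D_{\bs+\be_k,t}$, $A(i_2|j_1)=D_{\bs+\be_j,t+1}$, and substituting into the condensation identity produces \eqref{eq:D-jk-t-H} verbatim.

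I expect the only delicate point to be the bookkeeping in the last step — keeping straight that it is column deletion that effects the shift $t\mapsto t+1$ and row deletion that decreases the multi-index, and that the sign coming from the Dodgson rule is already the one displayed. This is routine rather than a genuine obstacle; the hypothesis $r\ge 2$ enters only to guarantee that two distinct row-blocks $j<k$ are available. As an alternative I could derive \eqref{eq:D-jk-t-H} from three suitably shifted copies of the linear problem \eqref{eq:Z-D-j-t}, in the spirit of the Comment following \eqref{eq:D-ijk-H}, but the single Sylvester identity above is the most economical route.
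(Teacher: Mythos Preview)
Your proposal is correct and is exactly the approach the paper takes: the paper's one-line proof applies the Sylvester identity to $D_{\bs+\be_j+\be_k,t}$ using its first and last columns and the last rows of the $j$th and $k$th row-blocks, and your write-up spells out precisely this computation, with the minor identifications all correct.
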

\begin{Com}
	Equation \eqref{eq:D-jk-t-H} can be also obtained by combination of two copies of equations \eqref{eq:Z-D-j-t} for indices $j$ and $k$, and of equation \eqref{eq:Z-D-jk}, provided $Z_{\bs,t}(x)$ does not vanish.
\end{Com}
The following results, obvious from the multiple orthogonal polynomials point of view, will be relevant in  the more general context of integrable systems, which we consider in Section~\ref{sec:MOP-I}. They state that the discrete-time evolution given by \eqref{eq:D-jk-t-H} is a symmetry of the linear problem~\eqref{eq:Z-D-jk} and of the Hirota equation~\eqref{eq:D-ijk-H} itself. 

\begin{Com}
	By excluding polynomials in the discrete-time $t+1$ in three copies of equation~\eqref{eq:D-jk-t-H} for pairs of indices $i<j<k$, we obtain the Hirota equation~\eqref{eq:D-ijk-H} for a fixed value of $t$. Analogously one can obtain the same equation, but for time $t+1$.
\end{Com}
\begin{Com}
The linear problem~\eqref{eq:Z-D-jk} in time $t$, and evolution equation~\eqref{eq:Z-D-j-t} imply validity of the linear problem in time $t+1$. 	
\end{Com}

By collecting coefficients at $x^{|\bs|}$ in equation~\eqref{eq:Z-D-j-t} we obtain
\begin{equation} \label{eq:D-tD-t}
\tilde{D}_{\bs+\be_j,t} D_{\bs, t+1} = \tilde{D}_{\bs,t+1} D_{\bs+\be_j, t} + D_{\bs,t} D_{\bs+\be_j, t+1}, \qquad j=1,\dots , r.
\end{equation}
Notice that equation~\eqref{eq:D-tD-t} can be also obtained by application of the Sylvester identity to matrix $M_{\bs+\be_j,t}$ supplemented from below by the row $(0,\dots , 0, 1, 0)$, with the first and the last columns and the last row of $j$th row-block and the last row of the whole matrix. 

\begin{Com}
	Equations \eqref{eq:D-tD-t} and \eqref{eq:D-jk-t-H} imply equations~\eqref{eq:Djk-tD} both in time $t$ as in time $t+1$, i.e. the discrete-time evolution of the functions $\tilde{D}_{\bs,t}$ is consistent with their defining equations~\eqref{eq:Djk-tD}.	
\end{Com}

The following non-linear relation of Hirota's type seems to be new in the context of the multiple orthogonal polynomials.
\begin{Prop}
	When $r\geq 2$ then the polynomials $Z_{\bs,t}(x)$ satisfy the following system
	\begin{equation} \label{eq:Z-jk-t-H}
	Z_{\bs+\be_j + \be_k,t}(x) 	Z_{\bs,t+1}(x) = 	Z_{\bs+ \be_k,t+1}(x) 	Z_{\bs+\be_j,t }(x) - 	Z_{\bs+\be_j, t+1}(x) 	Z_{\bs+\be_k, t}(x), \qquad j<k.	
	\end{equation}
\end{Prop}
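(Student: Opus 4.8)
The plan is to repeat, for the polynomials $Z_{\bs,t}(x)$, the argument that produced the time-dependent Hirota relation \eqref{eq:D-jk-t-H} for the determinants $D_{\bs,t}$. Concretely, I would apply the Sylvester identity to the order-$(|\bs|+3)$ determinant $Z_{\bs+\be_j+\be_k,t}(x)$, singling out its first and its last column together with the last rows of the $j$th and of the $k$th row-blocks. The whole content of the proof is then the identification of the five minors occurring in $A\,A(i_1,i_2|j_1,j_2)=A(i_1|j_1)A(i_2|j_2)-A(i_1|j_2)A(i_2|j_1)$, where $i_1<i_2$ are the two chosen rows and $j_1<j_2$ the first and last columns, so that the order $j<k$ is reproduced. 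Deleting the two chosen rows together with the first and last columns shifts the remaining moment columns by one unit, which turns time-$t$ data into time-$(t+1)$ data, while the appended power row becomes $(x,x^2,\dots,x^{|\bs|+1})$; pulling out the common $x$ identifies this minor with $x\,Z_{\bs,t+1}(x)$. In the same way, deleting the last column keeps the time label $t$ while deleting a block-last-row changes $\be_j+\be_k$ into $\be_j$ or $\be_k$, so the two minors lacking the last column are $Z_{\bs+\be_j,t}(x)$ and $Z_{\bs+\be_k,t}(x)$, whereas deleting the first column advances the time label and again pulls out a factor $x$, giving $x\,Z_{\bs+\be_j,t+1}(x)$ and $x\,Z_{\bs+\be_k,t+1}(x)$.

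Substituting these expressions into Sylvester's identity gives the asserted relation \eqref{eq:Z-jk-t-H} with every term multiplied by $x$. I expect this factor to be the only real subtlety: one has to notice that each of the three products in Sylvester's formula carries exactly one power of $x$ --- on the left it comes from the central minor, and in each product on the right it comes from the one minor that has lost its first column --- so that $x$ cancels and \eqref{eq:Z-jk-t-H} survives as an identity of polynomials. No further sign or rearrangement is needed, since in every minor the block structure and the position of the power row are the standard ones, so each minor equals, up to that factor $x$, one of the determinants $Z_{\cdot,\cdot}(x)$ already introduced.

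As with \eqref{eq:D-jk-t-H}, the relation can alternatively be obtained algebraically, provided the relevant determinants do not vanish: take the two copies of \eqref{eq:Z-D-j-t} given by the substitutions $\bs\mapsto\bs+\be_k$ (for the index $j$) and $\bs\mapsto\bs+\be_j$ (for the index $k$), multiply them by $Z_{\bs+\be_j,t+1}(x)$ and by $Z_{\bs+\be_k,t+1}(x)$ respectively, subtract, and use the linear problem \eqref{eq:Z-D-jk} in time $t+1$ to rewrite the combination of the $Z_{\cdot,t+1}(x)$ that appears; cancelling the common factor $D_{\bs+\be_j+\be_k,t+1}$ then yields \eqref{eq:Z-jk-t-H}. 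Finally, collecting the coefficients of the highest power $x^{2|\bs|+2}$ and using $Z_{\bs,t}(x)=D_{\bs,t}\,x^{|\bs|}+\dots$ recovers the determinantal relation \eqref{eq:D-jk-t-H}.
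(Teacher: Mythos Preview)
Your proposal is correct and follows exactly the paper's own proof: apply the Sylvester identity to the determinant $Z_{\bs+\be_j+\be_k,t}(x)$ with respect to its first and last columns and the last rows of the $j$th and $k$th row-blocks, then cancel the common factor $x$. Your additional remarks---the algebraic derivation via two shifted copies of \eqref{eq:Z-D-j-t} combined with \eqref{eq:Z-D-jk}, and the recovery of \eqref{eq:D-jk-t-H} from the leading coefficients---also match the paper's surrounding comments.
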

\begin{proof}
	Apply the Sylvester identity to $Z_{\bs+\be_j+\be_k,t}(x)$, its first and last columns, and the last rows of $j$th block and $k$th blocks. 
\end{proof}

\begin{Rem}
	Equation \eqref{eq:D-jk-t-H} can be derived from \eqref{eq:Z-jk-t-H} by collecting the highest order term coefficients. 
\end{Rem}
\begin{Com}
	Equation \eqref{eq:Z-jk-t-H} can be also obtained by combination of two copies of (shifted) equations \eqref{eq:Z-D-j-t} for indices $j$ and $k$, and of equation \eqref{eq:Z-D-jk}, provided $D_{\bs,t}(x)$ does not vanish. 
	By excluding shifts in the discrete-time $t$ in three copies of equation~\eqref{eq:Z-jk-t-H} for pairs from the three indices $i<j<k$, we obtain equation~\eqref{eq:Z-ijk-H} for a fixed value of $t$.
\end{Com}

\subsection{Discrete-time Toda equations as the Paszkowski-type relations for multiple orthogonal polynomials}
The present Section is devoted to more complicated equations satisfied by the multiple orthogonal polynomials, which are similar to the multi-term linear equations~\eqref{eq:lin-term-r} but involve also the discrete-time evolution direction. From the point of view of the Hermite--Pad\'{e} approximation problem they correspond to the Paszkowski constraint~\cite{Paszkowski,Baker} and related equations given in~\cite{Doliwa-Siemaszko-2}.

\begin{Prop}
	When $r\geq 1$ then the canonical multiple polynomials $Z_{\bs,t}(x)$ and the determinants $D_{\bs,t}$ satisfy the following system
\begin{equation} \label{eq:Z-D-P-j-t}
Z_{\bs,t+1}(x) D_{\bs, t} = Z_{\bs, t}(x) D_{\bs, t+1} -  \sum_{i=1}^r Z_{\bs - \be_i,t+1}(x) D_{\bs+\be_i, t} .
\end{equation}
\end{Prop}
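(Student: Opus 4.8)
The plan is to follow the strategy used for Proposition~\ref{prop-r-term} (see the Remark after its proof): produce a single auxiliary determinant of order $2|\bs|+1$ and evaluate it in two ways — once by a generalized row Laplace expansion, and once by a generalized column Laplace expansion after suitable elementary row operations. I would take
\begin{equation*}
\mathcal{P}_{\bs,t}(x) = \left| \begin{array}{ccccccccc}
\boldsymbol{\nu}_{\bs,t} & \boldsymbol{\nu}_{\bs,t+1} & \dots & \boldsymbol{\nu}_{\bs,t+|\bs|-1} & \boldsymbol{\nu}_{\bs,t+|\bs|} & & & \boldsymbol{0}^{|\bs|}_{|\bs|} & \\
\boldsymbol{\nu}_{\bs,t+1} & \boldsymbol{\nu}_{\bs,t+2} & \dots & \boldsymbol{\nu}_{\bs,t+|\bs|} & \boldsymbol{\nu}_{\bs,t+|\bs|+1} & \boldsymbol{\nu}_{\bs,t+1} & \boldsymbol{\nu}_{\bs,t+2} & \dots & \boldsymbol{\nu}_{\bs,t+|\bs|} \\
1 & x & \dots & x^{|\bs|-1} & x^{|\bs|} & 1 & x & \dots & x^{|\bs|-1}
\end{array} \right| ,
\end{equation*}
where, as in~\eqref{eq:det-lin-Z-D}, the first two displayed rows denote row-blocks of $|\bs|$ rows: the top block is the moment matrix $M_{\bs,t}$ followed by a zero block, the middle block is the once-time-shifted matrix $M_{\bs,t+1}$ followed by the $|\bs|$ columns $\boldsymbol{\nu}_{\bs,t+1},\dots,\boldsymbol{\nu}_{\bs,t+|\bs|}$ of $D_{\bs,t+1}$, and the last row straddles both halves. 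The point of this layout is that within the lower $|\bs|+1$ rows the $m$-th and $(|\bs|+1+m)$-th columns coincide for $m=1,\dots,|\bs|$; they differ only inside the top block.

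For the first evaluation I would expand along the top $|\bs|$ rows. Because those rows vanish in the last $|\bs|$ columns, the $|\bs|$ selected columns must be drawn from the first $|\bs|+1$; and if any of columns $1,\dots,|\bs|$ is omitted, the complementary minor has two equal columns by the coincidence noted above, so it vanishes. Only the choice omitting column $|\bs|+1$ survives: the top minor is $D_{\bs,t}$ and the complementary minor, after a cyclic reordering of its $|\bs|+1$ columns, equals $Z_{\bs,t+1}(x)$. Hence $\mathcal{P}_{\bs,t}(x) = (-1)^{|\bs|}\, D_{\bs,t}\, Z_{\bs,t+1}(x)$.

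For the second evaluation I would first use the top block $M_{\bs,t}$ to clear, by row operations, the first $|\bs|+1$ entries of all but the last row of each row-block inside the middle block (possible since row $l$ of block $j$ in the middle block agrees on those columns with row $l+1$ of block $j$ in the top block), leaving nonzero left parts only in the last rows $R_1,\dots,R_r$ of the middle blocks and in the bottom $x$-power row, the right halves being untouched. Expanding then along the first $|\bs|+1$ columns, every nonvanishing term must use all of rows $1,\dots,|\bs|$ — otherwise a zero row from the top block appears in the complementary minor — together with exactly one of $R_1,\dots,R_r$ or the bottom row. The bottom-row term gives $Z_{\bs,t}(x)$ times the complementary minor $D_{\bs,t+1}$; the $R_j$-term gives, after moving $R_j$ into its block, $D_{\bs+\be_j,t}$ times the complementary minor $Z_{\bs-\be_j,t+1}(x)$. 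Collecting contributions (all carrying the same overall sign $(-1)^{|\bs|}$) yields $\mathcal{P}_{\bs,t}(x) = (-1)^{|\bs|}\bigl(Z_{\bs,t}(x) D_{\bs,t+1} - \sum_{i=1}^r Z_{\bs-\be_i,t+1}(x) D_{\bs+\be_i,t}\bigr)$, and equating with the first evaluation gives exactly~\eqref{eq:Z-D-P-j-t}. The only genuinely creative step is guessing this determinant; the main technical nuisance, as always in these arguments, is the sign bookkeeping in the second evaluation (the Laplace signs together with the transpositions needed to recognise $D_{\bs+\be_j,t}$ and $Z_{\bs-\be_j,t+1}(x)$), which here fortunately collapses because both evaluations produce the same factor $(-1)^{|\bs|}$. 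A less self-contained alternative would combine the fixed-time identity~\eqref{eq:xZD} with the evolution relations~\eqref{eq:Z-D-j-t} and~\eqref{eq:D-tD-t}.
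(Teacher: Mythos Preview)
Your proposal is correct and follows exactly the approach of the paper: the auxiliary determinant $\mathcal{P}_{\bs,t}(x)$ you write down is precisely the determinant~\eqref{eq:det-Z-D} that the paper uses, and your two evaluations (row Laplace after noting the repeated columns, column Laplace after the row-clearing) are the method of Proposition~\ref{prop-r-term} spelled out in detail. The sign bookkeeping you outline is accurate and indeed collapses to the common factor $(-1)^{|\bs|}$.
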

\begin{proof}
Apply the method used in proof of Proposition~\ref{prop-r-term} to the following determinant of order $2|\bs|+1$
\begin{equation} \label{eq:det-Z-D}
\left| \begin{array}{ccccccccc}
\boldsymbol{\nu}_{\bs,t} &  \boldsymbol{\nu}_{\bs,t+1} & \dots &  \boldsymbol{\nu}_{\bs,t + |\bs|-1 } &  \boldsymbol{\nu}_{\bs,t + |\bs| } & & & \boldsymbol{0}^{|\bs|}_{|\bs|}  & \\\boldsymbol{\nu}_{\bs,t+1} &  \boldsymbol{\nu}_{\bs,t+2} & \dots &  \boldsymbol{\nu}_{\bs,t + |\bs| } &  \boldsymbol{\nu}_{\bs,t + |\bs| + 1} &  \boldsymbol{\nu}_{\bs,t+1} &  \boldsymbol{\nu}_{\bs,t+2} & \dots &  \boldsymbol{\nu}_{\bs,t + |\bs| }  \\
1 & x &  \dots & x^{|\bs|-1} & x^{|\bs|} & 1 & x & \dots & x^{|\bs|-1}
\end{array}
\right| .
\end{equation}
\end{proof}
\begin{Cor}
	When the determinants $D_{\bs,t}$, $D_{\bs,t+1}$ and $D_{\bs - \be_j, t+1 }$, $j=1,\dots ,r$, do not vanish then equation~\eqref{eq:Z-D-P-j-t} leads directly to the one, obtained in \cite{AptekarevDerevyaginMikiVanAssche}	by another technique,
	\begin{equation} \label{eq:Q-B-j-t}
	Q_{\bs,t+1}(x)  = Q_{\bs,t}(x) - \sum_{i=1}^r B^{(i)}_{\bs,t} Q_{\bs - \be_i,t+1}(x),
	\end{equation}
	where 
\begin{equation} \label{eq:B-Q-r}
B^{(j)}_{\bs,t} = \frac{D_{\bs + \be_j,t}D_{\bs - \be_j,t+1}}{D_{\bs, t} D_{\bs ,t+1}}.
\end{equation}
\end{Cor}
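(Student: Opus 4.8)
The plan is to derive \eqref{eq:Q-B-j-t} from the identity \eqref{eq:Z-D-P-j-t}, which has just been established for the canonical polynomials $Z_{\bs,t}(x)$, simply by rescaling to the monic normalisation. Under the stated nonvanishing hypotheses the monic multiple orthogonal polynomials $Q_{\bs,t}$, $Q_{\bs,t+1}$ and $Q_{\bs-\be_i,t+1}$ (the last one for each $i$ with $s_i\geq 1$) are well defined, and the time-dependent analog of \eqref{eq:QZD} gives
\begin{align*}
Z_{\bs,t}(x)&=D_{\bs,t}\,Q_{\bs,t}(x),\qquad Z_{\bs,t+1}(x)=D_{\bs,t+1}\,Q_{\bs,t+1}(x),\\
Z_{\bs-\be_i,t+1}(x)&=D_{\bs-\be_i,t+1}\,Q_{\bs-\be_i,t+1}(x),
\end{align*}
with the convention that a summand with $s_i=0$ is absent --- consistent with the determinantal derivation of \eqref{eq:Z-D-P-j-t}, in which block $i$ then contributes no row to be chosen.

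First I would substitute these expressions into \eqref{eq:Z-D-P-j-t}; the identity then becomes
\begin{equation*}
D_{\bs,t}D_{\bs,t+1}\,Q_{\bs,t+1}(x)=D_{\bs,t}D_{\bs,t+1}\,Q_{\bs,t}(x)-\sum_{i=1}^{r}D_{\bs+\be_i,t}\,D_{\bs-\be_i,t+1}\,Q_{\bs-\be_i,t+1}(x).
\end{equation*}
Dividing through by $D_{\bs,t}D_{\bs,t+1}\neq 0$ then yields \eqref{eq:Q-B-j-t}, with the coefficients read off as $B^{(i)}_{\bs,t}=D_{\bs+\be_i,t}D_{\bs-\be_i,t+1}/(D_{\bs,t}D_{\bs,t+1})$, exactly the expression in \eqref{eq:B-Q-r}.

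There is no substantial obstacle here; the argument is a one-line renormalisation of \eqref{eq:Z-D-P-j-t}. The only points to watch are: (i) nonvanishing is needed only for $D_{\bs,t}$, $D_{\bs,t+1}$ and the determinants $D_{\bs-\be_i,t+1}$, since the factor $D_{\bs+\be_i,t}$ stands in the numerator of $B^{(i)}_{\bs,t}$ and may be zero --- this matches the stated hypothesis exactly; and (ii) the $s_i=0$ convention noted above. As a consistency check one may observe that \eqref{eq:Q-B-j-t} agrees with the relation obtained by analytic, orthogonality-based arguments in~\cite{AptekarevDerevyaginMikiVanAssche}, which confirms that the determinantal approach recovers it with the correct signs.
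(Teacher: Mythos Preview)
Your argument is correct and is exactly what the paper intends: the corollary is stated as following ``directly'' from \eqref{eq:Z-D-P-j-t}, i.e.\ by inserting $Z_{\bs,t}=D_{\bs,t}Q_{\bs,t}$ and dividing by $D_{\bs,t}D_{\bs,t+1}$. Your remarks on which determinants must be nonzero and on the $s_i=0$ convention are accurate and consistent with the paper's setup.
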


The following relation can be obtained by collecting coefficients at $x^{|\bs|-1}$ in equation~\eqref{eq:Z-D-P-j-t}, but in the spirit of our paper we give its determinantal proof.
\begin{Prop} \label{prop:D-tD-t}
	The discrete-time evolution of the determinants is given by
\begin{equation} \label{eq:tD-D-t}
\tilde{D}_{\bs,t+1} D_{\bs,t} - D_{\bs,t+1} \tilde{D}_{\bs,t} = \sum_{i=1}^r D_{\bs + \be_i,t} D_{\bs-\be_i,t+1} .
\end{equation}		
\end{Prop}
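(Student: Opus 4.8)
The plan is to follow the two-step "compute one determinant in two ways" strategy highlighted in the Remark after Proposition~\ref{prop-r-term}, now adapted to the discrete-time setting. Concretely, I would introduce the auxiliary determinant of order $2|\bs|$
\begin{equation*}
\mathcal{M}_{\bs,t} = \left| \begin{array}{ccccccccc}
\boldsymbol{\nu}_{\bs,t} &  \boldsymbol{\nu}_{\bs,t+1} & \dots &  \boldsymbol{\nu}_{\bs,t + |\bs|-1 } & & & \boldsymbol{0}^{|\bs|}_{|\bs|} & &  \\
\boldsymbol{\nu}_{\bs,t+1} &  \boldsymbol{\nu}_{\bs,t+2} & \dots &  \boldsymbol{\nu}_{\bs,t + |\bs| } &  \boldsymbol{\nu}_{\bs,t} &  \boldsymbol{\nu}_{\bs,t+1} & \dots &  \boldsymbol{\nu}_{\bs,t + |\bs| - 1}
\end{array}
\right|,
\end{equation*}
that is, the $2|\bs|\times 2|\bs|$ matrix whose top-left $|\bs|\times|\bs|$ block is $D_{\bs,t+1}$ (shifted once), whose bottom-left block is $D_{\bs,t}$ in its own rows shifted up, and whose top-right block is zero. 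This is the natural "no $x$-row" analog of the determinant \eqref{eq:det-Z-D} appearing in the proof of \eqref{eq:Z-D-P-j-t}; indeed, \eqref{eq:tD-D-t} should drop out of \eqref{eq:Z-D-P-j-t} by reading off the coefficient of $x^{|\bs|-1}$, and the determinantal counterpart of that operation is exactly to delete the bottom row carrying the powers of $x$ and shift the remaining column labels.

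The two evaluations go as follows. First, subtract the last $|\bs|$ columns from the first $|\bs|$ columns; after this column operation the first $|\bs|-1$ of the resulting columns in the bottom block become zero, so the only surviving rows of the bottom-left part, after using the top $|\bs|$ rows to clear, are the last rows of each of the $r$ row-blocks. Applying the generalized Laplace expansion along the first $|\bs|$ rows then yields, up to an overall sign $(-1)^{|\bs|}$, the expression $D_{\bs,t}\tilde{D}_{\bs,t+1} - \tilde{D}_{\bs,t}D_{\bs,t+1}$, exactly as in \eqref{eq:ZD-1}. Second, perform instead row operations within the top $|\bs|$ rows to simplify the left $|\bs|$ columns, and then run the generalized Laplace expansion along the first $|\bs|$ columns: each term must select the top $|\bs|$ rows from those columns, producing a factor $D_{\bs,t+1}$, and completes with a complementary minor which, after restoring the deleted last row of the appropriate $i$th block, is $D_{\bs+\be_i,t}D_{\bs-\be_i,t+1}$; careful bookkeeping of the signs from the Laplace formula and from the transpositions cancels the $(-1)^{|\bs|}$ and gives $\sum_{i=1}^r D_{\bs+\be_i,t}D_{\bs-\be_i,t+1}$. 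Equating the two evaluations yields \eqref{eq:tD-D-t}.

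The main obstacle, as always with these arguments, is the sign bookkeeping in the second evaluation: one has to track simultaneously the signs $(-1)^{\sum i_\ell + \sum j_\ell}$ from the generalized Laplace expansion and the sign of the permutation that moves the reinstated block-row back into position inside $D_{\bs+\be_i,t}$, and then check that the net effect is a uniform $+$ across all $r$ summands — the very structure that makes \eqref{eq:tD-D-t} a genuine Hirota-type relation rather than an alternating one. Since an entirely parallel computation was already carried out in the proof of Proposition~\ref{prop-r-term} (leading to \eqref{eq:xZD}), and since \eqref{eq:tD-D-t} is also recoverable simply by extracting the $x^{|\bs|-1}$ coefficient from \eqref{eq:Z-D-P-j-t} (which provides an independent check), I would present the determinantal derivation compactly, referring back to the pattern established earlier and flagging only the points where the shift structure in $t$ differs.
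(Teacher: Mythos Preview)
Your overall strategy --- compute one auxiliary $2|\bs|\times 2|\bs|$ determinant in two ways, exactly as in Proposition~\ref{prop-r-term} --- is precisely the paper's approach. The gap is in the choice of the auxiliary determinant itself: the matrix $\mathcal{M}_{\bs,t}$ you wrote down is block lower triangular (the top-right $|\bs|\times|\bs|$ block is zero and both diagonal blocks are square), so its determinant is simply
\[
\mathcal{M}_{\bs,t}=\det\bigl(\boldsymbol{\nu}_{\bs,t},\dots,\boldsymbol{\nu}_{\bs,t+|\bs|-1}\bigr)\cdot\det\bigl(\boldsymbol{\nu}_{\bs,t},\dots,\boldsymbol{\nu}_{\bs,t+|\bs|-1}\bigr)=D_{\bs,t}^2.
\]
Neither evaluation can produce a $\tilde D$: the determinant $\tilde D_{\bs,t}$ involves the column $\boldsymbol{\nu}_{\bs,t+|\bs|}$, which is absent from your top block. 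Your own description is internally inconsistent here --- you say the top-left block is ``$D_{\bs,t+1}$ (shifted once)'' and later that the Laplace expansion along the first $|\bs|$ columns ``produces a factor $D_{\bs,t+1}$'', but the columns you wrote give $D_{\bs,t}$. The column subtraction you propose also does not create zeros: in the bottom block your last $|\bs|$ columns are $\boldsymbol{\nu}_{\bs,t},\dots,\boldsymbol{\nu}_{\bs,t+|\bs|-1}$ while the first $|\bs|$ are $\boldsymbol{\nu}_{\bs,t+1},\dots,\boldsymbol{\nu}_{\bs,t+|\bs|}$, which do not cancel.

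The fix is to take the left part one column wider and the right part one column narrower, and to start the bottom-right block at index $t+1$:
\[
\left|\begin{array}{ccccccccc}
\boldsymbol{\nu}_{\bs,t} & \boldsymbol{\nu}_{\bs,t+1} & \dots & \boldsymbol{\nu}_{\bs,t+|\bs|-1} & \boldsymbol{\nu}_{\bs,t+|\bs|} & & & \boldsymbol{0}^{|\bs|}_{|\bs|-1} & \\
\boldsymbol{\nu}_{\bs,t+1} & \boldsymbol{\nu}_{\bs,t+2} & \dots & \boldsymbol{\nu}_{\bs,t+|\bs|} & \boldsymbol{\nu}_{\bs,t+|\bs|+1} & \boldsymbol{\nu}_{\bs,t+1} & \boldsymbol{\nu}_{\bs,t+2} & \dots & \boldsymbol{\nu}_{\bs,t+|\bs|-1}
\end{array}\right|.
\]
Now the top block is $|\bs|\times(|\bs|+1)$, so the row Laplace expansion along it genuinely produces two terms, pairing $D_{\bs,t}$ with $\tilde D_{\bs,t+1}$ and $\tilde D_{\bs,t}$ with $D_{\bs,t+1}$; the alternative column-Laplace evaluation after row operations then gives the sum $\sum_i D_{\bs+\be_i,t}D_{\bs-\be_i,t+1}$ exactly as you outlined. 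This is the determinant the paper uses, and it is also exactly the $t$-version of the determinant in Example~\ref{ex:t-d} (which proves the $t=0$ analogue~\eqref{eq:mo2-p-1}). Your remark that the identity can be independently checked by extracting the $x^{|\bs|-1}$ coefficient from \eqref{eq:Z-D-P-j-t} is correct and is also noted in the paper.
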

\begin{proof}
Apply the method used in proof of Proposition~\ref{prop-r-term} to the following determinant of order $2|\bs|$
\begin{equation*} \label{eq:det-D-D-2}
\left| \begin{array}{ccccccccc}
\boldsymbol{\nu}_{\bs,t} &  \boldsymbol{\nu}_{\bs,t+1} & \dots &  \boldsymbol{\nu}_{\bs,t + |\bs|-1 } &  \boldsymbol{\nu}_{\bs,t + |\bs|} & & & \boldsymbol{0}^{|\bs|}_{|\bs|-1}  &  \\ \boldsymbol{\nu}_{\bs,t+1} &  \boldsymbol{\nu}_{\bs,t+2} & \dots &  \boldsymbol{\nu}_{\bs,t + |\bs|  } &  \boldsymbol{\nu}_{\bs,t + |\bs| +1 } &  \boldsymbol{\nu}_{\bs,t+1} &  \boldsymbol{\nu}_{\bs,t+2} & \dots &  \boldsymbol{\nu}_{\bs,t + |\bs| -1}
\end{array}
\right| .
\end{equation*}
\end{proof}
By the duality between the determinants $Z_{\bs,t}(x)$ and $D_{\bs,t}$  we may expect an analogous quadratic equation involving the polynomials only. 
\begin{Prop}
	When $r\geq 1$ then the canonical multiple polynomials $Z_{\bs,t}(x)$ and the polynomials $\tilde{Z}_{\bs , t}(x)$ satisfy the following equations
	\begin{equation}
	\label{eq:MDTE-ZZ-t}
	Z_{\bs,t}(x) \tilde{Z}_{\bs , t+1}(x)  -	\tilde{Z}_{\bs,t}(x) 	Z_{\bs,t+1 }(x) = \sum_{i=1}^r 	Z_{\bs+\be_i, t}(x) 	Z_{\bs-\be_i, t+1}(x).
	\end{equation}	
\end{Prop}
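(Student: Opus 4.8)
The plan is to obtain \eqref{eq:MDTE-ZZ-t} by the mechanism used throughout this section, and most closely in the proofs of \eqref{eq:MDTE-Z-ht} and \eqref{eq:Z-D-P-j-t}: write down one determinant of order $2|\bs|+2$ and compute it in two complementary ways --- once by a column reduction followed by a generalized row Laplace expansion, once by a row reduction followed by a generalized column Laplace expansion. I would take
\begin{equation*}
\mathcal{N}_{\bs,t}(x) = \left| \begin{array}{ccccccccc}
\boldsymbol{\nu}_{\bs,t} & \boldsymbol{\nu}_{\bs,t+1} & \dots & \boldsymbol{\nu}_{\bs,t+|\bs|} & \boldsymbol{\nu}_{\bs,t+|\bs|+1} & & & & \\
1 & x & \dots & x^{|\bs|} & x^{|\bs|+1} & & & \boldsymbol{0}^{|\bs|+1}_{|\bs|} & \\
\boldsymbol{\nu}_{\bs,t+1} & \boldsymbol{\nu}_{\bs,t+2} & \dots & \boldsymbol{\nu}_{\bs,t+|\bs|+1} & \boldsymbol{\nu}_{\bs,t+|\bs|+2} & \boldsymbol{\nu}_{\bs,t+1} & \boldsymbol{\nu}_{\bs,t+2} & \dots & \boldsymbol{\nu}_{\bs,t+|\bs|} \\
1 & x & \dots & x^{|\bs|} & x^{|\bs|+1} & 1 & x & \dots & x^{|\bs|-1}
\end{array} \right| ,
\end{equation*}
that is, an upper super-row-block consisting of the $|\bs|$ moment rows with columns $\boldsymbol{\nu}_{\bs,t},\dots,\boldsymbol{\nu}_{\bs,t+|\bs|+1}$ closed by the polynomial row $(1,x,\dots,x^{|\bs|+1})$ and then by zeros, together with a lower super-row-block obtained from the same data with all moment indices raised by one, to which a ``tail'' copy of $\boldsymbol{\nu}_{\bs,t+1},\dots,\boldsymbol{\nu}_{\bs,t+|\bs|}$ and its polynomial row $(1,x,\dots,x^{|\bs|-1})$ is appended on the right. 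This is the discrete-time counterpart of the determinant \eqref{eq:det-Z-Z-ht} used to prove \eqref{eq:MDTE-Z-ht}, with the ``hatted'' shifts replaced by the one-step time shift; the point of making the two polynomial rows carry exactly the same powers of $x$ in the first $|\bs|+2$ columns is that the diagonal $\tilde{Z}_{\bs}(x)^{2}$-type contribution present in \eqref{eq:MDTE-Z-ht} then drops out, leaving the purely antisymmetric combination of \eqref{eq:MDTE-ZZ-t}.

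For the first evaluation, subtract for $k=1,\dots,|\bs|$ the $(|\bs|+2+k)$th column from the $k$th one. Because the lower super-block was arranged as a one-step shift of the upper one, both the moment and the polynomial entries then cancel, so the columns $1,\dots,|\bs|$ become supported on the upper super-block alone. In the generalized Laplace expansion along the first $|\bs|+1$ rows only the two choices that drop the last or the penultimate of the first $|\bs|+2$ columns can survive: dropping the last one produces the upper minor $Z_{\bs,t}(x)$ and, after a cyclic rearrangement of $|\bs|$ columns, the complementary minor $\tilde{Z}_{\bs,t+1}(x)$; dropping the penultimate one produces $\tilde{Z}_{\bs,t}(x)$ and $Z_{\bs,t+1}(x)$; and since the two dropped columns are adjacent the two terms enter with opposite signs, whence $\mathcal{N}_{\bs,t}(x) = (-1)^{|\bs|}\bigl(Z_{\bs,t}(x)\tilde{Z}_{\bs,t+1}(x) - \tilde{Z}_{\bs,t}(x) Z_{\bs,t+1}(x)\bigr)$.

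For the second evaluation, use the $|\bs|$ moment rows of the upper super-block to clear by row operations the first $|\bs|+2$ entries of every moment row of the lower super-block except the last row of each of the $r$ row-blocks; these $r$ surviving rows are precisely what completes the upper moment array, together with the upper polynomial row, into the determinant $Z_{\bs+\be_i,t}(x)$. The generalized Laplace expansion along the first $|\bs|+2$ columns then keeps only the terms using all upper rows and exactly one surviving lower row, each contributing, up to a sign, $Z_{\bs+\be_i,t}(x)$ from the first $|\bs|+2$ columns and $Z_{\bs-\be_i,t+1}(x)$ from the remaining $|\bs|$ tail columns; the term that would use both polynomial rows vanishes because they coincide on the first $|\bs|+2$ columns. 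Bookkeeping the signs as in the proof of Proposition~\ref{prop-r-term} gives $\mathcal{N}_{\bs,t}(x) = (-1)^{|\bs|}\sum_{i=1}^{r} Z_{\bs+\be_i,t}(x)Z_{\bs-\be_i,t+1}(x)$, and comparing the two evaluations is exactly \eqref{eq:MDTE-ZZ-t}. As in the earlier arguments nothing here requires the multi-indices to be normal, and collecting the coefficients of $x^{2|\bs|}$ on both sides of \eqref{eq:MDTE-ZZ-t} returns \eqref{eq:tD-D-t}, a convenient consistency check.

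The part I expect to be most delicate, exactly as in the proof of Proposition~\ref{prop-r-term}, is the sign bookkeeping in the second evaluation: for each $i$ the sign of the row permutation identifying the chosen minor with $Z_{\bs+\be_i,t}(x)$, the sign of the column permutation identifying the complementary tail minor with $Z_{\bs-\be_i,t+1}(x)$, and the Laplace factor $(-1)^{\sum\text{rows}+\sum\text{columns}}$ must all combine to the single power $(-1)^{|\bs|}$, independently of $i$. A less computational alternative would be to derive \eqref{eq:MDTE-ZZ-t} by combining the linear problem \eqref{eq:Z-D-j-t}, the evolution \eqref{eq:Z-D-P-j-t} and the Hirota-type identity \eqref{eq:Z-jk-t-H} (or the fixed-$t$ quadratic identity \eqref{eq:MDTE-Z-ht}), in the spirit of the Comments accompanying \eqref{eq:Z-ijk-H} and \eqref{eq:Z-jk-t-H}; but the direct determinantal argument above is shorter and, unlike the combinatorial one, needs no non-vanishing hypotheses on the $D_{\bs,t}$.
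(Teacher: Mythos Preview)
Your proposal is correct and follows essentially the same approach as the paper: the determinant $\mathcal{N}_{\bs,t}(x)$ you write down is exactly the determinant \eqref{eq:det-Z-Z-t} the paper uses, and your two evaluations via column reduction plus row Laplace expansion and row reduction plus column Laplace expansion are precisely ``the method used in proof of Proposition~\ref{prop-r-term}'' that the paper invokes without spelling out. Your explicit treatment of why the term using both polynomial rows vanishes (they coincide on the first $|\bs|+2$ columns), and your remark that collecting the $x^{2|\bs|}$ coefficients recovers \eqref{eq:tD-D-t}, are both in line with the paper's surrounding discussion.
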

\begin{proof}
	Apply the method used in proof of Proposition~\ref{prop-r-term} to the following determinant of order $2|\bs|+2$
	\begin{equation} \label{eq:det-Z-Z-t}
	\left| \begin{array}{ccccccccc}
	\boldsymbol{\nu}_{\bs,t} &  \boldsymbol{\nu}_{\bs,t+1} & \dots &  \boldsymbol{\nu}_{\bs,t + |\bs|} &  \boldsymbol{\nu}_{\bs,t + |\bs|+1}   &&& &
	\\ 1 & x & \dots & x^{|\bs|} & x^{|\bs|+1} & & & \! \! \boldsymbol{0}^{|\bs|+1}_{|\bs|}  & 
	\\ \boldsymbol{\nu}_{\bs,t+1} &  \boldsymbol{\nu}_{\bs,t+2} & \dots &  \boldsymbol{\nu}_{\bs,t + |\bs| +1} &  \boldsymbol{\nu}_{\bs,t + |\bs| +2 } & \boldsymbol{\nu}_{\bs,t+1} & \boldsymbol{\nu}_{\bs,t+2} & \dots &  \boldsymbol{\nu}_{\bs,t + |\bs|} \\
	1 & x & \dots & x^{|\bs|} & x^{|\bs|+1} & 1 & x & \dots & x^{|\bs|-1}
	\end{array}
	\right| .
	\end{equation}
\end{proof}
\begin{Rem}
	The identity \eqref{eq:tD-D-t} can be obtained from \eqref{eq:MDTE-ZZ-t} by collection of its coefficients at $x^{2|\bs|}$.
\end{Rem}

Equation \eqref{eq:Z-D-P-j-t} involves the polynomials in two time steps. It can be considered as analogous of the multi-term constraint~\eqref{eq:3-term-r-Z} or \eqref{eq:xZD} at fixed time. Equation~\eqref{eq:tD-D-t} contains two types of determinants at two time steps. 
The following bilinear equation, stated in~\cite{AptekarevDerevyaginMikiVanAssche}, involves one type of determinants but in three different time steps. 
\begin{Prop}
	When $r\geq 1$ and $t\geq 1$ then the  determinants $D_{\bs,t}$ satisfy equations
\begin{equation} \label{eq:MDTE-D}
D_{\bs,t}^2 = 	D_{\bs,t+1} 	D_{\bs,t-1 } - \sum_{i=1}^r 	D_{\bs+\be_i, t-1} 	D_{\bs-\be_i, t+1},
\end{equation}	
\end{Prop}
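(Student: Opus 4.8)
The plan is to prove the identity \eqref{eq:MDTE-D} by the same two-ways-of-evaluating-a-determinant technique used throughout Section~\ref{sec:MTS}, in particular in the proof of Proposition~\ref{prop-r-term} and of \eqref{eq:tD-D-t}. Concretely, I would write down a single determinant of order $2|\bs|$ built from two overlapping blocks of moment columns at times differing by one step, arranged so that one evaluation (generalized Laplace expansion in the top $|\bs|$ rows after suitable column operations) produces the left-hand side $D_{\bs,t}^2$ together with the ``time-spread'' term $D_{\bs,t+1}D_{\bs,t-1}$, and the other evaluation (Laplace expansion in the first $|\bs|+1$ columns after row operations using the first $|\bs|$ rows) produces the sum $\sum_{i=1}^r D_{\bs+\be_i,t-1}D_{\bs-\be_i,t+1}$ with the appropriate signs.

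More precisely, first I would note that since $t\geq 1$ the moments $\nu^{(j)}_{\bs,t-1}$ are well defined, so the ``backward'' objects $D_{\bs,t-1}$ and $D_{\bs-\be_i,t+1}$ all make sense. Then I would consider the $2|\bs|\times 2|\bs|$ matrix whose first $|\bs|$ rows are $(\boldsymbol{\nu}_{\bs,t},\dots,\boldsymbol{\nu}_{\bs,t+|\bs|-1}\,|\,\boldsymbol{0}^{|\bs|}_{|\bs|})$ and whose last $|\bs|$ rows are $(\boldsymbol{\nu}_{\bs,t-1},\dots,\boldsymbol{\nu}_{\bs,t+|\bs|-2}\,|\,\boldsymbol{\nu}_{\bs,t},\dots,\boldsymbol{\nu}_{\bs,t+|\bs|-1})$, i.e. two copies of a $D$-type block shifted against each other by two moment units with an overlap producing the required cross term. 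Subtracting the last $|\bs|$ columns from the first $|\bs|$ (mimicking the reduction in the proof of Proposition~\ref{prop-r-term}) triangularizes the left block and lets the generalized Laplace expansion with respect to the top $|\bs|$ rows yield, up to the sign $(-1)^{|\bs|}$, the combination $D_{\bs,t-1}D_{\bs,t+1}-D_{\bs,t}^2$. For the other evaluation, row operations using the first $|\bs|$ rows clear out all but the last rows of the $r$ row-blocks on the left part of the lower $|\bs|$ rows; completing each such leftover block to $D_{\bs+\be_i,t-1}$ and expanding by the first $|\bs|$ columns gives, up to signs dictated by the Laplace formula and the requisite transpositions, the sum $\sum_{i=1}^r \pm D_{\bs+\be_i,t-1}D_{\bs-\be_i,t+1}$. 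Equating the two evaluations and cancelling $(-1)^{|\bs|}$ delivers \eqref{eq:MDTE-D}.

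As usual in this circle of arguments, the only genuinely delicate point is the bookkeeping of signs: one must track both the $(-1)^{\sum i_a+\sum j_a}$ factors from the generalized Laplace expansions and the sign changes incurred when shuffling the ``last row of the $i$th block'' past the columns it has to join in order to form $D_{\bs+\be_i,t-1}$, exactly as flagged in the proof of Proposition~\ref{prop-r-term} (``careful examination of the signs, which are due to both the generalized Laplace expansion formula and the transpositions needed''). Everything else is the routine column/row manipulation already exemplified several times above. Alternatively, once \eqref{eq:tD-D-t} and \eqref{eq:D-tD-t} are in hand one can obtain \eqref{eq:MDTE-D} purely algebraically by eliminating the $\tilde D$'s between \eqref{eq:tD-D-t} at time $t$, its shift by one, and \eqref{eq:D-tD-t}; but I would present the direct determinantal derivation to keep the section self-contained and uniform in method.
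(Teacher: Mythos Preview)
Your overall strategy is exactly the paper's: build an auxiliary $2|\bs|\times 2|\bs|$ determinant from two shifted moment blocks and evaluate it in two ways following the pattern of Proposition~\ref{prop-r-term}. However, the specific matrix you wrote down is wrong, and with it the argument collapses to a triviality.

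With your split into $|\bs|+|\bs|$ columns, the top $|\bs|$ rows have nonzero entries only in the first $|\bs|$ columns, so the Laplace expansion along the top rows has a single term: the top minor $|\boldsymbol{\nu}_{\bs,t},\dots,\boldsymbol{\nu}_{\bs,t+|\bs|-1}|=D_{\bs,t}$ times the bottom-right minor $|\boldsymbol{\nu}_{\bs,t},\dots,\boldsymbol{\nu}_{\bs,t+|\bs|-1}|=D_{\bs,t}$. No column operation can change this, so your ``first evaluation'' yields only $D_{\bs,t}^2$, not $D_{\bs,t-1}D_{\bs,t+1}-D_{\bs,t}^2$. Likewise, in the ``second evaluation'' the first $|\bs|$ columns force you to take exactly the top $|\bs|$ rows (any unused top row would give a zero row in the right complement), so no bottom row can be picked up and you never see the terms $D_{\bs+\be_i,t-1}D_{\bs-\be_i,t+1}$. (Incidentally, with your time placement the row operations clear all but the \emph{first} row of each block in the bottom, not the last.)

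The correct auxiliary determinant, the one the paper uses, has the asymmetric column split $(|\bs|+1)+(|\bs|-1)$ and the time shifts reversed:
\[
\left|\begin{array}{ccccccccc}
\boldsymbol{\nu}_{\bs,t-1} & \boldsymbol{\nu}_{\bs,t} & \dots & \boldsymbol{\nu}_{\bs,t+|\bs|-2} & \boldsymbol{\nu}_{\bs,t+|\bs|-1} & & & \boldsymbol{0}^{|\bs|}_{|\bs|-1} & \\
\boldsymbol{\nu}_{\bs,t}   & \boldsymbol{\nu}_{\bs,t+1} & \dots & \boldsymbol{\nu}_{\bs,t+|\bs|-1} & \boldsymbol{\nu}_{\bs,t+|\bs|} & \boldsymbol{\nu}_{\bs,t+1} & \boldsymbol{\nu}_{\bs,t+2} & \dots & \boldsymbol{\nu}_{\bs,t+|\bs|-1}
\end{array}\right|.
\]
Here the extra left column is essential: after subtracting the last $|\bs|-1$ columns from matching left columns, the top-row Laplace expansion has \emph{two} admissible column choices (columns $1,\dots,|\bs|$ giving $D_{\bs,t-1}\cdot D_{\bs,t+1}$, and columns $2,\dots,|\bs|+1$ giving $D_{\bs,t}^2$), while the column-wise expansion after row operations must pick all top rows plus exactly one surviving bottom row, producing $\sum_i D_{\bs+\be_i,t-1}D_{\bs-\be_i,t+1}$. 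Fix the matrix to this form and your write-up goes through verbatim.
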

\begin{proof}
Apply the method used in proof of Proposition~\ref{prop-r-term} to the following determinant of order $2|\bs|$
\begin{equation} \label{eq:det-D-D}
\left| \begin{array}{ccccccccc}
\boldsymbol{\nu}_{\bs,t-1} &  \boldsymbol{\nu}_{\bs,t} & \dots &  \boldsymbol{\nu}_{\bs,t + |\bs|-2 } &  \boldsymbol{\nu}_{\bs,t + |\bs| -1 } & & & \boldsymbol{0}^{|\bs|}_{|\bs|-1}  &  \\ \boldsymbol{\nu}_{\bs,t} &  \boldsymbol{\nu}_{\bs,t+1} & \dots &  \boldsymbol{\nu}_{\bs,t + |\bs| -1 } &  \boldsymbol{\nu}_{\bs,t + |\bs| } &  \boldsymbol{\nu}_{\bs,t+1} &  \boldsymbol{\nu}_{\bs,t+2} & \dots &  \boldsymbol{\nu}_{\bs,t + |\bs| -1}
\end{array}
\right| .
\end{equation}
\end{proof}
\begin{Rem}
	In the simplest case $r=1$ the above equation is known as the discrete-time Toda lattice equation in Hirota's bilinear form~\cite{Hirota-2dT}.
\end{Rem}

The symmetry between the determinants $Z_{\bs,t}(x)$ and $D_{\bs,t}$ suggests an analogous equation on the level of multiple orthogonal polynomials, which was not considered earlier.
\begin{Prop}
	When $r\geq 1$ and $t\geq 1$ then the canonical multiple polynomials $Z_{\bs,t}(x)$ and the determinants $D_{\bs,t}$ satisfy equations
	\begin{equation}
	\label{eq:MDTE-Z}
	Z_{\bs,t}(x)^2  = 	Z_{\bs,t+1}(x) 	Z_{\bs,t-1 }(x) - \sum_{i=1}^r 	Z_{\bs+\be_i, t-1}(x) 	Z_{\bs-\be_i, t+1}(x).
	\end{equation}	
\end{Prop}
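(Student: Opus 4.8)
The plan is to prove~\eqref{eq:MDTE-Z} by the same device that gave its companions~\eqref{eq:MDTE-D} and~\eqref{eq:MDTE-ZZ-t}: one auxiliary determinant is computed in two ways, one evaluation producing the left-hand side of~\eqref{eq:MDTE-Z} and the other its right-hand side. As the auxiliary determinant I would take the one of order $2|\bs|+2$,
\begin{equation*}
\cT_{\bs,t}(x)=\left| \begin{array}{ccccccccc}
\boldsymbol{\nu}_{\bs,t-1} &  \boldsymbol{\nu}_{\bs,t} & \dots &  \boldsymbol{\nu}_{\bs,t + |\bs|-1} &  \boldsymbol{\nu}_{\bs,t + |\bs|}   &&& &
\\ 1 & x & \dots & x^{|\bs|} & x^{|\bs|+1} & & & \! \! \boldsymbol{0}^{|\bs|+1}_{|\bs|}  &
\\ \boldsymbol{\nu}_{\bs,t} &  \boldsymbol{\nu}_{\bs,t+1} & \dots &  \boldsymbol{\nu}_{\bs,t + |\bs|} &  \boldsymbol{\nu}_{\bs,t + |\bs| +1 } & \boldsymbol{\nu}_{\bs,t+1} & \boldsymbol{\nu}_{\bs,t+2} & \dots &  \boldsymbol{\nu}_{\bs,t + |\bs|} \\
x & x^2 & \dots & x^{|\bs|+1} & x^{|\bs|+2} & 1 & x & \dots & x^{|\bs|-1}
\end{array}
\right| ,
\end{equation*}
which is obtained from the determinant~\eqref{eq:det-D-D} used for~\eqref{eq:MDTE-D} by adjoining one polynomial row to each of its two moment row-super-blocks and extending the column ranges accordingly, exactly as the determinant~\eqref{eq:det-Z-Z-t} is built over the one in the proof of Proposition~\ref{prop:D-tD-t}. (A quick check for $r=1$, $\bs=(1)$ is reassuring: the $4\times4$ determinant equals $-Z_{(2),t-1}(x)$, which is at the same time $-Z_{(2),t-1}(x)Z_{(0),t+1}(x)$ and $Z_{(1),t}(x)^2-Z_{(1),t-1}(x)Z_{(1),t+1}(x)$, i.e.\ the two sides of~\eqref{eq:MDTE-Z}.)

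For the first evaluation I would expand $\cT_{\bs,t}(x)$ by the generalized column Laplace expansion along its first $|\bs|+2$ columns. Since the upper row-super-block (the first $|\bs|+1$ rows) is zero in the last $|\bs|$ columns, a choice of $|\bs|+2$ rows contributes only if it consists of the entire upper super-block together with a single row of the lower one; of these, the one in which the extra row is the lower polynomial row $\left(x,x^{2},\dots,x^{|\bs|+2}\right)$ gives a determinant with two proportional polynomial rows (the lower being $x$ times the upper), hence zero, and the one in which the extra row is a moment row of the lower block that is not the last row of its $j$-th row-block gives a determinant in which that row already occurs among the moment rows of the upper block, hence again zero. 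The surviving contributions are precisely those in which the extra row is the last row of the $i$-th row-block (which exists exactly when $s_i\geq1$, matching the range of the sum): owing to the one-step shift $\boldsymbol{\nu}_{\bs,k}\mapsto\boldsymbol{\nu}_{\bs,k+1}$ between the two super-blocks the upper part then completes to $Z_{\bs+\be_i,t-1}(x)$, while the complementary minor in the last $|\bs|$ columns is $Z_{\bs-\be_i,t+1}(x)$. Collecting the Laplace sign and the row transpositions, this gives $\cT_{\bs,t}(x)=-\varepsilon\sum_{i=1}^{r}Z_{\bs+\be_i,t-1}(x)\,Z_{\bs-\be_i,t+1}(x)$ for a definite sign $\varepsilon=\pm1$.

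For the second evaluation I would proceed as in the proof of Proposition~\ref{prop-r-term}: apply to $\cT_{\bs,t}(x)$ the column operations dictated by the same one-step shift between the two moment super-blocks (subtracting from the left-block columns suitable right-block columns together with their obvious polynomial multiples), and then expand the result by the generalized row Laplace expansion along the first $|\bs|+1$ rows. The terms in which an \emph{outermost} column of the upper minor is dropped produce the pieces $Z_{\bs,t-1}(x)$ and $Z_{\bs,t}(x)$ (after clearing a power of $x$) paired with complements equal to $Z_{\bs,t+1}(x)$ and $Z_{\bs,t}(x)$, and once the cancellations among the remaining minors are carried out the whole expansion reorganizes into $\varepsilon\bigl(Z_{\bs,t}(x)^{2}-Z_{\bs,t-1}(x)Z_{\bs,t+1}(x)\bigr)$ with the same $\varepsilon$. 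Equating the two evaluations of $\cT_{\bs,t}(x)$ then yields~\eqref{eq:MDTE-Z}.

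The routine but delicate step — and the one I expect to be the main obstacle — is this second evaluation: pinning down exactly which column operations the shift $\boldsymbol{\nu}_{\bs,k}\mapsto\boldsymbol{\nu}_{\bs,k+1}$ forces, keeping track of the powers of $x$ that must be cleared so that the surviving complements are genuinely $Z_{\bs,t\pm1}(x)$ and $Z_{\bs,t}(x)$, and checking that the non-vanishing minors in the row Laplace expansion cancel correctly, with the overall signs of the two evaluations matching to the common $\varepsilon$. As a byproduct, since $Z_{\bs',t'}(x)=D_{\bs',t'}\,x^{|\bs'|}+\dots$, collecting the coefficient of $x^{2|\bs|}$ in~\eqref{eq:MDTE-Z} recovers~\eqref{eq:MDTE-D}; the argument is uniform in $t\geq1$ and uses no normality assumption on $\bs$, and for $r=1$ equation~\eqref{eq:MDTE-Z} is the polynomial (\emph{wave function}) counterpart of the discrete-time Toda lattice equation in Hirota's bilinear form.
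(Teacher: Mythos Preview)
Your overall strategy---evaluating one auxiliary determinant in two ways---is exactly the paper's, and your first evaluation (column Laplace along the first $|\bs|+2$ columns, observing that all but the ``last-row-of-block'' minors vanish by row duplication/proportionality) is correct and gives $\pm\sum_i Z_{\bs+\be_i,t-1}(x)\,Z_{\bs-\be_i,t+1}(x)$.

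The gap is in the second evaluation, and it comes from your choice of the bottom polynomial row. With your last row $(x,x^2,\dots,x^{|\bs|+2}\,|\,1,x,\dots,x^{|\bs|-1})$, the column operation that kills the lower moment block in columns $2,\dots,|\bs|+1$ (subtracting right column $j$ from left column $j+1$) leaves the lower polynomial entry equal to $x^{j+1}-x^{j-1}\neq 0$. Consequently, in the row Laplace expansion along the first $|\bs|+1$ rows, the ``interior'' choices (omitting a column $c\in\{2,\dots,|\bs|+1\}$) do \emph{not} vanish, and the two ``outermost'' complements are not $Z_{\bs,t+1}(x)$ and $Z_{\bs,t}(x)$: already for $r=1$, $\bs=(1)$ the complement paired with $Z_{\bs,t-1}(x)$ is $\bigl|\begin{smallmatrix}\nu_{t+2}&\nu_{t+1}\\ x^3&1\end{smallmatrix}\bigr|$, not $\pm Z_{(1),t+1}(x)$. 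The identity still holds (the determinant is what it is), but only after cancellations that your argument does not control; the ``obvious polynomial multiples'' you invoke cannot be realised as genuine column operations since they would multiply the moment entries as well.

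The fix is simply to take the paper's last row $(1,x,\dots,x^{|\bs|+1}\,|\,x,x^2,\dots,x^{|\bs|})$ instead. Then subtracting right column $j$ from left column $j+1$ kills \emph{both} the lower moment and the lower polynomial entries in columns $2,\dots,|\bs|+1$, the row Laplace expansion has exactly two surviving terms $xZ_{\bs,t}(x)\cdot Z_{\bs,t}(x)$ and $Z_{\bs,t-1}(x)\cdot xZ_{\bs,t+1}(x)$, and your first evaluation acquires a matching overall factor $x$ (the complement becomes $\pm x\,Z_{\bs-\be_i,t+1}(x)$), which cancels. Everything else you wrote---including the degree check recovering \eqref{eq:MDTE-D}---then goes through unchanged.
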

\begin{proof}
	Apply the method used in proof of Proposition~\ref{prop-r-term} to the following determinant of order $2|\bs|+2$
\begin{equation} \label{eq:det-Z-Z}
\left| \begin{array}{ccccccccc}
\boldsymbol{\nu}_{\bs,t-1} &  \boldsymbol{\nu}_{\bs,t} & \dots &  \boldsymbol{\nu}_{\bs,t + |\bs|-1} &  \boldsymbol{\nu}_{\bs,t + |\bs|}   &&&& 
\\ 1 & x & \dots & x^{|\bs|} & x^{|\bs|+1} & & & \! \! \boldsymbol{0}^{|\bs|+1}_{|\bs|}  & 
\\ \boldsymbol{\nu}_{\bs,t} &  \boldsymbol{\nu}_{\bs,t+1} & \dots &  \boldsymbol{\nu}_{\bs,t + |\bs| } &  \boldsymbol{\nu}_{\bs,t + |\bs| +1 } & \boldsymbol{\nu}_{\bs,t+1} & \boldsymbol{\nu}_{\bs,t+2} & \dots &  \boldsymbol{\nu}_{\bs,t + |\bs|} \\
1 & x & \dots & x^{|\bs|} & x^{|\bs|+1} & x & x^2 & \dots & x^{|\bs|}
\end{array}
\right| .
\end{equation}
\end{proof}
\begin{Rem}
	The identity \eqref{eq:MDTE-D} can be obtained from \eqref{eq:MDTE-Z} by collection of its highest order terms.
\end{Rem}
In order to complete the presentation let us give the corresponding linear equation, which also seems not to be known before.

\begin{Prop}
	When $r\geq 1$ and $t\geq 1$ then the canonical multiple polynomials $Z_{\bs,t}(x)$ and the determinants $D_{\bs,t}$ satisfy the following system
	\begin{equation} \label{eq:Z-D-P-j-t2}
	Z_{\bs,t}(x) D_{\bs, t} = Z_{\bs, t-1}(x) D_{\bs, t+1} -  x \sum_{i=1}^r Z_{\bs - \be_i,t+1}(x) D_{\bs+\be_i, t-1} .
	\end{equation}
\end{Prop}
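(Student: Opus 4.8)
The plan is to prove \eqref{eq:Z-D-P-j-t2} by the determinantal device of this section — the one from the proof of Proposition~\ref{prop-r-term}: exhibit a single determinant and evaluate it in two ways, once by column operations followed by a generalized row Laplace expansion and once by row operations followed by a generalized column Laplace expansion, so that the two answers become the two sides of \eqref{eq:Z-D-P-j-t2}. Since $Z_{\bs,t}(x)\,D_{\bs,t}$ has ``size'' $(|\bs|+1)+|\bs|$, and since \eqref{eq:Z-D-P-j-t2} couples the three times $t-1,t,t+1$ just like \eqref{eq:MDTE-D} and \eqref{eq:MDTE-Z}, the matrix should be of order $2|\bs|+1$ and built — by analogy with \eqref{eq:det-Z-D} (used for the one‑step companion \eqref{eq:Z-D-P-j-t}) and with \eqref{eq:det-D-D}, \eqref{eq:det-Z-Z} — out of the $\bs$‑moment row‑block at time $t-1$, the $\bs$‑moment row‑block at time $t$, and a single monomial row, the time‑$t$ block appearing next to a shifted copy of part of itself; a natural candidate is
\begin{equation*}
\left| \begin{array}{ccccccccc}
\boldsymbol{\nu}_{\bs,t-1} & \boldsymbol{\nu}_{\bs,t} & \dots & \boldsymbol{\nu}_{\bs,t+|\bs|-1} & \boldsymbol{\nu}_{\bs,t+|\bs|} & & & \boldsymbol{0}^{|\bs|}_{|\bs|-1} & \\
1 & x & \dots & x^{|\bs|} & x^{|\bs|+1} & x & x^2 & \dots & x^{|\bs|-1} \\
\boldsymbol{\nu}_{\bs,t} & \boldsymbol{\nu}_{\bs,t+1} & \dots & \boldsymbol{\nu}_{\bs,t+|\bs|} & \boldsymbol{\nu}_{\bs,t+|\bs|+1} & \boldsymbol{\nu}_{\bs,t+1} & \boldsymbol{\nu}_{\bs,t+2} & \dots & \boldsymbol{\nu}_{\bs,t+|\bs|-1}
\end{array}\right| ,
\end{equation*}
the exact ranges of indices and the placement of the monomial powers being left to be pinned down precisely from the requirement that the two evaluations close up.

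First I would perform the column operations — subtracting, in the time‑$t$ block and in the monomial row, the columns of the shifted (rightmost) copy from the matching columns of the left copy — so that a block of $(|\bs|-1)$ columns becomes zero except in the time‑$(t-1)$ block; the generalized row Laplace expansion along those columns then isolates, after the usual collapse of a Laplace sum into a single determinant (exactly as in the derivation of \eqref{eq:mo2-p-1} from \eqref{eq:det-D-D}), the product $Z_{\bs,t}(x)\,D_{\bs,t}$ up to sign. Second I would instead use the time‑$(t-1)$ block to clear the left half of the time‑$t$ block, so that only the last row of each of the $r$ sub‑blocks survives there, and then expand along the first $|\bs|+1$ columns: the term that picks up the monomial row gives $\pm Z_{\bs,t-1}(x)\,D_{\bs,t+1}$, while the term that picks up the last row of the $i$‑th sub‑block gives $\pm x\,Z_{\bs-\be_i,t+1}(x)\,D_{\bs+\be_i,t-1}$ — the factor $x$ arising because the monomial entries that then complete $Z_{\bs-\be_i,t+1}(x)$ run over $x,x^2,\dots$ rather than $1,x,\dots$, which is precisely the first‑column‑deletion mechanism responsible for the factor $x$ in \eqref{eq:Z-D-j-t}. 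Equating the two evaluations gives \eqref{eq:Z-D-P-j-t2}.

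The main obstacle is the bookkeeping, as in the other proofs of the section where the sign analysis is repeatedly left to ``careful examination'': (i) choosing the index ranges and the monomial powers in the determinant so that the two expansions really do collapse to exactly $Z_{\bs,t}(x)D_{\bs,t}$ on one side and $Z_{\bs,t-1}(x)D_{\bs,t+1}$ together with the $x\sum_i$ on the other, with no leftover terms; (ii) checking that every choice of a surviving row other than the last rows of the sub‑blocks (and the single monomial row) yields a repeated row and hence a vanishing minor; and (iii) tracking all the signs from the generalized Laplace formula together with the column and row transpositions needed to bring the minors into the canonical forms of $Z_{\bs\pm\be_i,t\pm1}(x)$ and $D_{\bs\pm\be_i,t\pm1}$. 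A convenient consistency check on the final result: all three terms of \eqref{eq:Z-D-P-j-t2} have degree $|\bs|$ in $x$, and collecting the coefficient of $x^{|\bs|}$, using $Z_{\bs,t}(x)=D_{\bs,t}x^{|\bs|}+\dots$ and $Z_{\bs-\be_i,t+1}(x)=D_{\bs-\be_i,t+1}x^{|\bs|-1}+\dots$, must reproduce the discrete‑time Toda equation \eqref{eq:MDTE-D}. Alternatively, \eqref{eq:Z-D-P-j-t2} ought to be derivable purely algebraically by combining \eqref{eq:Z-D-P-j-t} (shifted by $t\mapsto t-1$), the evolution relation \eqref{eq:Z-D-j-t}, and the linear problem \eqref{eq:Z-D-jk}, in the style of the Coms scattered through this section; that route, however, is less in the spirit of the determinantal treatment adopted here.
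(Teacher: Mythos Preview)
Your approach is exactly the paper's: the proof there is literally ``Apply the method used in proof of Proposition~\ref{prop-r-term} to the following determinant of order $2|\bs|+1$'', and the two evaluations proceed just as you describe, including the mechanism producing the factor $x$ from the monomial row $x,x^2,\dots$ in the right half.

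Your tentative candidate, however, has the wrong left/right split: with $|\bs|+2$ columns on the left and $|\bs|-1$ on the right, the column-operation side does not collapse to the single product $Z_{\bs,t}(x)D_{\bs,t}$ (three left columns survive, giving three terms rather than one). The determinant that works is
\[
\left| \begin{array}{ccccccccc}
\boldsymbol{\nu}_{\bs,t-1} &  \boldsymbol{\nu}_{\bs,t} & \dots &  \boldsymbol{\nu}_{\bs,t + |\bs|-2 } &  \boldsymbol{\nu}_{\bs,t + |\bs| -1 } & & & \boldsymbol{0}^{|\bs|}_{|\bs|}  & \\
\boldsymbol{\nu}_{\bs,t} &  \boldsymbol{\nu}_{\bs,t+1} & \dots &  \boldsymbol{\nu}_{\bs,t + |\bs| -1 } &  \boldsymbol{\nu}_{\bs,t + |\bs| } &  \boldsymbol{\nu}_{\bs,t+1} &  \boldsymbol{\nu}_{\bs,t+2} & \dots &  \boldsymbol{\nu}_{\bs,t + |\bs| }  \\
1 & x &  \dots & x^{|\bs|-1} & x^{|\bs|} & x & x^2 & \dots & x^{|\bs|}
\end{array}
\right| ,
\]
i.e.\ left half of width $|\bs|+1$ with monomials $1,\dots,x^{|\bs|}$, right half of width $|\bs|$ with monomials $x,\dots,x^{|\bs|}$ (and the monomial row placed last, though that is only a sign). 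With this choice both evaluations close up exactly as you outline, and your consistency check via the $x^{|\bs|}$ coefficient recovering \eqref{eq:MDTE-D} is noted in the paper as well.
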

\begin{proof}
	Apply the method used in proof of Proposition~\ref{prop-r-term} to the following determinant of order $2|\bs|+1$
	\begin{equation} \label{eq:det-Z-D-2}
	\left| \begin{array}{ccccccccc}
	\boldsymbol{\nu}_{\bs,t-1} &  \boldsymbol{\nu}_{\bs,t} & \dots &  \boldsymbol{\nu}_{\bs,t + |\bs|-2 } &  \boldsymbol{\nu}_{\bs,t + |\bs| -1 } & & & \! \! \boldsymbol{0}^{|\bs|}_{|\bs|}  & \\\boldsymbol{\nu}_{\bs,t} &  \boldsymbol{\nu}_{\bs,t+1} & \dots &  \boldsymbol{\nu}_{\bs,t + |\bs| -1 } &  \boldsymbol{\nu}_{\bs,t + |\bs| } &  \boldsymbol{\nu}_{\bs,t+1} &  \boldsymbol{\nu}_{\bs,t+2} & \dots &  \boldsymbol{\nu}_{\bs,t + |\bs| }  \\
	1 & x &  \dots & x^{|\bs|-1} & x^{|\bs|} & x & x^2 & \dots & x^{|\bs|}
	\end{array}
	\right| .
	\end{equation}
\end{proof}
\begin{Rem}	
	Equation \eqref{eq:MDTE-D} can be obtained from \eqref{eq:Z-D-P-j-t2} by collecting its highest order terms. 
\end{Rem}

\subsection{The compatibility conditions for the multi-term relation}
By extending the multi-term relation~\eqref{eq:lin-term-r} to allowed values of the discrete-time parameter $t$ we have
\begin{equation} \label{eq:3-term-r-t}
x Q_{\bs,t} (x) = Q_{\bs + \be_j,t}(x) + b_{\bs,t}^{(j)} Q_{\bs,t}(x) + \sum_{k=1}^r a_{\bs,t}^{(k)} Q_{\bs - \be_k,t}(x), \qquad j=1,2,\dots , r,
\end{equation}
where the recurrence coefficients are given there by formulas of the form~\eqref{eq:a-b-r} 
	\begin{equation} \label{eq:a-b-r-t} 
	a_{\bs,t}^{(j)} = \frac{D_{\bs + \be_j,t} D_{\bs - \be_j,t}}{D_{\bs,t}^2}, \qquad 
	b_{\bs,t}^{(j)} = \frac{\tilde{D}_{\bs + \be_j,t}}{D_{\bs + \be_j,t}} - \frac{\tilde{D}_{\bs,t}}{D_{\bs,t}}.
	\end{equation}
Similarly as it was done in~\cite{AptekarevDerevyaginMikiVanAssche}, by excluding the polynomials in time $(t+1)$ from equations~\eqref{eq:Z-D-P-j-t} using relation~\eqref{eq:Z-D-j-t} one can the following expressions of the coefficients such that $a_{\bs,t}^{(j)}$ agrees with the above, but $b_{\bs,t}^{(j)}$ is given by
	\begin{equation} \label{eq:b-D-D-r} 
	b_{\bs,t}^{(j)} = \frac{D_{\bs + \be_j,t+1} D_{\bs,t}}{D_{\bs+\be_j,t}D_{\bs,t+1}} +
	\sum_{i=1}^r \frac{D_{\bs + \be_i,t} D_{\bs-\be_i,t+1}} {D_{\bs,t}D_{\bs,t+1}}.
	\end{equation}
\begin{Rem}
Both expressions \eqref{eq:a-b-r-t} and \eqref{eq:b-D-D-r} for $b_{\bs,t}^{(j)}$ are equal by construction. In the spirit of the present paper, by using equation \eqref{eq:D-tD-t} let us rewrite \eqref{eq:tD-D-t} in the form
	\begin{equation*}
	\frac{\tilde{D}_{\bs + \be_j , t}}{D_{\bs + \be_j, t}} -\frac{\tilde{D}_{\bs , t}}{D_{\bs, t}} = \frac{D_{\bs + \be_j,t+1} D_{\bs,t}}{D_{\bs+\be_j,t}D_{\bs,t+1}} +
	\sum_{i=1}^r \frac{D_{\bs + \be_i,t} D_{\bs-\be_i,t+1}} {D_{\bs,t}D_{\bs,t+1}},
	\end{equation*}
what by Proposition~\ref{prop:D-tD-t} provides determinantal proof of the relation.
\end{Rem}

	Finally, as we promised at the end of Section~\ref{sec:G3T}, let us consider extended version of equations~\eqref{eq:mo-1}-\eqref{eq:mo-3} in discrete-time $t$
	\begin{align} \label{eq:mo-1-t}
	\quad \; b_{\bs+\be_k,t}^{(j)} - b_{\bs + \be_j,t}^{(k)} & = b_{\bs,t}^{(j)} - b_{\bs,t}^{(k)},\\  \label{eq:mo-2-t}
	b_{\bs,t}^{(k)} b_{\bs + \be_k,t}^{(j)} - b_{\bs,t}^{(j)} b_{\bs + \be_j,t}^{(k)}  & = \sum_{i=1}^r \left( a_{\bs+\be_k,t}^{(i)} - a_{\bs+\be_j,t}^{(i)}\right),\\  \label{eq:mo-3-t}
	a_{\bs + \be_k,t}^{(j)}( b_{\bs -\be_j,t}^{(j)} - b_{\bs - \be_j,t}^{(k)}) & = a_{\bs,t}^{(j)}(b_{\bs,t}^{(j)} - b_{\bs,t}^{(k)}).
	\end{align}
Equation~\eqref{eq:mo-2-t}, which was the most difficult to demonstrate in the previous formalism, can be verified with the help of \eqref{eq:b-D-D-r}, where its proof reduces to application of the bilinear Hirota equations in the forms \eqref{eq:D-ijk-H} and \eqref{eq:D-jk-t-H}. Technically, the polynomials $Z_{\bs,t+1}(x)$ and functions $D_{\bs,t+1}$ can be considered as counterparts of $\tilde{Z}_{\bs,t}(x)$ and $\tilde{D}_{\bs,t}$, because both of them are obtained by removing certain columns in corresponding determinants. One can extend this analogy to the polynomials $\hat{Z}_{\bs,t}(x)$ and $\hat{\tilde{Z}}_{\bs,t}(x)$, and to the functions  $\hat{D}_{\bs,t}$ and $\hat{\tilde{D}}_{\bs,t}$ defined in Section~\ref{sec:G3T} for $t=0$. However embedding the identities discussed in there into more general system with discrete-time $t$ clarifies the picture. 

\section{Integrability of the discrete-time multidimensional Toda system} \label{sec:MOP-I}
In the present Section we do not put any restrictions of the range of discrete-space variables $s_j$, $j=1,\dots ,r$, and of the discrete-time variable $t$, i.e. they are allowed to be arbitrary integer numbers. We also abandon determinantal interpretation of the functions involved.

\subsection{The multifield form of the equations}

Let us start from the linear problem composed out of equations~\eqref{eq:Q-A-j-t} and \eqref{eq:Q-B-j-t}, which we repeat for convenience
	\begin{align} \tag{\ref{eq:Q-A-j-t}}
xQ_{\bs,t+1}(x)&  = Q_{\bs + \be_j,t}(x) + A^{(j)}_{\bs,t} Q_{\bs,t}(x), \qquad j=1,\dots r,\\ \tag{\ref{eq:Q-B-j-t}}
	Q_{\bs,t+1}(x) & = Q_{\bs,t}(x) - \sum_{i=1}^r B^{(i)}_{\bs,t} Q_{\bs - \be_i,t+1}(x),
\end{align}
In the general context of the theory of integrable systems  $Q_{\bs,t}(x)$ is called the wave function, and do not has to be polynomial in $x$, which is called then the spectral parameter. Their precise meaning depends on the class of solutions of the corresponding integrable equations we are interested in. For example, for the so-called finite-gap solutions in terms of Riemann theta functions the parameter $x$ belongs to an algebraic curve~\cite{Krichever-alg-geom-discrete}. 

The functions $A^{(j)}_{\bs,t}$ and $B^{(j)}_{\bs,t} $, $j=1,\dots, r$, need to satisfy equations which enable the wave functions to exist for all values of the spectral parameter. 
Compatibility of the equations \eqref{eq:Q-A-j-t} between themselves implies  \cite{AptekarevDerevyaginMikiVanAssche} the following system
\begin{equation} \label{eq:AA-r}
A_{\bs,t+1}^{(j)} - A_{\bs,t+1}^{(k)} = A_{\bs+\be_k,t}^{(j)} - A_{\bs+\be_j,t}^{(k)}, 
\quad A_{\bs+\be_k,t}^{(j)} A_{\bs,t}^{(k)} = A_{\bs+\be_j,t}^{(k)} A_{\bs,t}^{(j)}.
\end{equation}	
When also equations \eqref{eq:Q-B-j-t} are involved then we obtain in addition~\cite{AptekarevDerevyaginMikiVanAssche}
 the following evolution equations 
	\begin{align}
\label{eq:ABC-r}
A_{\bs-\be_j,t+1}^{(j)} B_{\bs,t+1}^{(j)} & = A_{\bs,t}^{(j)} B_{\bs,t}^{(j)}, \\
\label{eq:BAA-r}
B^{(k)}_{\bs,t}\left( A^{(k)}_{\bs,t} - A^{(j)}_{\bs,t} \right) &  = B^{(k)}_{\bs + \be_j,t}\left( A^{(k)}_{\bs-\be_k,t+1} - A^{(j)}_{\bs-\be_k,t+1} \right),\\
 \label{eq:A+B-r}
A_{\bs,t+1}^{(j)} + \sum_{i=1}^r B_{\bs,t+1}^{(i)} & = A_{\bs,t}^{(j)} + \sum_{i=1}^r B_{\bs+\be_j,t}^{(i)}. 
\end{align}
\begin{Com}
	When making the compatibility condition of the above linear system one arrives \cite{AptekarevDerevyaginMikiVanAssche} to the multi-term relation in the form
\begin{equation} \label{eq:lin-term-AB}
	xQ_{\bs,t}(x) = Q_{\bs+\be_j,t}(x) + \left(A^{(j)}_{\bs,t} +  \sum_{i=1}^r B^{(i)}_{\bs,t} \right) Q_{\bs,t}(x) + \sum_{i=1}^r A^{(i)}_{\bs-\be_i,t} B^{(i)}_{\bs,t} \; Q_{\bs-\be_i,t}(x),
	\end{equation}
what gives the following form of the coefficients $a_{\bs,t}^{(j)}$ and $b_{\bs,t}^{(j)}$
\begin{equation} \label{eq:ab-AB-r} 
a_{\bs,t}^{(j)}=A^{(j)}_{\bs-\be_j,t} B^{(j)}_{\bs,t}, \qquad 
b_{\bs,t}^{(j)} = A^{(j)}_{\bs,t} +  \sum_{i=1}^r B^{(i)}_{\bs,t} .
\end{equation}
Then equations \eqref{eq:ABC-r}-\eqref{eq:A+B-r} imply compatibility conditions~\eqref{eq:mo-1-t}-\eqref{eq:mo-3-t} of the multi-term relations~\eqref{eq:lin-term-r} both in time $t$ and time $(t+1)$, as can be checked by direct verification.
\end{Com}
\begin{Rem}
Notice that the verification of equations~\eqref{eq:mo-1}-\eqref{eq:mo-3}, given in  Section~\ref{sec:G3T}, retains its validity for arbitrary value of $t\in\NN_0$. Introduction~\cite{AptekarevDerevyaginMikiVanAssche} of the discrete-time parameter $t$, what is needed to define the coefficients $A^{(j)}_{\bs,t}$ and $B^{(j)}_{\bs,t}$ by \eqref{eq:A-Q-r} and \eqref{eq:B-Q-r}, allows not only to embed the theory of multiple orthogonal polynomials into the theory of integrable systems but also to perform calculations in a more structured way.
\end{Rem}

\begin{Com}
For $r>1$ the first part~\eqref{eq:Q-A-j-t} of the linear system implies equations \eqref{eq:Q-C-jk} in time $t$
	\begin{equation} \label{eq:Q-C-jk-t}
Q_{\bs+\be_j,t}(x) - Q_{\bs+\be_k,t}(x)= C_{\bs,t}^{(jk)}  Q_{\bs,t}(x) , \qquad j<k,
\end{equation}
where
\begin{equation} \label{eq:AC-r-t}
C_{\bs,t}^{(jk)} = A_{\bs,t}^{(k)} - A_{\bs,t}^{(j)}.
\end{equation}
Compatibility of the system \eqref{eq:Q-C-jk-t} with respect to the variables $s_j$, $j=1,\dots ,r$, leads to one of standard forms \cite{Nimmo-NCKP,KNS-rev} of the discrete Kadomtsev--Petviashvili equations
\begin{equation} \label{eq:CC-r}
C_{\bs,t}^{(ij)} + C_{\bs,t}^{(jk)} = C_{\bs,t}^{(ik)}, 
\quad C_{\bs+\be_k,t}^{(ij)} C_{\bs,t}^{(ik)} = C_{\bs+\be_j,t}^{(ik)} C_{\bs,t}^{(ij)}.
\end{equation}		
In our presentation they are consequence of equations~\eqref{eq:AA-r}. However the deeper reason is that the new variable $t$ should be considered as the parameter of a symmetry of the system~\eqref{eq:CC-r}.
	
\end{Com}

\subsection{The $\tau$-function formalism} Among many identities presented in Section~\ref{sec:MTS} we should select the ones which we consider fundamental. They have to recover all the formulas derived above in the integrable systems context, provided we keep the relations \eqref{eq:A-Q-r} and \eqref{eq:B-Q-r} between $A_{\bs,t}^{(j)}$ and $B_{\bs,t}^{(j)}$, and the $\tau$-function $D_{\bs,t}$.

The $\tau$-function counterpart of the linear problem \eqref{eq:Q-A-j-t}, \eqref{eq:Q-B-j-t} are equations \eqref{eq:Z-D-j-t} and \eqref{eq:Z-D-P-j-t}. It is convenient to consider also equation \eqref{eq:Z-D-P-j-t2} and study mutual relations between these three ingredients
\begin{align}
\tag{\ref{eq:Z-D-j-t}}
Z_{\bs+\be_j,t}(x) D_{\bs, t+1} & = x Z_{\bs, t+1}(x) D_{\bs + \be_j, t} -  Z_{\bs ,t}(x) D_{\bs+\be_j, t+1} , \qquad j=1,\dots r, \\
\tag{\ref{eq:Z-D-P-j-t}}
Z_{\bs,t+1}(x) D_{\bs, t} & = Z_{\bs, t}(x) D_{\bs, t+1} -  \sum_{i=1}^r Z_{\bs - \be_i,t+1}(x) D_{\bs+\be_i, t},\\
\tag{\ref{eq:Z-D-P-j-t2}}
Z_{\bs,t}(x) D_{\bs, t} & = Z_{\bs, t-1}(x) D_{\bs, t+1} -  x \sum_{i=1}^r Z_{\bs - \be_i,t+1}(x) D_{\bs+\be_i, t-1} .
\end{align}

The Hermite--Pad\'{e} analog of the additional linear system~\eqref{eq:Z-D-P-j-t2} was introduced by Paszkowski~\cite{Paszkowski}, and studied in~\cite{Doliwa-Siemaszko-2} on the level of integrable systems. For multiple orthogonal polynomials equation~\eqref{eq:Z-D-P-j-t2} can be derived using determinantal identities, as demonstrated in Section~\ref{sec:MTS}. We will show that it leads to a generalization of the multidimensional discrete-time Toda constraint~\eqref{eq:MDTE-D}. Let us first consider the mutual interrelation of equations~\eqref{eq:Z-D-P-j-t} and \eqref{eq:Z-D-P-j-t2}.
\begin{Prop}
	The system composed of \eqref{eq:Z-D-j-t} and \eqref{eq:Z-D-P-j-t} is equivalent to the system composed of \eqref{eq:Z-D-j-t} and \eqref{eq:Z-D-P-j-t2} provided equation~\eqref{eq:MDTE-D} holds.
\end{Prop}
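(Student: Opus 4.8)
The plan is to show that equations \eqref{eq:Z-D-P-j-t} and \eqref{eq:Z-D-P-j-t2} are ``two time shifts apart'' of each other, modulo the Toda constraint \eqref{eq:MDTE-D}, using \eqref{eq:Z-D-j-t} as the bridge. Concretely, I would first take equation \eqref{eq:Z-D-P-j-t} and replace $t$ by $t-1$, obtaining
\begin{equation*}
Z_{\bs,t}(x) D_{\bs,t-1} = Z_{\bs,t-1}(x) D_{\bs,t} - \sum_{i=1}^r Z_{\bs-\be_i,t}(x) D_{\bs+\be_i,t-1}.
\end{equation*}
The goal is to transform this into \eqref{eq:Z-D-P-j-t2}. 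The obstacle is that the two identities have different structure: \eqref{eq:Z-D-P-j-t2} carries a factor $x$ on the sum and uses $Z_{\bs-\be_i,t+1}(x)$ rather than $Z_{\bs-\be_i,t}(x)$, and on the left it has $D_{\bs,t}$ rather than $D_{\bs,t-1}$. So the main step is to eliminate the mismatched objects using \eqref{eq:Z-D-j-t}.

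The key manipulation: apply \eqref{eq:Z-D-j-t} with $\bs$ replaced by $\bs-\be_i$, $j$ replaced by $i$, and $t$ replaced by $t-1$, which gives
\begin{equation*}
Z_{\bs,t-1}(x) D_{\bs-\be_i,t} = x Z_{\bs-\be_i,t}(x) D_{\bs,t-1} - Z_{\bs-\be_i,t-1}(x) D_{\bs,t},
\end{equation*}
and more to the point, the shifted version relating $Z_{\bs-\be_i,t}(x) D_{\bs+\be_i,t-1}$ to $x Z_{\bs-\be_i,t+1}(x) D_{\bs+\be_i,t-1}$ type terms; one wants to express each summand $Z_{\bs-\be_i,t}(x) D_{\bs+\be_i,t-1}$ appearing above in terms of $x Z_{\bs-\be_i,t+1}(x) D_{\bs+\be_i,t-1}$ plus a correction proportional to $Z_{\bs-\be_i,t}(x) D_{\bs+\be_i,t}$ (using \eqref{eq:Z-D-j-t} with appropriate shifts). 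Summing these over $i$ and feeding back into the $(t\to t-1)$ version of \eqref{eq:Z-D-P-j-t}, the sum $\sum_i Z_{\bs-\be_i,t}(x) D_{\bs+\be_i,t}$ reappears and can be re-expressed via \eqref{eq:Z-D-P-j-t} in time $t$ itself as $Z_{\bs,t}(x)D_{\bs,t} - Z_{\bs,t+1}(x)D_{\bs,t}$ (up to a $D$ factor). At this point the polynomial identity in $x$ that remains should close up precisely when the scalar coefficients obey \eqref{eq:MDTE-D}; that is, all occurrences of $Z_{\bs,t}(x)$, $Z_{\bs,t\pm1}(x)$ collect with coefficients that are exactly the left and right sides of \eqref{eq:MDTE-D}, so that \eqref{eq:MDTE-D} is both necessary and sufficient for the equivalence.

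A cleaner route, which I would actually write up, is to argue by linear algebra on the three-dimensional space spanned by the relevant $Z$'s: treat \eqref{eq:Z-D-j-t}, \eqref{eq:Z-D-P-j-t}, \eqref{eq:Z-D-P-j-t2} and \eqref{eq:MDTE-D} as relations among the quantities $Z_{\bs,t-1}(x), Z_{\bs,t}(x), Z_{\bs,t+1}(x)$ and the various $Z_{\bs\pm\be_i,\cdot}(x)$, with $D$'s as coefficients. One direction is to assume \eqref{eq:Z-D-j-t} and \eqref{eq:Z-D-P-j-t} and derive \eqref{eq:Z-D-P-j-t2}: multiply \eqref{eq:Z-D-P-j-t} (in time $t$) by $D_{\bs,t-1}$, multiply the time-$(t-1)$ instance of \eqref{eq:Z-D-P-j-t} by $D_{\bs,t+1}$, subtract, use \eqref{eq:Z-D-j-t} to trade the cross terms, and the difference of the two left-hand sides becomes $Z_{\bs,t}(x)\bigl(D_{\bs,t}D_{\bs,t-1} - \text{stuff}\bigr)$ where ``stuff'' is forced to equal the right side of \eqref{eq:MDTE-D} by matching the $x$-free and $x$-linear parts. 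The converse direction, assuming \eqref{eq:Z-D-j-t} and \eqref{eq:Z-D-P-j-t2} and deriving \eqref{eq:Z-D-P-j-t}, runs the same elimination backwards. The hard part will be bookkeeping: keeping track of which index-shift of \eqref{eq:Z-D-j-t} eliminates which term without introducing new ones, and verifying that the leftover scalar identity is exactly \eqref{eq:MDTE-D} and not merely implied by it; I would organize this by collecting coefficients of $Z_{\bs,t+1}(x)$, $Z_{\bs,t}(x)$, $Z_{\bs,t-1}(x)$ separately, since these are linearly independent as polynomials in $x$ of distinct degrees when the $D$'s are nonzero.
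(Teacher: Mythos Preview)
Your first paragraph is essentially the paper's proof: shift \eqref{eq:Z-D-P-j-t} back in time, use \eqref{eq:Z-D-j-t} with $\bs\to\bs-\be_i$ (and $j=i$) to rewrite each $Z_{\bs-\be_i,t}(x)$ in terms of $xZ_{\bs-\be_i,t+1}(x)$ and a correction, and observe that the resulting identity is exactly $D_{\bs,t}$ times \eqref{eq:Z-D-P-j-t2} once \eqref{eq:MDTE-D} is invoked. The paper does precisely this and obtains the single intermediate identity
\[
Z_{\bs,t}(x)\Bigl(D_{\bs,t+1}D_{\bs,t-1}-\sum_i D_{\bs+\be_i,t-1}D_{\bs-\be_i,t+1}\Bigr)
= D_{\bs,t}\Bigl(Z_{\bs,t-1}(x)D_{\bs,t+1}-x\sum_i Z_{\bs-\be_i,t+1}(x)D_{\bs+\be_i,t-1}\Bigr),
\]
from which both directions follow immediately. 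Your description of the correction term is slightly off: from \eqref{eq:Z-D-j-t} the correction is proportional to $Z_{\bs,t}(x)D_{\bs-\be_i,t+1}$, not to $Z_{\bs-\be_i,t}(x)D_{\bs+\be_i,t}$, and consequently the sum that ``reappears'' is $Z_{\bs,t}(x)\sum_i D_{\bs-\be_i,t+1}D_{\bs+\be_i,t-1}$, which is already the Toda sum --- there is no need to invoke \eqref{eq:Z-D-P-j-t} a second time in the current $t$.

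Two further points. Your ``cleaner route'' is not cleaner: combining two time-shifts of \eqref{eq:Z-D-P-j-t} introduces both $Z_{\bs-\be_i,t}$ and $Z_{\bs-\be_i,t+1}$ simultaneously and requires more eliminations, whereas the single time-shift plus one application of \eqref{eq:Z-D-j-t} per $i$ already closes. More seriously, your final linear-independence argument is incorrect: the polynomials $Z_{\bs,t-1}(x)$, $Z_{\bs,t}(x)$, $Z_{\bs,t+1}(x)$ all have degree $|\bs|$, so they are \emph{not} of distinct degrees and cannot be separated that way. Fortunately no such independence is needed; the derivation above yields a genuine identity in $x$, and comparing it with \eqref{eq:Z-D-P-j-t2} gives \eqref{eq:MDTE-D} as a scalar factor, not as a coefficient match.
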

\begin{proof}
	Shift back equation~\eqref{eq:Z-D-j-t} by $\be_j$, and equation~\eqref{eq:Z-D-P-j-t} one step back in time $t$. Then suitable linear combination, which cancels the sum in \eqref{eq:Z-D-P-j-t} and produces new ones, gives 
	\begin{equation*}
	Z_{\bs,t}(x) \Bigl( 	D_{\bs,t+1} 	D_{\bs,t-1 } - \sum_{i=1}^r 	D_{\bs+\be_i, t-1} 	D_{\bs-\be_i, t+1} \Bigr)  = D_{\bs,t} \Bigl(Z_{\bs, t-1}(x) D_{\bs, t+1} -  x \sum_{i=1}^r Z_{\bs - \be_i,t+1}(x) D_{\bs+\be_i, t-1}\Bigr).
	\end{equation*}	
	The above formula, when compared with equation \eqref{eq:Z-D-P-j-t2}, results in the first part of the statement. The backward implication can be obtained similarly.
\end{proof}

Compatibility of equations~\eqref{eq:Z-D-j-t} between themselves leads, for $r\geq 2$, to the following $\tau$-function version of equations~\eqref{eq:AA-r}
\begin{equation*}
\left(\frac{D_{\bs+ \be_j,t+1} 	D_{\bs+\be_i,t } - 	D_{\bs+\be_i, t+1} 	D_{\bs+\be_j, t}} {
	D_{\bs+\be_j + \be_i,t} 	D_{\bs,t+1}} \right)_{t \to t+1} = \frac{D_{\bs+ \be_j,t+1} 	D_{\bs+\be_i,t } - 	D_{\bs+\be_i, t+1} 	D_{\bs+\be_j, t}} {
	D_{\bs+\be_j + \be_i,t} 	D_{\bs,t+1}} ,
\end{equation*}
where the subscript $t\to t+1$ means that $t$ in the expression in parentheses should be replaced by $t+1$. The above implies the following generalization of equations \eqref{eq:D-jk-t-H}
\begin{equation} \label{eq:D-F-t}
D_{\bs+ \be_j,t+1} 	D_{\bs+\be_i,t } - 	D_{\bs+\be_i, t+1} 	D_{\bs+\be_j, t} = F^{(ij)}_{\bs} D_{\bs+\be_j + \be_i,t} 	D_{\bs,t+1},
\end{equation}
where $F^{(ij)}_{\bs} = - F^{(ji)}_{\bs}$ is a function of the space variables $\bs=(s_1, \dots , s_r)$ only. This, by \eqref{eq:Z-D-j-t} gives in turn the following generalization of the linear problem~\eqref{eq:Z-D-jk}
\begin{equation} \label{eq:Z-D-jk-F}
Z_{\bs+\be_j,t}(x) D_{\bs + \be_k,t} - Z_{\bs+\be_k,t}(x) D_{\bs + \be_j,t}=  F^{(jk)}_{\bs} Z_{\bs,t}(x) D_{\bs+\be_j + \be_k,t},
\end{equation}
and then the corresponding generalized Hirota equations
\begin{equation} \label{eq:D-ijk-F}
F^{(ij)}_{\bs} D_{\bs+\be_i + \be_j,t} D_{\bs + \be_k,t} - F^{(ik)}_{\bs} D_{\bs+\be_i + \be_k,t} D_{\bs + \be_j,t} + F^{(jk)}_{\bs} D_{\bs+\be_j + \be_k,t} D_{\bs + \be_i,t} = 0, \quad i<j<k.
\end{equation}

The result below implies that, in generic situation, there exists suitable gauge of the wave function $Z_{\bs,t}(x)$ and of the $\tau$-function $D_{\bs,t}$ which allows to go back to the simpler original form of the equations \eqref{eq:D-F-t},  \eqref{eq:Z-D-jk-F} and \eqref{eq:D-ijk-F}.  
\begin{Prop}
For non-vanishing functions $F^{(ij)}_{\bs}$, what we assume in this Section, there exists gauge function $\sigma_{\bs}$ such that the rescaled wave function  and the $\tau$-function 
\begin{equation} \label{eq:Z-D-sigma}
\mathcal{Z}_{\bs,t}(x) = Z_{\bs ,t}(x) \sigma_{\bs}, \qquad \mathcal{D}_{\bs ,t} = D_{\bs , t} \sigma_{\bs},
\end{equation}
satisfy the original versions \eqref{eq:Z-D-jk} and \eqref{eq:D-ijk-H} of the linear problem and the Hirota system in time $t$. Moreover, the modification does not change the linear equations~\eqref{eq:Z-D-j-t}, and brings back generalized equation~\eqref{eq:D-F-t} to its canonical form~\eqref{eq:D-jk-t-H}.	
\end{Prop}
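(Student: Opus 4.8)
The plan is to pin down the single functional equation that $\sigma_{\bs}$ must obey, to prove that equation is solvable, and then to read off each claim of the proposition by substituting \eqref{eq:Z-D-sigma} into the deformed equations.

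First I would dispose of the claim that uses nothing about $\sigma_{\bs}$ beyond its independence of $t$: in \eqref{eq:Z-D-j-t} every monomial is a product of one factor with argument $\bs$ and one factor with argument $\bs+\be_j$, so $Z_{\bs,t}\mapsto\mathcal{Z}_{\bs,t}$, $D_{\bs,t}\mapsto\mathcal{D}_{\bs,t}$ multiplies the whole relation by the common scalar $\sigma_{\bs}\sigma_{\bs+\be_j}$, which cancels. Next I would substitute \eqref{eq:Z-D-sigma} into \eqref{eq:Z-D-jk-F}, \eqref{eq:D-F-t} and \eqref{eq:D-ijk-F}. In \eqref{eq:Z-D-jk-F} and \eqref{eq:D-F-t} the monomials with no explicit coefficient are products of a factor at $\bs+\be_a$ and a factor at $\bs+\be_b$, while the monomial carrying $F^{(ab)}_{\bs}$ is a product of a factor at $\bs$ and one at $\bs+\be_a+\be_b$; in \eqref{eq:D-ijk-F} each monomial has the shape $F^{(ab)}_{\bs}\,(\text{a factor at }\bs+\be_a+\be_b)\,(\text{a factor at }\bs+\be_c)$ with $\{a,b,c\}=\{i,j,k\}$. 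A short inspection shows that in every case the $\sigma$'s assemble into one overall multiplier that cancels precisely when
\begin{equation*}
F^{(jk)}_{\bs}\,\sigma_{\bs+\be_j}\,\sigma_{\bs+\be_k}=\sigma_{\bs}\,\sigma_{\bs+\be_j+\be_k},\qquad j<k;
\end{equation*}
call this relation $(\star)$. Under $(\star)$ the three deformed equations turn into their undeformed counterparts \eqref{eq:Z-D-jk}, \eqref{eq:D-jk-t-H} and \eqref{eq:D-ijk-H} for $\mathcal{Z}_{\bs,t},\mathcal{D}_{\bs,t}$; the Hirota system \eqref{eq:D-ijk-H} could also be recovered, once \eqref{eq:Z-D-jk} holds for $\mathcal{Z},\mathcal{D}$, from the three copies of the linear problem mentioned after \eqref{eq:D-ijk-H}.

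So everything reduces to solving $(\star)$. For $r=2$ this is immediate: prescribe $\sigma_{\bs}$ freely on the two coordinate axes and propagate by $(\star)$. For $r\geq 3$ I would pass to the ratios $\psi^{(k)}_{\bs}:=\sigma_{\bs+\be_k}/\sigma_{\bs}$, under which $(\star)$ becomes $\psi^{(a)}_{\bs+\be_b}=F^{(\{a,b\})}_{\bs}\psi^{(a)}_{\bs}$ for all $a\neq b$, where $F^{(\{a,b\})}$ is the coefficient whose two indices are written in increasing order. Each $\psi^{(a)}$ can then be built by iterated one-dimensional summations over $\ZZ^{r}$ provided the discrete flatness condition
\begin{equation*}
\frac{F^{(ij)}_{\bs+\be_k}}{F^{(ij)}_{\bs}}=\frac{F^{(ik)}_{\bs+\be_j}}{F^{(ik)}_{\bs}},\qquad i,j,k\ \text{pairwise distinct},
\end{equation*}
holds; the resulting family $(\psi^{(1)},\dots,\psi^{(r)})$ then automatically satisfies $\psi^{(a)}_{\bs}\psi^{(b)}_{\bs+\be_a}=\psi^{(b)}_{\bs}\psi^{(a)}_{\bs+\be_b}$, which is exactly the condition that it be $(\sigma_{\bs+\be_1}/\sigma_{\bs},\dots,\sigma_{\bs+\be_r}/\sigma_{\bs})$ for some $\sigma_{\bs}$. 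To establish the flatness condition I would write \eqref{eq:Z-D-jk-F} in the discrete Kadomtsev--Petviashvili form $Q_{\bs+\be_i,t}(x)-Q_{\bs+\be_j,t}(x)=\tilde C^{(ij)}_{\bs,t}\,Q_{\bs,t}(x)$ with $\tilde C^{(ij)}_{\bs,t}=F^{(ij)}_{\bs}D_{\bs+\be_i+\be_j,t}D_{\bs,t}/(D_{\bs+\be_i,t}D_{\bs+\be_j,t})$, which is of the same type as the linear problem whose compatibility is recorded in \eqref{eq:CC-r}; that compatibility gives $\tilde C^{(ij)}_{\bs+\be_k}\tilde C^{(ik)}_{\bs}=\tilde C^{(ik)}_{\bs+\be_j}\tilde C^{(ij)}_{\bs}$, and in this relation the $D$-dependent factors cancel identically, leaving exactly the flatness condition above. (Equivalently it follows from the consistency of \eqref{eq:Z-D-jk-F} around an elementary $\be_i,\be_j,\be_k$-cube; the companion relation of \eqref{eq:CC-r} yields nothing new, being just \eqref{eq:D-ijk-F} after clearing denominators.) I expect the main obstacle to be precisely this flatness computation --- the cancellation of the $D$-factors and the sign bookkeeping $F^{(ij)}_{\bs}=-F^{(ji)}_{\bs}$ --- while the subsequent assembly of a single $\sigma_{\bs}$ on all of $\ZZ^{r}$ is routine, resting on the discrete contractibility of $\ZZ^{r}$; with $\sigma_{\bs}$ in hand, all remaining assertions follow from the substitutions described above.
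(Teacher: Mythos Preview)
Your proposal is correct and follows essentially the same route as the paper: both reduce the proposition to the single functional equation $F^{(jk)}_{\bs}\sigma_{\bs+\be_j}\sigma_{\bs+\be_k}=\sigma_{\bs}\sigma_{\bs+\be_j+\be_k}$, derive the flatness condition \eqref{eq:FF} from the compatibility of the deformed linear problem \eqref{eq:Z-D-jk-F}, and then integrate in two stages via intermediate potentials (the paper's $\rho^{(i)}_{\bs}$ are your $\psi^{(i)}_{\bs}$ up to the sign factor $(-1)^{\sum_{j<i}s_j}$). Your write-up is somewhat more explicit than the paper's about why each substitution gives a common multiplier and about the closedness condition needed to pass from the $\psi^{(a)}$ to a single $\sigma_{\bs}$, but the underlying argument is the same.
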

\begin{proof}	
	For $r\geq 3$ we can consider compatibility conditions of the above system~\eqref{eq:Z-D-jk-F}, the counterpart of equations~\eqref{eq:CC-r}, what gives
	\begin{equation} \label{eq:FF}
	F^{(ij)}_{\bs+\be_k} F^{(ik)}_{\bs} = F^{(ik)}_{\bs+\be_j} F^{(ij)}_{\bs}.
	\end{equation}
	For generic non-zero functions $F^{(ij)}_{\bs}$, what we assume in this Section, this implies existence of the potentials $\rho^{(i)}_{\bs}$ such that
	\begin{equation} \label{eq:F-rho}
	F^{(ij)}_{\bs} = \frac{\rho^{(i)}_{\bs + \be_j}}{\rho^{(i)}_{\bs}}.
	\end{equation}
	Moreover, the condition $F^{(ij)}_{\bs} = - F^{(ji)}_{\bs}$ implies existence of a potential $\sigma_{\bs}$ such that
	\begin{equation} \label{eq:rho-sigma}
	\rho^{(i)}_{\bs} = (-1)^{\sum_{j<i} s_j}\frac{\sigma_{\bs+\be_i}}{\sigma_{\bs}}.
	\end{equation}
When $r=2$ then we do not have equations~\eqref{eq:FF}, but we can use equations \eqref{eq:F-rho} and \eqref{eq:rho-sigma} to define the potentials.
By direct calculation one can check that the new wave function $\mathcal{Z}_{\bs,t}(x)$ and new $\tau$-function  $ \mathcal{D}_{\bs ,t}$ have properties described in the statement.	
\end{proof}
\begin{Cor}
	The corresponding versions of equations~\eqref{eq:Z-D-P-j-t} and \eqref{eq:Z-D-P-j-t2} change their forms to
\begin{align}
\mathcal{Z}_{\bs,t+1}(x) \mathcal{D}_{\bs, t} & = \mathcal{Z}_{\bs, t}(x) \mathcal{D}_{\bs, t+1} -  \sum_{i=1}^r \frac{\sigma_{\bs}^2}{\sigma_{\bs+\be_i}\sigma_{\bs-\be_i}} \mathcal{Z}_{\bs - \be_i,t+1}(x) \mathcal{D}_{\bs+\be_i, t},\\
\mathcal{Z}_{\bs,t}(x) \mathcal{D}_{\bs, t} & = \mathcal{Z}_{\bs, t-1}(x) \mathcal{D}_{\bs, t+1} -  x \sum_{i=1}^r \frac{\sigma_{\bs}^2}{\sigma_{\bs+\be_i}\sigma_{\bs-\be_i}} \mathcal{Z}_{\bs - \be_i,t+1}(x) \mathcal{D}_{\bs+\be_i, t-1} ,
\end{align}
respectively. However, when the functions $F^{(ij)}_{\bs}$ are constants
\begin{equation}
F^{(ij)}_{\bs} = F^{(ij)},
\end{equation}
then 
\begin{equation}
\rho^{(i)}_{\bs} = \prod_{j\neq i}\left( F^{(ij)} \right)^{s_j} \qquad \text{and} \qquad \sigma_{\bs} = 
\prod_{i < j}\left( F^{(ij)} \right)^{s_i s_j} , 
\end{equation}
what implies that also the both equations do not change their form in that case.
\end{Cor}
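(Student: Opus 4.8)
The plan is to establish both assertions of the corollary by direct substitution: the first part is a purely formal rescaling of equations~\eqref{eq:Z-D-P-j-t} and \eqref{eq:Z-D-P-j-t2}, and the second is the observation that in the autonomous case the gauge factor in front of the sum collapses to~$1$. First I would insert $Z_{\bs,t}(x)=\mathcal{Z}_{\bs,t}(x)/\sigma_{\bs}$ and $D_{\bs,t}=\mathcal{D}_{\bs,t}/\sigma_{\bs}$ from \eqref{eq:Z-D-sigma} everywhere in \eqref{eq:Z-D-P-j-t}. The left-hand side $Z_{\bs,t+1}(x)D_{\bs,t}$ and the first term $Z_{\bs,t}(x)D_{\bs,t+1}$ on the right both acquire the common factor $\sigma_{\bs}^{-2}$, whereas the $i$-th summand $Z_{\bs-\be_i,t+1}(x)D_{\bs+\be_i,t}$ acquires the factor $(\sigma_{\bs-\be_i}\sigma_{\bs+\be_i})^{-1}$; multiplying the whole identity through by $\sigma_{\bs}^2$ then reproduces exactly the first displayed equation of the statement, with coefficient $\sigma_{\bs}^2/(\sigma_{\bs+\be_i}\sigma_{\bs-\be_i})$ in front of $\mathcal{Z}_{\bs-\be_i,t+1}(x)\mathcal{D}_{\bs+\be_i,t}$. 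The same bookkeeping applied to \eqref{eq:Z-D-P-j-t2}, where the spectral parameter $x$ multiplies the sum, yields the second displayed equation. It is worth recording here that the substitution leaves \eqref{eq:Z-D-j-t} untouched, since there every monomial involves two $\tau$-functions evaluated at arguments differing by a single $\be_j$, so $\sigma_{\bs}\sigma_{\bs+\be_j}$ is a common factor that cancels; this is precisely the statement already proved in the preceding Proposition.

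Next I would treat the reduction $F^{(ij)}_{\bs}=F^{(ij)}$. By \eqref{eq:F-rho} the ratios $\rho^{(i)}_{\bs+\be_j}/\rho^{(i)}_{\bs}=F^{(ij)}$ are constant for $j\neq i$, and since shifting in the $i$-th direction is unconstrained one may take $\rho^{(i)}_{\bs}$ independent of $s_i$, which forces $\rho^{(i)}_{\bs}=\prod_{j\neq i}(F^{(ij)})^{s_j}$ up to an irrelevant overall constant. Feeding this into \eqref{eq:rho-sigma} gives the recursion $\sigma_{\bs+\be_i}/\sigma_{\bs}=(-1)^{\sum_{j<i}s_j}\prod_{j\neq i}(F^{(ij)})^{s_j}$, and I would then check that $\sigma_{\bs}=\prod_{i<j}(F^{(ij)})^{s_is_j}$ solves it: the exponent of $F^{(ij)}$ in $\sigma_{\bs+\be_k}/\sigma_{\bs}$ is $s_i$ when $j=k$, is $s_j$ when $i=k$, and is $0$ otherwise, and the sign $(-1)^{\sum_{j<i}s_j}$ is absorbed upon rewriting $F^{(ji)}=-F^{(ij)}$ for the factors with $j<i$. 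Since the resulting expression for $\sigma_{\bs+\be_i}/\sigma_{\bs}$ does not depend on $s_i$, it coincides with $\sigma_{\bs}/\sigma_{\bs-\be_i}$, so $\sigma_{\bs}^2=\sigma_{\bs+\be_i}\sigma_{\bs-\be_i}$ and the gauge coefficient $\sigma_{\bs}^2/(\sigma_{\bs+\be_i}\sigma_{\bs-\be_i})$ in the two displayed equations is identically~$1$; hence in this case the equations keep their original form~\eqref{eq:Z-D-P-j-t}, \eqref{eq:Z-D-P-j-t2}.

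I expect the only genuine obstacle to be the sign bookkeeping in the second part: tracking the factor $(-1)^{\sum_{j<i}s_j}$ coming from \eqref{eq:rho-sigma} simultaneously with the antisymmetry $F^{(ij)}=-F^{(ji)}$ while verifying that the proposed closed form for $\sigma_{\bs}$ is consistent. (One does not need to re-prove existence of $\sigma_{\bs}$, as that is granted by \eqref{eq:rho-sigma}; it suffices to confirm the explicit product satisfies the recursion.) Everything else—the $\sigma$-rescaling of the first part and the telescoping of the exponents in the second—is routine, and the fact that $\mathcal{Z}_{\bs,t}$ and $\mathcal{D}_{\bs,t}$ still obey the unmodified relation \eqref{eq:Z-D-j-t} is already contained in the Proposition immediately above.
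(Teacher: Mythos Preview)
Your proposal is correct and is precisely the direct verification the paper leaves implicit: the Corollary in the paper carries no separate proof, and your substitution of \eqref{eq:Z-D-sigma} into \eqref{eq:Z-D-P-j-t}, \eqref{eq:Z-D-P-j-t2} followed by the explicit check that $\sigma_{\bs}^2=\sigma_{\bs+\be_i}\sigma_{\bs-\be_i}$ when $F^{(ij)}_{\bs}$ is constant is exactly what is intended. The sign bookkeeping you flag as the only delicate point is handled correctly via $F^{(ji)}=-F^{(ij)}$.
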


\begin{Prop} \label{prop:ZDF}
	The compatibility of equations \eqref{eq:Z-D-j-t} and \eqref{eq:Z-D-P-j-t2} implies, among others, the following non-autonomous generalization of the multidimensional discrete-time Toda constraint~\eqref{eq:MDTE-D}
\begin{equation} \label{eq:MDTE-D-F}
D_{\bs,t}^2 + \sum_{i=1}^r 	D_{\bs+\be_i, t-1} 	D_{\bs-\be_i, t+1} = F_{|\bs|+t} D_{\bs,t+1} 	D_{\bs,t-1 } ,
\end{equation}		
where $F_{|\bs|+t}$ is a function of the sum $s_1 + \dots + s_r + t$ of discrete space and time variables.
\end{Prop}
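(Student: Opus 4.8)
The plan is to eliminate the wave function $Z_{\bs,t}(x)$ from the two linear relations~\eqref{eq:Z-D-j-t} and~\eqref{eq:Z-D-P-j-t2}, which should leave a relation among the $\tau$-functions $D_{\bs,t}$ alone, after which the twist can be read off and its dependence on $|\bs|+t$ analysed. First I would observe that in~\eqref{eq:Z-D-P-j-t2} the spectral parameter occurs only through the terms $xZ_{\bs-\be_i,t+1}(x)$, and that~\eqref{eq:Z-D-j-t} written at the point $\bs-\be_i$ with $j=i$ gives
\begin{equation*}
xZ_{\bs-\be_i,t+1}(x)\,D_{\bs,t}=Z_{\bs,t}(x)\,D_{\bs-\be_i,t+1}+Z_{\bs-\be_i,t}(x)\,D_{\bs,t+1}.
\end{equation*}

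Substituting this into~\eqref{eq:Z-D-P-j-t2}, multiplying by $D_{\bs,t}$ and moving all multiples of $Z_{\bs,t}(x)$ to one side produces
\begin{equation*}
Z_{\bs,t}(x)\Bigl(D_{\bs,t}^2+\sum_{i=1}^r D_{\bs+\be_i,t-1}D_{\bs-\be_i,t+1}\Bigr)=D_{\bs,t+1}\Bigl(Z_{\bs,t-1}(x)D_{\bs,t}-\sum_{i=1}^r Z_{\bs-\be_i,t}(x)D_{\bs+\be_i,t-1}\Bigr).
\end{equation*}
Dividing by $D_{\bs,t+1}D_{\bs,t-1}$ and writing $F_{\bs,t}$ for the ratio $\bigl(D_{\bs,t}^2+\sum_i D_{\bs+\be_i,t-1}D_{\bs-\be_i,t+1}\bigr)/(D_{\bs,t+1}D_{\bs,t-1})$ — so that~\eqref{eq:MDTE-D-F} holds with this $F_{\bs,t}$ — turns the identity into the twisted relation $F_{\bs,t}D_{\bs,t-1}Z_{\bs,t}(x)=Z_{\bs,t-1}(x)D_{\bs,t}-\sum_i Z_{\bs-\be_i,t}(x)D_{\bs+\be_i,t-1}$, which is equation~\eqref{eq:Z-D-P-j-t} shifted one step back in time with its left-hand side corrected by the factor $F_{\bs,t}$. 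For a determinantal system that right-hand side equals $Z_{\bs,t}(x)D_{\bs,t-1}$ by~\eqref{eq:Z-D-P-j-t} itself, so $F_{\bs,t}\equiv1$ and one recovers~\eqref{eq:MDTE-D} (and, as in the Proposition relating~\eqref{eq:Z-D-P-j-t} and~\eqref{eq:Z-D-P-j-t2}, the system formed by~\eqref{eq:Z-D-j-t} and this twisted relation is equivalent to the one formed by~\eqref{eq:Z-D-j-t} and~\eqref{eq:Z-D-P-j-t2}).

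It then remains to prove the real content of the statement, namely that $F_{\bs,t}$ depends only on $|\bs|+t=s_1+\dots+s_r+t$. Here I would impose compatibility of the twisted relation with~\eqref{eq:Z-D-j-t}: transporting the twisted relation by $\be_j$ and using~\eqref{eq:Z-D-j-t} at times $t$ and $t-1$ — together with the generalized linear problem~\eqref{eq:Z-D-jk-F} to pass between single and mixed shifts of $\bs$ — one re-expresses all the wave functions that appear through $Z$ at a single common point; the spectral-parameter-dependent contributions must then cancel, and the scalar identities that survive should be $F_{\bs+\be_j,t}=F_{\bs+\be_k,t}$ for $j\neq k$ and $F_{\bs-\be_j,t+1}=F_{\bs,t}$. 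Since the elementary moves $\bs\mapsto\bs+\be_j-\be_k$ at fixed $t$ and $(\bs,t)\mapsto(\bs-\be_j,t+1)$ connect any two lattice points with the same value of $|\bs|+t$, these equalities force $F_{\bs,t}$ to descend to a function $F_{|\bs|+t}$ of that sum, which is the assertion. The laborious and genuinely delicate step is exactly this one — keeping track of the several shifted wave functions $Z_{\bs\pm\be_i,t}(x)$ and $Z_{\bs,t-1}(x)$ that occur and verifying that the $x$-dependent terms match identically, so that only the two scalar relations remain; a convenient shortcut for the bookkeeping is to first apply the gauge~\eqref{eq:Z-D-sigma}, which brings~\eqref{eq:Z-D-jk} and~\eqref{eq:D-jk-t-H} to their untwisted form without changing the ratio $F_{\bs,t}$, leaving only the residual coefficients $\sigma_{\bs}^2/(\sigma_{\bs+\be_i}\sigma_{\bs-\be_i})$ to be carried along.
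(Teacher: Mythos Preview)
Your first step---using \eqref{eq:Z-D-j-t} at $\bs-\be_i$ to eliminate the factor $xZ_{\bs-\be_i,t+1}(x)$ from \eqref{eq:Z-D-P-j-t2} and thereby recognising the combination
\[
F_{\bs,t}=\frac{D_{\bs,t}^2+\sum_i D_{\bs+\be_i,t-1}D_{\bs-\be_i,t+1}}{D_{\bs,t+1}D_{\bs,t-1}}
\]
as the twist in front of a shifted copy of \eqref{eq:Z-D-P-j-t}---is correct and is a cleaner way to \emph{define} the would-be non-autonomous factor than anything the paper does. You are also right that the gauge \eqref{eq:Z-D-sigma} leaves this $F_{\bs,t}$ unchanged while introducing only the residual factors $\sigma_{\bs}^2/(\sigma_{\bs+\be_i}\sigma_{\bs-\be_i})$.

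Where your proposal and the paper's argument diverge is in the second half. The paper does not work with the twisted \eqref{eq:Z-D-P-j-t} at all: it fixes an initial pair $Z_{\bs,t+1}(x),\,Z_{\bs-\be_j,t+1}(x)$, evolves them through a concrete chain of applications of \eqref{eq:Z-D-j-t} and \eqref{eq:Z-D-P-j-t2} down to $Z_{\bs,t-1}(x)$ and $Z_{\bs+\be_j,t-1}(x)$, and then imposes \eqref{eq:Z-D-j-t} at time $t-1$. Collecting the coefficient of $Z_{\bs,t+1}(x)$ yields in one stroke the single relation $F_{\bs+\be_j,t}=F_{\bs,t+1}$ (for every $j$), which is equivalent to your two relations and immediately forces the dependence on $|\bs|+t$. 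The same computation, reading off the coefficient of $Z_{\bs-\be_i,t+1}(x)$, also shows that the functions $F^{(ij)}_{\bs}$ of \eqref{eq:D-F-t} are constants---a byproduct your outline does not capture. Your route (shift the twisted relation by $\be_j$, feed in \eqref{eq:Z-D-j-t} and \eqref{eq:Z-D-jk-F}, and match coefficients) should lead to the same place, but you have only asserted the outcome of that bookkeeping rather than exhibiting it; the paper's choice of initial data and evolution chain is what makes the ``laborious step'' you anticipate actually tractable, because it reduces everything to linear combinations of just two independent wave-function values.
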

\begin{proof}
	The reasoning follows the basic idea applied in \cite{IDS} to verify compatibility of the standard discrete-time Toda equation. It is very similar to the proof of analogous result on Hermite--Pad\'{e} approximation~\cite{Doliwa-Siemaszko-2}. We start from the polynomials $Z_{\bs,t+1}(x)$, $Z_{\bs-\be_j,t+1}(x)$, $j=1, \dots , r$. Using suitably shifted equations \eqref{eq:Z-D-j-t} and \eqref{eq:Z-D-P-j-t2} we obtain polynomials $Z_{\bs, t -1}(x) $ and $Z_{\bs + \be_j, t-1}(x)$
\begin{equation*}
\begin{array}{cclcl}
Z_{\bs,t+1}(x), \; Z_{\bs-\be_j,t+1}(x)  & \Rightarrow &
Z_{\bs - \be_j, t+2}(x) & & \eqref{eq:Z-D-j-t}_{\bs \to \bs- \be_j,t\to t+1}\\
Z_{\bs,t+1}(x), \; Z_{\bs-\be_j,t+2}(x) & \Rightarrow & Z_{\bs, t}(x) & & \eqref{eq:Z-D-P-j-t2}_{t\to t+1} \\
Z_{\bs,t}(x), \; Z_{\bs,t+1}(x)  & \Rightarrow &
Z_{\bs+\be_j, t}(x) & \text{by} & \eqref{eq:Z-D-j-t}\\
Z_{\bs,t}(x), \; Z_{\bs-\be_j,t+1}(x) & \Rightarrow & Z_{\bs, t-1}(x) &  & \eqref{eq:Z-D-P-j-t2} \\
Z_{\bs,t+1}(x), \; Z_{\bs + \be_j,t}(x) \; Z_{\bs - \be_j,t + 1}(x) & \Rightarrow &
Z_{\bs+\be_j, t-1 } & & \eqref{eq:Z-D-P-j-t2}_{\bs \to \bs + \be_j},
\end{array}
\end{equation*} 
where at the last step we use the following consequence of equations~\eqref{eq:Z-D-P-j-t}
\begin{equation*}
Z_{\bs -\be_i + \be_j,t+1}(x) =
Z_{\bs ,t+1}(x) \frac{D_{\bs - \be_i + \be_j,t+1}}{D_{\bs, t+1}} - Z_{\bs -\be_i,t+1}(x)  \frac{D_{\bs - \be_i + \be_j,t+2} D_{\bs, t+1} -  D_{\bs, t+2} D_{\bs - \be_i + \be_j,t+1} }{D_{\bs, t+1}D_{\bs - \be_i,t+2}}.
\end{equation*}
Together with $Z_{\bs, t}(x) $  the polynomials should satisfy equation $\eqref{eq:Z-D-j-t}_{t\to t-1}$, what gives nonlinear equations for $D_{\bs,t}$. 

By collecting terms at  $Z_{\bs-\be_i, t+1}(x) $ in the final equation we obtain 
\begin{equation*}
\left(\frac{D_{\bs+ \be_j,t+1} 	D_{\bs+\be_i,t } - 	D_{\bs+\be_i, t+1} 	D_{\bs+\be_j, t}} {
	D_{\bs+\be_j + \be_i,t} 	D_{\bs,t+1}} \right)_{\bs - \be_i,t+1} =\left(\frac{D_{\bs+ \be_j,t+1} 	D_{\bs+\be_i,t } - 	D_{\bs+\be_i, t+1} 	D_{\bs+\be_j, t}} {
	D_{\bs+\be_j + \be_i,t} 	D_{\bs,t+1}} \right)_{\bs ,t-1} ,
\end{equation*}
what, together with \eqref{eq:D-F-t}, implies that the functions $F^{(ij)}_{\bs}$ are constants $F^{(ij)}$.
The terms at $Z_{\bs, t+1}(x) $  give
\begin{equation*}
\left(\frac{D_{\bs,t}^2 + \sum_{i=1}^r 	D_{\bs+\be_i, t-1} 	D_{\bs-\be_i, t+1}}{D_{\bs,t+1} 	D_{\bs,t-1 }  } \right)_{t \to t+1} =
\left(\frac{D_{\bs,t}^2 + \sum_{i=1}^r 	D_{\bs+\be_i, t-1} 	D_{\bs-\be_i, t+1}}{D_{\bs,t+1} 	D_{\bs,t-1 }  } \right)_{\bs \to \bs+\be_j},
\end{equation*}
for $j=1,\dots ,r$, what concludes the proof.
\end{proof}
\begin{Cor} \label{cor:F}
When the constants $F^{(ij)}$ do not vanish, then by Proposition~\ref{prop:ZDF} and Corollary~\ref{cor:F} without losing generality we can put $F^{(ij)}=1$ for $i<j$.
\end{Cor}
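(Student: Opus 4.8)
This corollary is essentially a bookkeeping statement about the gauge freedom already isolated above, so the plan is short. Proposition~\ref{prop:ZDF} has shown that the compatibility of \eqref{eq:Z-D-j-t} and \eqref{eq:Z-D-P-j-t2} forces the functions $F^{(ij)}_{\bs}$ to be constants $F^{(ij)}$, and the Corollary stated right after the Proposition on the gauge potential $\sigma_{\bs}$ records that in this autonomous case one may take $\sigma_{\bs}=\prod_{i<j}\bigl(F^{(ij)}\bigr)^{s_i s_j}$. The idea is simply to pass to the rescaled functions $\mathcal{Z}_{\bs,t}(x)=Z_{\bs,t}(x)\,\sigma_{\bs}$ and $\mathcal{D}_{\bs,t}=D_{\bs,t}\,\sigma_{\bs}$ of \eqref{eq:Z-D-sigma} and to verify that in the new variables all the off-diagonal coefficients collapse to $1$; here the reference to ``Corollary~\ref{cor:F}'' in the statement is to be read as pointing to that preceding Corollary, whose explicit formula for $\sigma_{\bs}$ makes the argument run.

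Concretely, I would substitute $D_{\bs,t}=\mathcal{D}_{\bs,t}/\sigma_{\bs}$ into the generalized relation~\eqref{eq:D-F-t}. Since every product occurring there carries one shift by $\be_i$ and one by $\be_j$ on the left, and a shift by $\be_i+\be_j$ together with no shift on the right, clearing the $\sigma$-factors yields an equation of exactly the same shape in which $F^{(ij)}_{\bs}$ is replaced by $F^{(ij)}_{\bs}\,\sigma_{\bs+\be_i}\sigma_{\bs+\be_j}/\bigl(\sigma_{\bs}\sigma_{\bs+\be_i+\be_j}\bigr)$. Writing $\log\sigma_{\bs}=\sum_{k<l}s_k s_l\log F^{(kl)}$, the factor $\sigma_{\bs+\be_i}\sigma_{\bs+\be_j}/\bigl(\sigma_{\bs}\sigma_{\bs+\be_i+\be_j}\bigr)$ is the exponential of minus the mixed second difference $\Delta_i\Delta_j\log\sigma_{\bs}$, and that second difference picks out precisely the single term $s_i s_j$, hence equals $\log F^{(ij)}$ for $i<j$. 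Therefore the new coefficient is $F^{(ij)}\cdot\bigl(F^{(ij)}\bigr)^{-1}=1$, i.e.\ $\mathcal{D}_{\bs,t}$ satisfies \eqref{eq:D-jk-t-H}, and by the Proposition on $\sigma_{\bs}$ the pair $\mathcal{Z}_{\bs,t},\mathcal{D}_{\bs,t}$ also satisfies the original forms \eqref{eq:Z-D-jk} and \eqref{eq:D-ijk-H}; the linear equations \eqref{eq:Z-D-j-t} are untouched since the $\sigma$-factors cancel termwise there. As $\sigma_{\bs}\neq 0$, the correspondence $Z,D\mapsto\mathcal{Z},\mathcal{D}$ is invertible, so no information is lost and one may assume $F^{(ij)}=1$ for $i<j$ from the outset.

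I expect no genuine obstacle: the real work — proving that the $F^{(ij)}_{\bs}$ are constant — is already contained in Proposition~\ref{prop:ZDF}, and what remains is the routine gauge computation above. The only point deserving a line of comment is that this particular rescaling touches neither the diagonal factor $F_{|\bs|+t}$ of the non-autonomous Toda constraint~\eqref{eq:MDTE-D-F}, nor conversely is the normalization $F^{(ij)}=1$ destroyed by a later diagonal gauge used to absorb $F_{|\bs|+t}$; thus the two reductions in Section~\ref{sec:MOP-I} are independent and the normalization asserted here is consistent with the rest of the discussion.
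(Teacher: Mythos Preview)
Your proposal is correct and follows exactly the route the paper intends: the paper gives no explicit proof of this corollary, treating it as immediate from Proposition~\ref{prop:ZDF} (constancy of the $F^{(ij)}$) together with the preceding Proposition and Corollary on the gauge $\sigma_{\bs}$, which already state that the rescaling brings \eqref{eq:D-F-t} back to \eqref{eq:D-jk-t-H} and leaves the remaining equations in their original form when the $F^{(ij)}$ are constants. You have correctly read the self-reference ``Corollary~\ref{cor:F}'' as a typo for that earlier Corollary, and your explicit second-difference computation of $\sigma_{\bs+\be_i}\sigma_{\bs+\be_j}/(\sigma_{\bs}\sigma_{\bs+\be_i+\be_j})$ simply spells out what the paper leaves implicit.
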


\begin{Com}
	Analogous compatibility condition between equations~\eqref{eq:Z-D-j-t} and \eqref{eq:Z-D-P-j-t} gives also that the functions $F^{(ij)}_{\bs}$ are constants $F^{(ij)}$. Instead of equation \eqref{eq:MDTE-D-F} we have then just $\tau$-function form of \eqref{eq:A+B-r} with the fields $A^{(j)}_{\bs,t}$ and $B^{(j)}_{\bs,t}$  replaced by their definitions \eqref{eq:A-Q-r} and \eqref{eq:B-Q-r}.
\end{Com}
\begin{Rem}
As it was presented in \cite{AptekarevDerevyaginMikiVanAssche} equation~\eqref{eq:A+B-r} can be derived from~\eqref{eq:MDTE-D} by considering the ratio of two such equations with $D_{\bs + \be_j , t+1}$ and $D_{\bs, t+1}$ on their left hand sides. Because of that ratio the reasoning cannot be reversed. 
\end{Rem}
\section{Final remarks}

We presented fundamentals of the general theory of multiple orthogonal polynomials basing on determinantal identities. In looking for unified and symmetric description of the resulting formulas we discovered new equations satisfied by the polynomials. The simple discrete-time evolution of the measures allows to include the multiple orthogonal polynomials into the integrable systems theory, as a reduction of Hirota's discrete Kadomtsev--Petviashvili system in direct analogy with  the Paszkowski constraint of the Hermite--Pad\'{e} approximation theory. 

We are convinced that the multiple orthogonal polynomials will play an increasingly important role in the coming years and expect their range of applications to exceed that of standard orthogonal polynomials, which is already significant.   We believe that many of applications of multiple orthogonal polynomials are still waiting to be discovered and exploited.  Our belief is based on their connection with the theory of integrable systems and the enormous influence that this theory has had on many areas of theoretical physics and applied mathematics over the last fifty years.


\end{document}